\documentclass[11pt]{article}
\usepackage[utf8]{inputenc}
\usepackage{amsmath,amsfonts,amssymb}
\usepackage{amsthm}
\usepackage{latexsym}
\usepackage[noadjust]{cite}
\usepackage{graphicx}
\usepackage{xcolor}
\usepackage{mathtools}
\usepackage[margin=1in]{geometry}
\usepackage{thm-restate}
\usepackage{enumitem}
\usepackage[ruled,vlined,linesnumbered]{algorithm2e}
\usepackage{tikz}
\usepackage{subcaption}
\usepackage[framemethod=TikZ]{mdframed}

\definecolor{ForestGreen}{rgb}{0.1333,0.5451,0.1333}
\definecolor{DarkRed}{rgb}{0.8,0,0}
\definecolor{Red}{rgb}{1,0,0}
\usepackage[linktocpage=true,
pagebackref=true,colorlinks,
linkcolor=DarkRed,citecolor=ForestGreen,
bookmarks,bookmarksopen,bookmarksnumbered]
{hyperref}

\usepackage[capitalise]{cleveref}

\newtheorem{theorem}{Theorem}[section]

\newtheorem{lemma}[theorem]{Lemma}
\newtheorem{observation}[theorem]{Observation}

\newtheorem{claim}[theorem]{Claim}
\theoremstyle{definition}

\newtheorem{remark}[theorem]{Remark}

\newtheorem{assumption}[theorem]{Assumption}
\newtheorem{invariant}[theorem]{Invariant}

\newcommand{\shortcut}{\mathtt{Shortcut}}
\newcommand{\scin}{\mathtt{ShortcutIn}}
\newcommand{\scout}{\mathtt{ShortcutOut}}
\newcommand{\ein}{\mathtt{AddEdgeIn}}
\newcommand{\eout}{\mathtt{AddEdgeOut}}
\newcommand{\findpathin}{\mathtt{FindPathIn}}
\newcommand{\findpathout}{\mathtt{FindPathOut}}
\newcommand{\estimate}{\mathtt{Estimate}}

\newcommand{\Nin}{N^{\mathrm{in}}}
\newcommand{\Nout}{N^{\mathrm{out}}}

\newcommand{\Vin}{V^{\mathrm{in}}}
\newcommand{\Vout}{V^{\mathrm{out}}}
\newcommand{\Fin}{F^{\mathrm{in}}}
\newcommand{\Fout}{F^{\mathrm{out}}}
\newcommand{\Uin}{U^{\mathrm{in}}}
\newcommand{\Uout}{U^{\mathrm{out}}}
\newcommand{\Ain}{A^{\mathrm{in}}}
\newcommand{\Aout}{A^{\mathrm{out}}}
\newcommand{\Bin}{B^{\mathrm{in}}}
\newcommand{\Bout}{B^{\mathrm{out}}}
\newcommand{\Tin}{T^{\mathrm{in}}}
\newcommand{\Tout}{T^{\mathrm{out}}}
\newcommand{\Din}{D^{\mathrm{in}}}
\newcommand{\Dout}{D^{\mathrm{out}}}
\newcommand{\Sin}{S^{\mathrm{in}}}
\newcommand{\Sout}{S^{\mathrm{out}}}

\newcommand{\pin}{p^{\mathrm{in}}}
\newcommand{\pout}{p^{\mathrm{out}}}
\newcommand{\pn}{p^{\mathrm{N}}}
\newcommand{\win}{w^{\mathrm{in}}}
\newcommand{\wout}{w^{\mathrm{out}}}

\newenvironment{ourbox}{\begin{mdframed}[hidealllines=false,innerleftmargin=10pt,backgroundcolor=white!10,innertopmargin=5pt,innerbottommargin=5pt,roundcorner=10pt]}{\end{mdframed}}

\usetikzlibrary{arrows.meta,calc,decorations.markings}
\tikzset{
  midarrow/.style={
    postaction={decorate},
    decoration={
      markings,
      mark=at position 0.5 with {\arrow{Stealth}}
    }
  },
  dot/.style={circle,fill=black,inner sep=0.8pt},
  bpath/.style={black,thin,-{Stealth[length=2mm]}},
  rseg/.style={red,thick},
  rthin/.style={red,thin,-{Stealth[length=2mm]}},
  rcurve/.style={violet,thick,-{Stealth[length=3mm]}},
  dashedlvl/.style={blue!80,dashed,thin},
}
\newcommand{\segmentdots}[4]{
  \foreach \i in {1,...,#3}{
    \node[#4] at ($ (#1)!{\i/(#3+1)}!(#2) $) {};
  }
}

\title{An $n^{2+o(1)}$ Time Algorithm for Single-Source Negative Weight Shortest Paths}
\author{Sanjeev Khanna\thanks{Courant Institute, Warren Weaver Hall, New York University, 251 Mercer Street, New York, NY 10012. Email: {\tt sanjeev.khanna@nyu.edu}.} \and Junkai Song\thanks{Courant Institute, Warren Weaver Hall, New York University, 251 Mercer Street, New York, NY 10012. Email: {\tt junkaisong@nyu.edu}. }}
\date{\today}

\begin{document}

\maketitle

\begin{abstract}
We present a randomized algorithm for the single-source shortest path (SSSP) problem on directed graphs with arbitrary real-valued edge weights that runs in $n^{2+o(1)}$ time with high probability. This result yields the first almost linear-time algorithm for the problem on dense graphs ($m = \Theta(n^2)$) and improves upon the best previously known bounds for moderately dense graphs ($m = \omega(n^{1.306})$).

Our approach builds on the ``hop-reduction via shortcutting'' framework introduced by Li, Li, Rao, and Zhang (2025), which iteratively augments the graph with shortcut edges to reduce the negative hop count of shortest paths. The central computational bottleneck in prior work is the cost of explicitly constructing these shortcuts in dense regions. We overcome this by introducing a new compression technique using auxiliary Steiner vertices. Specifically, we construct these vertices to represent large neighborhoods compactly in a structured manner, allowing us to efficiently generate and propagate shortcuts while strictly controlling the growth of vertex degrees and graph size.
\end{abstract}

\section{Introduction}
We study the \emph{single-source shortest paths} (SSSP) problem in directed graphs with real-valued edge weights. The input is a directed graph $G=(V,E)$ with edge weights $w:E\to \mathbb{R}$ and a source vertex $s\in V$. The goal is to compute the distances $d_G(s,v)$ for all $v\in V$, or report the existence of a negative cycle.

Shortest path computation is among the most fundamental problems in graph algorithms. When all edge weights are non-negative, Dijkstra's algorithm solves SSSP in
$O(m+n\log n)$ time~\cite{Dij59,FT87}. A sequence of recent breakthroughs has further improved the running time to $O(m\sqrt{\log n\log \log n})$ for undirected graphs, due to Duan, Mao, Shu, and Yin~\cite{DMSY23}, and for directed graphs, due to Duan {\em et al.}~\cite{DMM+25,DMSY26}. For integer (possibly negative) weights, near-linear running times have been achieved by Bernstein, Nanongkai, and Wulff-Nilsen~\cite{BNW22}, which was further improved by Bringmann, Cassis, and Fischer~\cite{BCF23}, and by Li, Mowry, and Rao~\cite{LMR25}. At the same time, when the weights are polynomially bounded integers, almost-linear-time algorithms for the minimum-cost flow problem, a generalization of single-source shortest paths, were given by Chen {\em et al.}~\cite{CKL+25} and by van den Brand {\em et al.}~\cite{vdBCK+24}.

In contrast, for \emph{real-valued} weights with negative edges, the classic Bellman--Ford dynamic programming algorithm~\cite{Shi55,For56,Bel58,Moo59}, which runs in $O(mn)$ time, remained the fastest algorithm for decades. This barrier was broken by Fineman~\cite{Fin24}, who gave a randomized $\widetilde{O}(mn^{8/9})$ time algorithm. Subsequent work by Huang, Jin, and Quanrud improved the running time to $\widetilde{O}(mn^{4/5})$~\cite{HJQ25b}, and later to $\widetilde{O}(mn^{3/4}+m^{4/5}n)$~\cite{HJQ25a}, further advancing Fineman’s framework.

Very recently, Li, Li, Rao, and Zhang~\cite{LLRZ25} introduced a new shortcutting framework, which augments the graph with shortcut edges that reduce the number of negative edges appearing along shortest paths. Using this approach, they obtained a $\widetilde{O}(n^{2.5})$ time algorithm for real-weighted SSSP, improving prior bounds in the dense regime. In parallel, Quanrud and Tajkhorshid~\cite{QT25} developed a randomized sparsification technique that further advances Fineman’s framework. When combined with the recursive betweenness reduction of~\cite{LLRZ25}, their method achieves $O(mn^{0.69562})$ time for $m \ge n^{1.0274}$ and $O((mn)^{0.85})$ time for $m\leq n^{1.0274}$.

\paragraph{Our result.}
Our main result is an almost-quadratic-time algorithm for real-weighted single-source shortest paths. Building on the shortcutting framework of~\cite{LLRZ25}, we introduce a new shortcutting approach that uses auxiliary Steiner vertices. These auxiliary vertices compactly represent large vertex neighborhoods, allowing shortcut edges to be generated and propagated efficiently. Our algorithm maintains several structural invariants involving these Steiner vertices that tightly control the growth in the number of vertices and edges in the graph.

\begin{theorem}
\label{thm:main}
There is a randomized algorithm that computes single-source shortest paths in real-weighted graphs in $n^{2+o(1)}$ time with high probability.
\end{theorem}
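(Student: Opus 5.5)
The plan is to follow the hop-reduction framework of \cite{LLRZ25}. By standard reductions we may assume $G$ has $n$ vertices, every vertex has at most one negative outgoing edge, and the negative edges are precisely the set $N$ of vertices whose distinguished out-edge is negative. It then suffices to build an auxiliary graph $H\supseteq G$ that preserves all $s$-distances, has only $n^{2+o(1)}$ edges, and in which every shortest $s$-path is realized by a path using at most $n^{o(1)}$ negative edges. Given such an $H$, we run the hop-bounded Dijkstra/Bellman--Ford hybrid (alternating a Dijkstra pass with a relaxation of all negative edges) for $n^{o(1)}$ rounds. This costs $\widetilde{O}(|E(H)|\cdot n^{o(1)})=n^{2+o(1)}$. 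Negative cycles are detected in the standard way: distances keep decreasing after the hop bound, or a potential-based check fails.

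To build $H$, we proceed in $O(\log n)$ phases. Each phase uses the betweenness-reduction / random sampling of \cite{LLRZ25} to pick a set of ``remote'' negative vertices. It then adds shortcut edges $u\to v$ with weight $d(u,v)$ for pairs whose shortest path has few negative hops, so that after shortcutting the negative hop count of every relevant shortest path drops by a constant factor. In \cite{LLRZ25} the bottleneck is that each sampled negative vertex $x$ must be connected to its in-neighborhood $\Nin(x)$ and out-neighborhood $\Nout(x)$ (vertices within $h$ negative hops). Writing all pairs $\Nin(x)\times\Nout(x)$ explicitly costs up to $n^{2.5}$.

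Here Steiner vertices replace this. For each such $x$ we create auxiliary vertices $\Vin(x),\Vout(x)$ and add edges $u\to \Vin(x)$ of weight $d(u,x)$ for $u\in\Nin(x)$, and $\Vout(x)\to v$ of weight $d(x,v)$ for $v\in\Nout(x)$. A path through $u\to \Vin(x)\to\Vout(x)\to v$ then has weight $d(u,x)+d(x,v)\ge d(u,v)$, so distances are preserved (lower bounds hold by construction). It uses at most one negative edge, and the number of added edges is $|\Nin(x)|+|\Nout(x)|$ instead of their product. The key steps, in order, are:
\begin{enumerate}
\item Define the neighborhoods with a negative-hop radius.
\item Prove the hop-reduction lemma: every shortest path with $k$ negative hops can be rerouted through the Steiner vertices using $O(k/2)$ or $k^{1-\Omega(1)}$ negative hops.
\item Compute the needed distances $d(u,x)$ and $d(x,v)$ via hop-bounded searches in the current graph $H$.
\item Iterate for $n^{o(1)}$ levels.
\end{enumerate}

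The main obstacle is controlling growth across phases. Later phases run searches in a graph that already contains Steiner vertices and shortcut edges. Steiner vertices may have huge degree, and shortcuts into or out of them could compound, potentially multiplying the edge count by a polynomial factor per level. The first attempt is to maintain invariants stating that each original vertex has total degree $n^{1+o(1)}$, that the number of Steiner vertices is $n^{1+o(1)}$, and that new shortcuts are only ever added between original vertices and Steiner vertices, never Steiner--Steiner chains. To keep search costs down, Steiner vertices should be traversed only through one structured hop, so that each phase's hop-bounded searches cost $n^{2+o(1)}$. If an invariant blows up, the fallback is to re-sample with a fresh random choice, charging the expected $n^{o(1)}$ overhead per level. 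This gives total time $n^{2+o(1)}$ with high probability.
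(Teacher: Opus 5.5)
Your outline has a genuine gap at its core: the cost of your Step~3. Computing $d(u,x)$ and $d(x,v)$ by hop-bounded searches from every negative vertex $x$ means up to $\eta=\Theta(n)$ searches, each costing $\Omega(h\,|E(H)|)$. That is $\Theta(n^3)$ on dense graphs, which is exactly the barrier; the paper notes that $h$-hop searches from every $r\in N$ cost $O(\eta h m)$.

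You have also misdiagnosed the bottleneck of \cite{LLRZ25}: nobody writes out the product of the two balls. There, the balls are cut at a threshold $\Delta_r$, so that $\Uin_r,\Uout_r$ are small in aggregate (this is what betweenness reduction buys, cf.\ \cref{lem:compute-delta}) and computable in time quadratic in their size (\cref{lem:compute-U}). The expensive part is that a shortest path enters the ball from a vertex $\bar u$ \emph{outside} it. Shortcut edges must therefore attach at threshold-crossing edges $(y,x)$ with $x\in\Uin_r$, $y\notin\Uin_r$. Finding these means scanning all in-neighbors of every ball vertex, $\Theta(|\Uin_r|\,n)$ per $r$ and $\Theta(n^{2.5})$ overall. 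A hub fed only from ball vertices does not reroute such paths, and a hub fed from all vertices needs the all-sources distances above.

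Your hub edges of weight $d(u,x)$ and $d(x,v)$ also fail on hop count. They are typically negative, so the rerouted path can carry two negative edges. Worse, they create negative edges at vertices outside $N$, which destroys well-behavedness and the per-negative-vertex charging that bounds the balls in later phases. The paper's hub $\tilde r_t$ is joined to $r$ by edges of weight $\pm\Delta_r$. Every shortcut edge into or out of it has non-negative weight, of the form $w(y,x)+d^0(x,r)-\Delta_r$ (resp.\ $d^1(r,x)+w(x,z)+\Delta_r$), so the bypass of $(r,\bar r)$ is $0$-hop. This covers only the case $\bar u\notin\Uin_r$, $v\notin\Uout_r$. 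The other two cases of \cref{lem:shortcut-main} use the merging shortcuts $(u,\bar r)$ and $(r,\bar v)$, which together give the $h-\lfloor h/3\rfloor$ bound your Step~2 merely asserts. The Steiner \emph{gadgets} (in-neighbors of a heavy vertex sorted by weight, $O(\log k)$ prefix vertices, binary search for the crossing index) are what cut the work per processed vertex from $\Theta(n)$ to $\tilde O(|\Vin_r|+\lambda)$.

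Your treatment of growth and recursion does not give a proof either. The growth in Steiner vertices is deterministic, not a rare event: each phase adds $\Theta(\log n)$ gadget vertices per gadgeted vertex, compounding over $\Theta(\log n)$ phases. So ``re-sampling when an invariant blows up'' repairs nothing. The paper controls growth as follows:
\begin{itemize}
\item gadgets are built only for heavy vertices;
\item each new Steiner vertex is classified light or heavy at degree threshold $\lambda=n/\sqrt b$;
\item a light in-Steiner vertex keeps out-degree $\tilde O(\lambda)$ forever (symmetrically for light out-Steiner vertices), because shortcuts that would touch it are shifted to its parent. This uses the $\pin(\cdot)$ updates in $\scin$, the recursive $\ein/\eout$, and the deferred sets $\Fin,\Fout$.
\end{itemize}
This relies on the invariants of \cref{lem:in-heavy-edge,lem:in-light-edge,lem:out-heavy-edge,lem:out-light-edge}, after which \cref{clm:heavy-increase} bounds new heavy vertices by $O(\eta\log n)$ per phase.

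Finally, betweenness reduction (\cref{lem:betweenness-reduction}) does not pick ``remote'' vertices. It returns a potential via a recursive SSSP call on a graph with $O(hn)$ vertices and $O(b\log n)$ negative vertices. The $n^{o(1)}$ factor comes from setting $b=\eta/\gamma$ with $\gamma=\Theta(\log n\cdot 2^{\sqrt{\log n}})$. Each phase then costs $\tilde O(\eta n^2/b+\lambda\eta n/\sqrt b)=\tilde O(n^2\gamma)$, and the recursion has depth $O(\sqrt{\log n})$ with polylogarithmic blow-up per level. Your plan never accounts for this recursion, so the $n^{2+o(1)}$ bound is not derived.
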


In the dense regime, \cref{thm:main} establishes an almost-optimal running time, matching the $\Omega(n^2)$ input size barrier up to subpolynomial factors. Moreover, our algorithm improves upon the best known bounds of~\cite{LLRZ25,QT25} for moderately dense graphs, specifically when $m=\omega(n^{1.306})$.

\paragraph{Organization.}
In \cref{sec:prelim}, we introduce notation and basic definitions.
\cref{sec:tool} presents key algorithmic ingredients used by our approach.
Finally, \cref{sec:alg} presents our main result, namely, the $n^{2+o(1)}$-time algorithm for SSSP.
We begin with an overview of the ideas underlying the algorithm, and then develop our shortcutting procedure and the maintenance of the structural invariants in detail.
We conclude this section with the correctness proof and running time analysis.

\paragraph{Note.} We very recently learned that a similar result has been independently obtained by George Z. Li, Jason Li, and Junkai Zhang. Their work appears on arXiv simultaneously with ours.

\section{Preliminaries}
\label{sec:prelim}

\paragraph{Notations.} All logarithms in this paper are to base 2. Consider a directed graph $G=(V,E)$. For every vertex $v\in V$, the vertices in the set $\Nin(v)=\{u\in V:(u,v)\in E\}$ are called the \emph{in-neighbors} of $v$, and the vertices in the set $\Nout(v)=\{u\in V:(v,u)\in E\}$ are called the \emph{out-neighbors} of $v$. For any pair of vertices $u,v\in V$, let $d(u,v)$ denote the shortest path distance from $u$ to $v$. Given a path $P$ in $G$ and vertices $u,v$ lying on $P$ with $v$ appearing after $u$, we denote by $P[u,v]$ the subpath of $P$ from $u$ to $v$.

\paragraph{Hop distance.} The number of \emph{hops} in a path is the number of negative edges it traverses. For vertices $u,v\in V$, the \emph{$h$-hop distance} $d^h(u,v)$ is defined as the minimum length of any path from $u$ to $v$ that uses at most $h$ hops. Given a source vertex $s\in V$, one can compute $d^h(s,v)$ for every $v\in V$ in time $O(h(m+n\log n))$ using a hybrid of Dijkstra and Bellman-Ford.

\paragraph{Vertex potential.} For a vertex potential function $\varphi:V\to\mathbb R$, we define $w_{\varphi}(u,v)=w(u,v)+\varphi(u)-\varphi(v)$ for every edge $(u,v)\in E$. We say $\varphi$ is \emph{valid} if $w_\varphi(e)\geq 0$ for every edge $e\in E$ with $w(e)\geq 0$. Given a graph $G$ and a vertex potential function $\varphi$, let $G_{\varphi}$ denote the graph with the same vertex and edge sets as $G$, but with edge weights given by $w_{\varphi}$.

\paragraph{Well-behaved graph.} We call a weighted directed graph $G$ \emph{well-behaved} if for every negative edge $(r,\bar r)$ in $G$, $r$ has out-degree 1 and no incoming negative edges, and $\bar r$ has in-degree 1 and no outgoing negative edges. Moreover, there is an edge $(\bar r,r)$ of weight $-w(r,\bar r)$.

The following reduction given in~\cite{Fin24} converts an arbitrary directed graph $G=(V,E)$ to a well-behaved graph $G'$. For each vertex $v\in V$, we define
\[
w^{\min}_{v}=\min\{0,\min\{w(v,u):u\in V,(v,u)\in E\}\}.
\]
We construct $G'$ by making a copy $\bar v$ for every vertex $v\in V$, connected by an edge $(v,\bar v)$ of weight $w^{\min}_{v}$ and an edge $(\bar v,v)$ of weight $-w^{\min}_v$. Each original edge $(u,v)\in E$ of weight $w(u,v)$ is then replaced by an edge $(\bar u,v)$ with weight $w(u,v)-w^{\min}_u$. Observe that the only negative edges in $G'$ are the edges $(v,\bar v)$ for vertices $v\in V$ with $w^{\min}_v<0$. It is possible that for some vertices $v\in V$, we have $w(v,\bar v)=w^{\min}_v=0$. Nevertheless, $(v,\bar v)$ is still considered as a negative edge and counted as one hop in shortest path computations. It is immediate that $G'$ is well-behaved.

In the graph $G'$, we define $N=V$ as the set of \emph{negative vertices} (i.e. vertices with outgoing negative edge), and define $\bar N=\{\bar v:v\in V\}$. 
If $G$ has $n$ vertices, then $G'$ has $2n$ vertices. It is straightforward to verify that $d_G(u,v)=d_{G'}{(u,v)}$ for every $u,v\in V$. Therefore, we apply this transformation to the input graph $G$, and hereafter assume that $G$ is well-behaved.

\section{Useful Algorithmic Tools} \label{sec:tool}

In a well-behaved graph $G=(V,E)$, for any negative vertex $r\in N$, we say $r$ is $h$-hop in between vertices $(u,v)$ for some $u,v\in V$ if $d^h(u,r)+d^h(r,v)<0$. The \emph{$h$-hop betweenness} of a pair $(u,v)\in V\times V$ is defined as the number of negative vertices that are $h$-hop in between $(u,v)$.

\begin{lemma}[Lemma 11 of \cite{LLRZ25}] \label{lem:betweenness-reduction}
Let $G$ be a well-behaved graph with $n$ vertices and $\eta$ negative vertices. For any parameter $b\geq 1$ and positive integer $h$, there is a randomized algorithm that returns either a negative cycle, or valid vertex potential $\varphi$, such that with high probability, all pairs $(u,v)\in V\times V$ have $h$-hop betweenness at most $\eta/b$ in graph $G_{\varphi}$. The algorithm makes one call to negative-weight single-source shortest paths on a graph with $O(hn)$ vertices and $O(b\log n)$ negative vertices.
\end{lemma}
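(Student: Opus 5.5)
We would prove \cref{lem:betweenness-reduction} by sampling and then computing a single potential. Sample a random set $S\subseteq N$ of negative vertices, each included independently with probability $\Theta(b\log n/\eta)$. Then $|S|=O(b\log n)$ with high probability.

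For each pair $(u,v)$ whose $h$-hop betweenness exceeds $\eta/b$, some sampled $r\in S$ is $h$-hop in between $(u,v)$ with high probability; a union bound over the $n^2$ pairs covers them all. The plan is to choose $\varphi$ so that every pair having a sampled in-between vertex acquires low betweenness. Concretely, we aim for a potential under which, for each $r\in S$, the quantities $d^h_\varphi(\cdot,r)$ and $d^h_\varphi(r,\cdot)$ are nonnegative. Then "$r'$ is in between $(u,v)$" combined with "$r$ is in between $(u,v)$" should force a contradiction, or at least few such $r'$.

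Key steps, in order:
\begin{enumerate}
\item \emph{Build the auxiliary graph $H$.} Take $h+1$ layered copies $V_0,\dots,V_h$ of $V$. Nonnegative edges stay inside a layer, and each negative edge goes from layer $i$ to layer $i+1$; this gives $O(hn)$ vertices.
\item \emph{Keep few negative vertices.} Only the negative edges at sampled vertices $r\in S$ retain their negative weight, so $H$ has $O(b\log n)$ negative vertices. The other negative edges are handled inside the layering as "hop-counting" edges whose weight is shifted by the known $h$-hop distances to and from the samples.
\item \emph{Compute the potential.} Add a super-source $s^*$ with $0$-weight edges to all vertices. Compute the distances $\delta(x)=d_H(s^*,x)$ with one SSSP call, and set $\varphi(x)=\delta$ of a suitable copy of $x$.
\end{enumerate}
A negative cycle in $H$ projects to a negative cycle in $G$, which we report. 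Validity of $\varphi$ follows from the triangle inequality on the nonnegative intra-layer edges.

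The hard part is the betweenness bound. Take a pair $(u,v)$ and a sampled $r$ with $d^h(u,r)+d^h(r,v)<0$. We must show that after reweighting, any $r'$ still in between $(u,v)$ belongs to a set of size at most $\eta/b$. Here we would use the standard argument from the paper's source [LLRZ25]: the potential is a combination of $\min_{r\in S}$ of $h$-hop distances $d^h(\cdot,r)$ and $d^h(r,\cdot)$, and for the pair $(u,v)$ this makes $w_\varphi$-distances through $r$ nonnegative. The argument then proceeds by contrapositive: if $(u,v)$ still had more than $\eta/b$ in-between vertices in $G_\varphi$, some sampled vertex would be among them, contradicting that $d^h_\varphi(u,r)+d^h_\varphi(r,v)\ge 0$ for all $r\in S$.

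So the real target is to construct $\varphi$ with $d^h_\varphi(u,r)+d^h_\varphi(r,v)\ge 0$ for every $r\in S$ and all $u,v$, and that is what the single SSSP call on $H$ has to deliver. Since $h$-hop betweenness is a property of $\varphi$-reweighted $h$-hop distances, each sample in the new graph must be consulted simultaneously. That explains why the layered graph is needed: a sum over a fixed $S$ is realized as a shortest path in $H$ that passes through $r$'s negative edge. This step is where I expect the main difficulty.
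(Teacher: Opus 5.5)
\paragraph{Genuine gap: the single-call construction is missing.}
The paper gives no proof of this lemma; it imports it from \cite{LLRZ25}, so I am judging your proposal on its own. You have the right target: a valid $\varphi$ with $d^h_\varphi(u,r)\ge 0$ and $d^h_\varphi(r,v)\ge 0$ for all $u,v$ and every sampled $r$. But the step you flag as the main difficulty is essentially the whole lemma, and the graph you sketch does not deliver it.

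Nothing in your $H$ lets a path start a fresh $h$-hop segment at the value $\varphi(u)$ of an arbitrary $u$ and end at a sampled $r$. Symmetrically, nothing lets a path start at $r$ and end at an arbitrary $v$. A single stack $V_0,\dots,V_h$ with a super-source and $\varphi$ read off copies of $x$ fails in each natural variant:
\begin{itemize}
\item If $\varphi(x)$ is the best copy of $x$ and sampled vertices reset to the bottom copy, you get $\varphi(v)\le\varphi(r)+d^h(r,v)$. You do not get $\varphi(r)\le\varphi(u)+d^h(u,r)$, because $\varphi(u)$ may be attained only at a high layer, after hops were already spent.
\item Reading the bottom copy has the mirror-image defect.
\item Resetting at every vertex yields full distances, i.e.\ the original problem with all $\eta$ negative vertices.
\end{itemize}

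What works is an asymmetric design with three parts:
\begin{itemize}
\item a base copy of $V$ carrying only the nonnegative edges of $G$, on which $\varphi$ is read;
\item an in-gadget of $h+1$ layers, entered by $0$-weight edges from every base vertex and exited only at copies of sampled vertices;
\item an out-gadget, entered only from base copies of sampled vertices and exited at every base vertex.
\end{itemize}
Each gadget traversal is a walk in $G$ with at most $h$ negative edges. Hence $\delta$ on the base copy is valid and satisfies both families of inequalities, and a negative cycle of $H$ is a negative closed walk of $G$.

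\paragraph{The reduction to $O(b\log n)$ negative vertices.}
Shifting weights by ``the known $h$-hop distances to and from the samples'' presupposes $|S|$ separate hop-bounded computations, costing $\Theta(|S|\,h\,(m+n\log n))$. That is Fineman's original reduction, and it makes the layering superfluous. Here it would cost $n^{3-o(1)}$ at the top level, whereas the paper charges Step~\ref{step:betweenness-reduction} only $\tilde O(n^2)$ outside the recursive call.

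Instead, reweight each gadget by a single $S$-independent hop-bounded Dijkstra--Bellman--Ford run:
\begin{itemize}
\item in the in-gadget, use distances from a virtual source joined to all its vertices (values $\le 0$, so entry edges stay nonnegative);
\item in the out-gadget, use negated distances to a virtual sink joined from all its vertices (values $\ge 0$, so exit edges stay nonnegative).
\end{itemize}
All gadget edges become nonnegative, and every remaining negative edge is incident to a base copy of a sampled vertex. One further potential shift at each such base vertex makes them all leave it, giving $O(b\log n)$ negative vertices.

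\paragraph{The probabilistic step lacks its key observation.}
Hops are counted on the frozen negative edges. Writing $f_{uv}(x)=d^h(u,x)+d^h(x,v)$, this gives $d^h_\varphi(u,x)+d^h_\varphi(x,v)=f_{uv}(x)+\varphi(u)-\varphi(v)$. So the between set of $(u,v)$ in $G_\varphi$ is a prefix of the $S$-independent ordering of $N$ by $f_{uv}$, and your inequalities say no sampled vertex lies in it. The union bound must therefore be over the events ``the first $\lceil\eta/b\rceil$ vertices in this order contain a sample.''

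Arguing via betweenness in $G$, as in your first formulation, is not enough: betweenness can increase under $\varphi$ when $\varphi(v)>\varphi(u)$. Your contrapositive also cannot be union-bounded directly, because the set in question depends on $S$ through $\varphi$.
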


The following lemma follows from a careful implementation of the Dijkstra’s algorithm.

\begin{lemma}[Lemma 8 of \cite{LLRZ25}] \label{lem:compute-U}
Let $G$ be a well-behaved graph with $n$ vertices. Suppose for every negative vertex $r\in N$, we are given a value $\Delta_r$. Then there is an algorithm that, for every negative vertex $r\in N$, computes two sets
\[
\Uin_r=\{v\in V:d^0(v,r)<\Delta_r\}, \quad \Uout_r=\{v\in V:d^1(r,v)<-\Delta_r\}.
\]
The algorithm also returns values of $d^0(v,r)$ for all $v\in \Uin_r$ and $d^1(r,v)$ for all $v\in \Uout_r$. The running time is $O((|\Uin_r|+|\Uout_r|)^2+(|\Uin_r|+|\Uout_r|)\log n)$.
\end{lemma}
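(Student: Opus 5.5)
We compute, for each negative vertex $r$ separately, the two sets by a truncated Dijkstra-type search whose work is charged to the vertices it settles. The cost per vertex settled should be $O(|\Uin_r|+|\Uout_r|)$ plus a heap operation. The cost must not depend on the full degrees of the vertices involved. The key tool is that, after $O(m\log n)$ preprocessing, every adjacency list is sorted by weight. Then a search only scans edges of a vertex until their weight exceeds the current threshold.

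\textbf{Computing $\Uin_r$.} Here $d^0(v,r)$ is a distance in the graph of nonnegative edges. We run Dijkstra from $r$ on the reversed nonnegative-edge graph and stop as soon as the minimum tentative key is at least $\Delta_r$. For each settled vertex $x$, we scan its incoming nonnegative edges in increasing weight order. We stop scanning at the first edge $(y,x)$ with $d^0(x,r)+w(y,x)\ge\Delta_r$, since every later edge in the sorted order also fails.

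Each scanned edge other than the last one per settled vertex inserts or decreases the key of a vertex $y$ with tentative distance below $\Delta_r$. Such a $y$ lies in $\Uin_r$. Hence the number of useful scans from $x$ is at most $|\Uin_r|$, counting parallel edges once. Summing over at most $|\Uin_r|$ settled vertices gives $O(|\Uin_r|^2)$ edge scans. With a Fibonacci heap (or lazy insertion), the heap cost is $O(|\Uin_r|\log n)$.

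One subtlety: a vertex $x$ may have many edges to the same $y$, or to vertices already settled, all of weight below the threshold. To charge every scan against $|\Uin_r|^2$, I would treat each scanned edge as a pair $(x,y)$ with both endpoints in $\Uin_r$. Every edge scanned before the cutoff satisfies $d^0(y,r)\le d^0(x,r)+w(y,x)<\Delta_r$, so $y\in\Uin_r$. In a simple graph the number of such pairs is at most $|\Uin_r|^2$, so the bound holds.

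\textbf{Computing $\Uout_r$.} Here $d^1(r,v)$ allows one negative edge. In the well-behaved graph, $r$ has a unique out-edge $(r,\bar r)$, which is negative. Moreover, $\bar r$ has in-degree one and no outgoing negative edges. Hence every path from $r$ with at most one hop is the edge $(r,\bar r)$ followed by a nonnegative path, possibly with more negative edges forbidden. Therefore $d^1(r,v)=w(r,\bar r)+d^0(\bar r,v)$.

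One caveat: the path could use a different negative edge later instead, but it must start with $(r,\bar r)$, which already uses the one hop. So the formula holds. The condition $d^1(r,v)<-\Delta_r$ becomes $d^0(\bar r,v)<-\Delta_r-w(r,\bar r)$. This is the same kind of truncated Dijkstra as before, now forward from $\bar r$ with sorted out-lists, and by the same argument it costs $O(|\Uout_r|^2+|\Uout_r|\log n)$.

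\textbf{Main obstacle.} The delicate step is the charging argument that bounds the edge scans by the square of the output size rather than by degrees. It relies on sorted adjacency lists and on stopping each scan at the first edge that fails the threshold. The one-time sorting cost is shared across all $r$; I would assume it is absorbed, or already given by the graph representation. I expect the intended statement counts only the per-$r$ work, and I would state the preprocessing explicitly.
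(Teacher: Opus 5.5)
Your proposal is correct and is essentially the intended argument. The paper does not prove this lemma itself; it cites it from \cite{LLRZ25} as following from a ``careful implementation of Dijkstra's algorithm.'' That is exactly your threshold-truncated Dijkstra over weight-sorted adjacency lists with a Fibonacci heap, where edge scans are charged to pairs in $\Uin_r\times\Uin_r$, combined with the reduction $d^1(r,v)=w(r,\bar r)+d^0(\bar r,v)$ that comes from the unique out-edge of $r$.

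Two pieces of bookkeeping need to be added. First, the hop-free search must exclude the designated edges $(r',\bar r')$ by type rather than by sign, since they may have weight $0$ or become non-negative after reweighting. Second, like the one-time $O(m\log n)$ sort you already flag, the distance labels should be reset lazily, so that no $O(n)$ initialization is charged to each $r$.
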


We also establish the following lemma to compute suitable values of $\Delta_r$ for each $r \in N$, ensuring that the sets $\Vin_r$ and $\Vout_r$ (defined in the lemma) have bounded size. In contrast to the $0$-hop and $1$-hop distances considered in \cite{LLRZ25}, we extend the result to arbitrary $h$-hop distances. A key challenge is that even if $\Vin_r$ and $\Vout_r$ is small, explicitly computing $h$-hop shortest paths from every $r\in N$ would require $O(\eta \cdot hm)$ time. Instead, we use a randomized sampling approach to estimate the thresholds such that the balls around each $r \in N$ fall into $\log n$ different scales, and then carefully select an appropriate radius.

\begin{lemma} \label{lem:compute-delta}
Let $G$ be a well-behaved graph with $n$ vertices and $\eta$ negative vertices. For every negative vertex $r\in N$ and a given value $\Delta_r$, define
\[
\Vin_r=\{v\in V:d^h(v,r)<\Delta_r\} ,\quad \Vout_r=\{v\in V:d^h(r,v)<-\Delta_r\}.
\]
After applying the betweenness reduction of \cref{lem:betweenness-reduction} with parameters $b$ and $h$, there is an algorithm that computes $\Delta_r$ for every negative vertex $r\in N$ such that
\[
\sum_{r\in N}\left(|\Vin_r|+|\Vout_r|\right)^2 = O\left( \frac{\eta n^2}{b}\right), \quad \sum_{r\in N}\left(|\Vin_r|+|\Vout_r|\right)\le O\left(\frac{\eta n}{\sqrt{b}}\right)
\]
holds. The running time of the algorithm is $\tilde O(hn^2)$, independent of parameter $b$.
\end{lemma}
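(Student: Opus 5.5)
We choose each $\Delta_r$ so that $\Vin_r$ and $\Vout_r$ are balanced, and use betweenness reduction to show that balanced sets are small on average. The key observation is that if $v\in\Vin_r$ and $x\in\Vout_r$, then $d^h(v,r)+d^h(r,x)<\Delta_r-\Delta_r=0$. So $r$ is $h$-hop in between $(v,x)$, and every pair in $\Vin_r\times\Vout_r$ charges $r$. After \cref{lem:betweenness-reduction}, every pair has betweenness at most $\eta/b$. Hence
\[
\sum_{r\in N}|\Vin_r|\cdot|\Vout_r|\le n^2\eta/b .
\]
If $\Delta_r$ is chosen so that $|\Vin_r|$ and $|\Vout_r|$ are within a constant factor of each other, this bound gives $\sum_r(|\Vin_r|+|\Vout_r|)^2=O(\eta n^2/b)$. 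Cauchy--Schwarz then gives $\sum_r(|\Vin_r|+|\Vout_r|)\le\sqrt{\eta\cdot O(\eta n^2/b)}=O(\eta n/\sqrt b)$. As $\Delta_r$ increases, $\Vin_r$ grows and $\Vout_r$ shrinks, so a balancing threshold exists in principle. The difficulty is computing it without running $h$-hop SSSP from every $r$, which would cost $\eta\cdot hm$.

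To compute it, we use sampling at $O(\log n)$ scales. For each $i=0,\dots,\log n$, sample a set $S_i$ that contains each vertex with probability $\min(1,c\log n/2^i)$. For each sampled $x$, compute $d^h(x,\cdot)$ and $d^h(\cdot,x)$ (on the reversed graph) in $O(h(m+n\log n))=\tilde O(hn)$ time each. There are $\sum_i|S_i|=\tilde O(n/2^i)$ sources per scale, each costing $\tilde O(hn)$ work, for $\tilde O(hn^2)$ in total (and in fact $\tilde O(hn^2)$ already at scale $0$). Scale $0$ samples every vertex, so the whole rows $d^h(\cdot,r)$ and $d^h(r,\cdot)$ are available. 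The budget allows all-sources $h$-hop distances since $m\le n^2$... but then the total would be $n\cdot hn^2$, too much. So scale $0$ cannot be used that way. Instead, sampling is used for rank estimation. For each $r$ and scale $i$, the sample $S_i$ gives the $(c\log n)$-th smallest value of $d^h(x,r)$, $x\in S_i$. With high probability this approximates, within a factor 2 in rank, the threshold $\tau^{\mathrm{in}}_i(r)$ at which $|\Vin_r|\approx 2^i$. The same holds for the threshold $\tau^{\mathrm{out}}_i(r)$ at which $|\Vout_r|\approx 2^i$, using $-d^h(r,x)$. The cost is $\tilde O(hn\cdot n/2^i)$ for scale $i$. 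For small $i$ this exceeds the budget, so the sampling rate must be capped: the approach only handles $i\ge i_0$, and sets of size below $2^{i_0}$ are treated as "small". Choosing the rate $\log n/2^i$ makes the total $\sum_i \tilde O(n^2h/2^i)$, which is $\tilde O(hn^2)$ even including $i=0$ (at $i=0$ the $n$ sources cost $n\cdot hn=hn^2$). So all scales fit, and the earlier worry was mistaken.

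Given these estimates, for each $r$ we find the largest scale $i$ with $\tau^{\mathrm{in}}_i(r)\le\tau^{\mathrm{out}}_i(r)$ in the appropriate sense and set $\Delta_r$ between the two thresholds. Then $|\Vin_r|\le 2^{i+2}$ and $|\Vout_r|\le 2^{i+2}$. By maximality of $i$, one of the two sets at scale $i+1$ would cross the other, which forces both $|\Vin_r|,|\Vout_r|\ge 2^{i-2}$. So $(|\Vin_r|+|\Vout_r|)^2=O(|\Vin_r||\Vout_r|)$, as required. We also need a fallback when one set stays empty, which makes the product bound useless. I expect this boundary case to be the main obstacle: if $\Vout_r$ is empty even at $\Delta_r=-\infty$, then choosing $\Delta_r$ very negative makes $\Vin_r$ empty too, giving contribution $0$. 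Getting the monotonicity, the rank approximations and the high-probability union bound over $\eta\log n$ events consistent is the delicate bookkeeping.
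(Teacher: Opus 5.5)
Your charging observation (every pair in $\Vin_r\times\Vout_r$ is in between, so $\sum_r|\Vin_r||\Vout_r|\le\eta n^2/b$) and the idea of picking $\Delta_r$ at the scale where the in- and out-radii cross match the paper. Two steps fail as written.

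\textbf{Running time.} In this setting $m$ can be $\Theta(n^2)$, so one $h$-hop single-source computation costs $O(h(m+n\log n))=\tilde O(hn^2)$, not $\tilde O(hn)$. Running one from every vertex of $S_i$ therefore costs $\tilde O(hn^3/2^i)$, which is $\tilde O(hn^3)$ at the small scales. Your first worry was right and the ``correction'' is not. The missing idea is that you need only the \emph{nearest} sampled vertex, not the $(c\log n)$-th one. The values $\hat d(S,r)=\min_{u\in S}\hat d(u,r)$ and $\hat d(r,S)$ come out for all $r$ at once from a single multi-source $h$-hop computation in $O(h(m+n\log n))$ time. With sampling rate $p$, the nearest sample has rank in $[1/(8p),4/p]$ with constant probability. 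Taking the median over $O(\log n)$ independent samples makes this hold with high probability for every $r$. In total $O(\log^2 n)$ multi-source runs suffice.

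\textbf{Balancing.} The claim that $\Delta_r$ can be chosen with $|\Vin_r|$ and $|\Vout_r|$ within a constant factor is false. Many vertices can share the same $h$-hop distance to or from $r$, so the set sizes jump as $\Delta_r$ varies. Take the following profile, with all other distances far away: $k=\sqrt n$ vertices have $d^h(\cdot,r)=-1$ and $n/2$ have $d^h(\cdot,r)=0$; $n/2$ vertices have $d^h(r,\cdot)=0$ and one has $d^h(r,\cdot)=-10$. Then every $\Delta_r$ gives $(|\Vin_r|,|\Vout_r|)$ equal to $(0,\ge n/2)$, $(k,\ge n/2)$, $(k,1)$ or $(\ge n/2,\le 1)$. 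So $(|\Vin_r|+|\Vout_r|)^2=O(|\Vin_r||\Vout_r|)$ fails for every choice, by at least a $\sqrt n$ factor.

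Your step ``both $|\Vin_r|,|\Vout_r|\ge 2^{i-2}$'' is exactly what breaks, since the rank thresholds can coincide at a tie value. Here $r$ is still between about $kn/2$ pairs, but those pairs lie outside $\Vin_r\times\Vout_r$, so your accounting cannot use them. The paper's fix has two parts:
\begin{itemize}
\item It breaks ties lexicographically (extended distances), so balls of size $\Theta(2^\ell)$ exist at every scale.
\item It charges balanced supersets $\Tin_r\supseteq\Vin_r$ and $\Tout_r\supseteq\Vout_r$ rather than $\Vin_r\times\Vout_r$. These remain pairwise in between thanks to the slack between strict and non-strict inequalities; for example, $\{u:d^h(u,r)<\Delta_r\}\times\{v:d^h(r,v)\le-\Delta_r\}$ is still in between.
\end{itemize}
Constructing such supersets is the real content of the case analysis. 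The empty-set boundary case you flagged is not the obstacle; indeed $r\in\Vout_r$ whenever $\Delta_r<0$.
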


\subsection{Proof of \cref{lem:compute-delta}}
For any pair of vertex $u,v\in V$, we define their \emph{extended distance} $\hat d(u,v)$ as the triple $(d^h(u,v),u,v)$ where the first entry is the $h$-hop distance, and the second and third entries are the index of the source and sink vertex respectively which serve as a tie-breaker. The extended distances are compared lexicographically. For an extended distance triple $D$, we use $d(D)$ to denote its first entry. For a subset of vertices $S\subseteq V$, we define $\hat d(S,v)=\min_{u\in S} \hat d(u,v)$ and $\hat d(v,S)=\min_{u\in S} \hat d(v,u)$. Given any $S\subseteq V$, both $\hat d(S,v)$ and $\hat d(v,S)$ for every $v\in V$ can be computed in $O(h(m+n\log n))$ time using a hybrid of Dijkstra and Bellman-Ford.

\begin{algorithm}
\caption{$\estimate(p)$}
\label{alg:estimate}
\For{$O(\log n)$ iterations}{
Sample a random set $S\subseteq V$ by including each vertex $v\in V$ independently with probability $p$.\\
Compute $\hat d(S,v)$ and $\hat d(v,S)$ for every vertex $v\in V$.\\
}
For every negative vertex $r\in N$, define $\Din_r$ (resp. $\Dout_r$) as the median of $\hat d(S,r)$ (resp. $\hat d(r,S)$) computed over the $O(\log n)$ experiments.
\end{algorithm}

\paragraph{An estimation algorithm.} Consider the estimation algorithm given in \cref{alg:estimate}. For any negative vertex $r\in N$ and extended distance $D$, define
\[
\Bin_r(D)=\{v\in V:\hat d(v,r)\leq D\}, \quad \Bout_r(D)=\{v\in V:\hat d(r,v)\leq D\}.
\]
We prove the following claim.

\begin{claim} \label{clm:estimate}
For any negative vertex $r\in N$, let $\Din_r$ and $\Dout_r$ be the values obtained from $\estimate(p)$ for some $4/n\leq p\leq 1/2$. Then
\[
\frac{1}{8p}\leq \left|\Bin_r(\Din_r)\right| \leq \frac{4}{p}, \quad \frac{1}{8p}\leq \left|\Bout_r(\Dout_r)\right| \leq \frac{4}{p}
\]
with high probability.
\end{claim}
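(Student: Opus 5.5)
We handle $\Din_r$; the argument for $\Dout_r$ is symmetric (replace $\hat d(S,r)$ by $\hat d(r,S)$ and $\Bin_r$ by $\Bout_r$). The plan is to show that a single experiment lands in the good range with probability bounded away from $1/2$, and then to boost this via the median over the $O(\log n)$ independent repetitions. The key structural fact is that extended distances are distinct: $\hat d(v,r)=(d^h(v,r),v,r)$ differ for distinct $v$ because of the tie-breaking index $v$. Hence the vertices can be totally ordered by $\hat d(\cdot,r)$ as $v_1,v_2,\dots,v_n$, and $\Bin_r(\hat d(v_k,r))=\{v_1,\dots,v_k\}$ has size exactly $k$. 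One small point must be checked. The algorithm computes $\hat d(S,r)=\min_{u\in S}\hat d(u,r)$, which is a triple whose source entry is the minimizer $u$. So $\hat d(S,r)=\hat d(v_k,r)$ where $k$ is the smallest index with $v_k\in S$, and $|\Bin_r(\hat d(S,r))|=k$. If $S=\emptyset$ the value is $+\infty$; this is absorbed into the upper-tail bad event below.

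For a single experiment, let $k$ be the rank of the first sampled vertex. For the upper tail, set $k_1=\lfloor 4/p\rfloor\le n$ (this uses $p\ge 4/n$). Then
\[
\Pr[k>4/p]=(1-p)^{k_1}\le e^{-p k_1}\le e^{-4+p}\le e^{-3.5}<1/8,
\]
using $p\le 1/2$. For the lower tail, a union bound over the first $\lfloor 1/(8p)\rfloor$ vertices gives
\[
\Pr[k<1/(8p)]\le p\cdot\frac{1}{8p}=\frac18 .
\]
So each experiment falls in $[1/(8p),4/p]$ with probability at least $3/4$.

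For the median step, the map $D\mapsto|\Bin_r(D)|$ is monotone in $D$, so the median of the $\hat d(S,r)$ values corresponds to the median of the ranks $k$. If the median rank falls outside $[1/(8p),4/p]$, then at least half of the experiments had rank below $1/(8p)$, or at least half had rank above $4/p$. Each of these events has per-trial probability at most $1/8$, so by a Chernoff bound over $c\log n$ independent trials, each occurs with probability $n^{-\Omega(c)}$. A union bound over all $r\in N$ and both directions then yields the claim with high probability.

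I expect the main subtlety to be bookkeeping rather than probability. Specifically, one must verify that the min over $S$ of extended distances returns the triple of the actual nearest sampled vertex, so that $|\Bin_r(\cdot)|$ at that value equals its rank exactly. One must also make sure that the rounding in $\lfloor 4/p\rfloor$ and $\lfloor 1/(8p)\rfloor$ respects the constraints $4/n\le p\le 1/2$. Both should be routine once the strict total order is set up.
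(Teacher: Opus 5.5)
Your proposal is correct and follows essentially the same route as the paper: you order vertices by the tie-broken extended distance, show that the rank of the first sampled vertex lands in $[1/(8p),4/p]$ with constant probability in a single experiment, then boost via the median of $O(\log n)$ repetitions with a Chernoff bound and a union bound over $r\in N$. The differences are cosmetic: you bound the lower tail by a union bound rather than via $(1-p)^{L_p}\ge e^{-1/4}$, and you handle the median step by controlling the two bad tails separately, which is slightly more careful than the paper's ``at least half the iterations are good'' phrasing.
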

\begin{proof}
Let
\[
L_p=\left\lfloor\frac{1}{8p}\right\rfloor, R_p=\left\lfloor\frac{4}{p}\right\rfloor \geq \frac{2}{p}.
\]
Reorder the vertices in $V$ so that $\hat d(v_1,r)<\hat d(v_2,r)<\dots<\hat d(v_n,r)$. We say a random subset $S\subseteq V$ is \emph{good} if
\[
\frac{1}{8p}\leq \left|\Bin_r(\hat d(S,r))\right| \leq \frac{4}{p}.
\]
We first show that a single random experiment produces a good set $S$ with constant probability.

The probability that none of $v_1,\dots,v_{L_p}$ is included in the set $S$ is
\[
(1-p)^{L_p} \geq \exp(-2pL_p) \geq \exp(-1/4) \geq 0.77,
\]
where we use the inequality $1-x \ge e^{-2x}$ for $x \in [0,\tfrac12]$. On the other hand, the probability that none of $v_1,\dots,v_{R_p}$ is included in $S$ is
\[
(1-p)^{R_p} \leq \exp(-pR_p) \leq \exp(-2)\leq0.15.
\]
Therefore, with probability at least $0.77-0.15\geq 0.62$, we have $\hat d(S,r)=\min_{u\in S} \hat d(u,r)=\hat d(v_i,r)$ for some $L_p <i\leq R_p$. This implies that $S$ is a good set.

We repeat this experiment $O(\log n)$ times and take the median of the resulting values of $\hat d(S,r)$.
By a standard Chernoff bound, with high probability at least half of the iterations are good.
Hence the median value $\Din_r$ is also good.
Therefore, with high probability, we have
\[
\frac{1}{8p}\leq \left|\Bin_r(\Din_r)\right| \leq \frac{4}{p}.
\]

By an analogous argument for outgoing distances, with high probability,

\[
\frac{1}{8p}\leq \left|\Bout_r(\Dout_r)\right| \leq \frac{4}{p}
\]
Taking a union bound over all negative vertices completes the proof.
\end{proof}

\paragraph{Obtain extended distances.} Let $L=\lfloor \log n\rfloor-2$. For every $\ell \in [L]$, we invoke $\estimate(2^{-\ell})$ and obtain $(\Din_r)_\ell,(\Dout_r)_{\ell}$ for every negative vertex $r\in N$. 

Fix a negative vertex $r\in N$. For notational convenience, we omit the subscript $r$ and write $\Din_1,\dots,\Din_L$ and $\Dout_1,\dots,\Dout_L$ for the corresponding values. By \cref{clm:estimate}, we see that with high probability, for every $\ell \in [L]$,
\begin{equation} \label{eq:size}
2^{\ell-3}\leq \left|\Bin_r(\Din_\ell)\right| \leq 2^{\ell+2}, \quad 2^{\ell-3}\leq \left|\Bout_r(\Dout_\ell)\right| \leq 2^{\ell+2}
\end{equation}
holds. We condition on this event from here onwards.

We first note that for every $\ell\in [L]$, we can set $\Din_{\ell}=\max_{i\leq \ell} \Din_i$ and $\Dout_{\ell}=\max_{i\leq \ell} \Dout_i$ and the properties in \cref{eq:size} still hold. Indeed, suppose $\Din_i>\Din_j$ for some $i<j$, then we have
\[
2^{j-3}\leq \left|\Bin_r(\Din_j)\right|\leq \left|\Bin_r(\Din_i)\right|\leq 2^{i+2}\leq 2^{j+2}.
\]
Hence, replacing $\Din_j$ by $\Din_i$ preserves the bound in \cref{eq:size}. The argument for the outgoing case is analogous. \emph{We can therefore assume both $\{\Din_\ell\}_{\ell\in [L]}$ and $\{\Dout_\ell\}_{\ell\in [L]}$ are non-decreasing.}

\paragraph{Determine $\Delta_r,\Tin_r,\Tout_r$.}
We now determine the value of $\Delta_r$. We will choose $\Delta_r$ using the discretized radii $\{\Din_\ell\},\{\Dout_\ell\}$.
As a first step, we will define auxiliary sets $\Tin_r,\Tout_r$ that satisfy the following properties:

\begin{enumerate}[label=P\arabic*.]
    \item $\Vin_r\subseteq \Tin_r$ and $\Vout_r\subseteq \Tout_r$.
    \item For every $u\in \Tin_r$ and $v\in \Tout_r$, $d^h(u,r)+d^h(r,v)<0$.
    \item $|\Tin_r|=\Theta(|\Tout_r|)$.
\end{enumerate}

These properties will enable a charging argument using betweenness reduction.
We first show a useful claim.
\begin{claim} \label{clm:contain}
For any $\ell\in [L]$, we have:
\begin{itemize}
    \item If $\Delta_r\leq d(\Din_\ell)$, then $\Vin_r\subseteq \Bin_r(\Din_\ell)$.
    \item If $\Delta_r>d(\Din_\ell)$, then $\Vin_r\supseteq \Bin_r(\Din_\ell)$.
    \item If $-\Delta_r\leq d(\Dout_\ell)$, then $\Vout_r\subseteq \Bout_r(\Dout_\ell)$.
    \item If $-\Delta_r> d(\Dout_\ell)$, then $\Vout_r\supseteq \Bout_r(\Dout_\ell)$.
\end{itemize}
\end{claim}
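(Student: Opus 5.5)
The plan is to unfold the definitions and compare real distances against the first entry of the extended-distance threshold, using the lexicographic order on triples. Recall that $\Vin_r=\{v:d^h(v,r)<\Delta_r\}$ and $\Bin_r(\Din_\ell)=\{v:\hat d(v,r)\le \Din_\ell\}$, where $\hat d(v,r)=(d^h(v,r),v,r)$. The one fact to extract from the lexicographic order is the following. For any vertex $v$ and triple $D$:
\begin{itemize}
\item if $d^h(v,r)<d(D)$, then $\hat d(v,r)<D$;
\item if $d^h(v,r)>d(D)$, then $\hat d(v,r)>D$.
\end{itemize}
Equivalently, $\hat d(v,r)\le D$ implies $d^h(v,r)\le d(D)$, and $\hat d(v,r)>D$ implies $d^h(v,r)\ge d(D)$. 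Both directions hold because the first coordinate dominates the comparison.

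For the first bullet, assume $\Delta_r\le d(\Din_\ell)$ and take $v\in\Vin_r$. Then $d^h(v,r)<\Delta_r\le d(\Din_\ell)$. The first entry of $\hat d(v,r)$ is strictly smaller than that of $\Din_\ell$, so $\hat d(v,r)<\Din_\ell$ and hence $v\in\Bin_r(\Din_\ell)$.

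For the second bullet, assume $\Delta_r>d(\Din_\ell)$ and take $v\in\Bin_r(\Din_\ell)$. Then $\hat d(v,r)\le\Din_\ell$, so $d^h(v,r)\le d(\Din_\ell)<\Delta_r$, which gives $v\in\Vin_r$.

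The out-direction bullets follow by the identical argument with $\hat d(r,v)$, $\Dout_\ell$ and the threshold $-\Delta_r$ in place of $\hat d(v,r)$, $\Din_\ell$ and $\Delta_r$. Concretely, if $-\Delta_r\le d(\Dout_\ell)$, then $d^h(r,v)<-\Delta_r$ forces $\hat d(r,v)<\Dout_\ell$. If $-\Delta_r>d(\Dout_\ell)$, then $\hat d(r,v)\le \Dout_\ell$ gives $d^h(r,v)\le d(\Dout_\ell)<-\Delta_r$.

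The only delicate point is the mismatch between strict inequalities on reals and non-strict inequalities on triples. Ties in the first coordinate are exactly why the case split puts "$\le$" in the first bullet and "$>$" in the second. I will check that each implication uses only a strict comparison on the first coordinate, or the monotone implication $\hat d\le D\Rightarrow d(\hat d)\le d(D)$. Nothing probabilistic is needed, and the conditioning on \cref{eq:size} plays no role in this claim.
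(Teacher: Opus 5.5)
Your proof is correct and follows the paper's argument. You unfold the definitions of $\Vin_r$ and $\Bin_r(\Din_\ell)$, use that a strict inequality in the first coordinate decides the lexicographic comparison, and use that $\hat d(v,r)\le \Din_\ell$ forces $d^h(v,r)\le d(\Din_\ell)$; the only difference is that you write out the out-direction items, which the paper dismisses as analogous.
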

\begin{proof}
We show the first two items; the proofs of the third and fourth items are analogous.
If $\Delta_r\leq d(\Din_\ell)$, then for every $v\in \Vin_r$, we have $d^h(v,r)<\Delta_r\leq d(\Din_\ell)$. Thus, $\hat d(v,r)<\Din_\ell$ and $v$ must be in $\Bin_r(\Din_\ell)$. On the other hand, if $\Delta_r>d(\Din_\ell)$, then for every $v\in \Bin_r(\Din_\ell)$, we have $d^h(v,r)\leq d(\Din_\ell)<\Delta_r$. Thus, $v$ must be in $\Vin_r$.
\end{proof}

Let $\ell^\star$ be the smallest index such that $d(\Din_{\ell^\star})+d(\Dout_{\ell^\star})\ge 0$. 
We note that the choice of the index $\ell^\star$ is made to control the
sizes of the sets $\Tin_r$ and $\Tout_r$ via the betweenness-reduction guarantee
of \cref{lem:betweenness-reduction}, as captured by Property~P2.
In particular, the definition of $\ell^\star$ ensures that these sets can be chosen
to be sufficiently large to contain $\Vin_r$ and $\Vout_r$, yet can be bounded in size using \cref{lem:betweenness-reduction}. 
We now consider four possible cases that may arise:
\begin{itemize}
    \item No such an index exists; so $d(\Din_L)+d(\Dout_L)<0$. Let $\Delta_r=d(\Din_L),\Tin_r=\Bin_r(\Din_L),\Tout_r=\Vout_r$. Then by \cref{clm:contain}, we have $\Vin_r\subseteq \Bin_r(\Din_L)=\Tin_r$ and $\Tout_r=\Vout_r\supseteq \Bout_r(\Dout_L)$ since $-\Delta_r=-d(\Din_L)>d(\Dout_L)$. It is clear that Property P1 holds. For P2, we have $d^h(u,r)\leq d(\Din_L)$ for every $v\in\Tin_r$ and $d^h(r,v)<-\Delta_r=-d(\Din_L)$ for every $v\in \Tout_r$. Therefore, $d^h(u,r)+d^h(r,v)<0$ always holds. For P3, since $|\Tin_r|=|\Bin_r(\Din_L)|=\Theta(2^L)=\Theta(n)$, and $|\Tout_r|\geq |\Bout_r(\Dout_L)|=\Omega(2^L)=\Omega(n)$, we see that $|\Tin_r|=\Theta(|\Tout_r|)$.

    \item $\ell^\star=1$: Let $\Delta_r=d(\Din_1)$, $\Tin_r=\Vin_r,\Tout_r=\Vout_r$. 
    Properties P1 and P2 hold by definition of $\Vin_r$ and $\Vout_r$.
    For P3, by \cref{eq:size} we have $|\Bin_r(\Din_1)|=O(1)$ and $|\Bout_r(\Dout_1)|=O(1)$. By \cref{clm:contain}, $\Vin_r\subseteq \Bin_r(\Din_1)$ and $\Vout_r\subseteq \Bout_r(\Dout_1)$ since $-\Delta_r=-d(\Din_1)\leq d(\Dout_1)$. Hence $|\Tin_r|=\Theta(|\Tout_r|)=O(1)$.

    \item   $\ell^\star>1$ and $d(\Din_{\ell^\star-1})<d(\Din_{\ell^\star})$:
    \begin{itemize}
        \item If $-d(\Dout_{\ell^\star-1})\leq d(\Din_{\ell^\star})$: Let $\Delta_r=-d(\Dout_{\ell^\star-1})$, $\Tin_r=\Vin_r$, and $\Tout_r=\Bout_r(\Dout_{\ell^\star-1})$. Then by \cref{clm:contain}, we have $\Vout_r\subseteq \Bout_r(\Dout_{\ell^\star-1})=\Tout_r$. On the other hand, by the minimality of $\ell^\star$ we have $d(\Din_{\ell^\star-1})+d(\Dout_{\ell^\star-1})<0$ and hence $d(\Din_{\ell^\star-1})< -d(\Dout_{\ell^\star-1})=\Delta_r\leq d(\Din_{\ell^\star})$, so \cref{clm:contain} implies that $\Bin_r(\Din_{\ell^\star-1})\subseteq \Vin_r\subseteq \Bin_r(\Din_{\ell^\star})$.
        Property P1 is immediate.
        For P2, we have $d^h(u,r)<\Delta_r=-d(\Dout_{\ell^\star-1})$ for every $u\in\Tin_r$ and $d^h(r,v)\le d(\Dout_{\ell^\star-1})$ for every $v\in\Tout_r$, hence $d^h(u,r)+d^h(r,v)<0$.
        For P3, $|\Tout_r|=|\Bout_r(\Dout_{\ell^\star-1})|=\Theta(2^{\ell^\star})$ by \cref{eq:size}, and the sandwich above gives $|\Tin_r|=\Theta(2^{\ell^\star})$ as well, so $|\Tin_r|=\Theta(|\Tout_r|)$.
        \item If $-d(\Dout_{\ell^\star-1})> d(\Din_{\ell^\star})$: Let $\Delta_r=d(\Din_{\ell^\star})$, $\Tin_r=\Vin_r$, and $\Tout_r=\Vout_r$. Since $d(\Din_{\ell^\star-1})<\Delta_r=d(\Din_{\ell^\star})$, \cref{clm:contain} implies that $\Bin_r(\Din_{\ell^\star-1})\subseteq \Vin_r\subseteq \Bin_r(\Din_{\ell^\star})$. On the other hand, since $d(\Dout_{\ell^\star-1})<-d(\Din_{\ell^\star})=-\Delta_r$ and $d(\Dout_{\ell^\star})\geq -d(\Din_{\ell^\star})=-\Delta_r$, \cref{clm:contain} implies that $\Bout_r(\Dout_{\ell^\star-1})\subseteq \Vout_r\subseteq \Bout_r(\Dout_{\ell^\star})$. Properties P1 and P2 are immediate. For P3, we have both $|\Tin_r|=\Theta(2^{\ell^\star})$ and $|\Tout_r|=\Theta(2^{\ell^\star})$ by the sandwiches above.
    \end{itemize}

    \item $\ell^\star>1$ and $d(\Dout_{\ell^\star-1})<d(\Dout_{\ell^\star})$:
    \begin{itemize}
        \item If $d(\Din_{\ell^\star-1})\geq -d(\Dout_{\ell^\star})$: Let $\Delta_r=d(\Din_{\ell^\star-1}),\Tin_r=\Bin_r(\Din_{\ell^\star-1}), \Tout_r=\Vout_r$. Then by \cref{clm:contain}, we have $\Vin_r\subseteq \Bin_r(\Din_{\ell^\star-1})=\Tin_r$. On the other hand, by the minimality of $\ell^\star$, we have $d(\Din_{\ell^\star-1})+d(\Dout_{\ell^\star-1})<0$, which implies
        $-d(\Dout_{\ell^\star-1})>d(\Din_{\ell^\star-1})=\Delta_r\geq -d(\Dout_{\ell^\star})$, so \cref{clm:contain} implies that $\Bout_r(\Dout_{\ell^\star-1})\subseteq \Vout_r\subseteq \Bout_r(\Dout_{\ell^\star})$. Property P1 is immediate.
        For P2, $d^h(u,r)\le d(\Din_{\ell^\star-1})=\Delta_r$ for all $u\in\Tin_r$ and $d^h(r,v)<-\Delta_r$ for all $v\in\Tout_r$, hence $d^h(u,r)+d^h(r,v)<0$.
        For P3, \cref{eq:size} gives $|\Tin_r|=|\Bin_r(\Din_{\ell^\star-1})|=\Theta(2^{\ell^\star})$ and the sandwich gives $|\Tout_r|=\Theta(2^{\ell^\star})$, so $|\Tin_r|=\Theta(|\Tout_r|)$.
        \item If $d(\Din_{\ell^\star-1})< -d(\Dout_{\ell^\star})$: Let $\Delta_r = -d(\Dout_{\ell^\star}),\Tin_r=\Vin_r,\Tout_r=\Vout_r$. Since $d(\Dout_{\ell^\star-1})< -\Delta_r=d(\Dout_{\ell^\star})$, \cref{clm:contain} implies that $\Bout_r(\Dout_{\ell^\star-1})\subseteq \Vout_r\subseteq \Bout_r(\Dout_{\ell^\star})$. On the other hand, since $d(\Din_{\ell^\star-1})<-d(\Dout_{\ell^\star})=\Delta_r$ and $d(\Din_{\ell^\star})\geq -d(\Dout_{\ell^\star})=\Delta_r$, \cref{clm:contain} implies that $\Bin_r(\Din_{\ell^\star-1})\subseteq \Vin_r\subseteq \Bin_r(\Din_{\ell^\star})$. Properties P1 and P2 are immediate. For P3, we have both $|\Tin_r|=\Theta(2^{\ell^\star})$ and $|\Tout_r|=\Theta(2^{\ell^\star})$ by the sandwiches above.
    \end{itemize}
    
\end{itemize}

We first note that the four cases above exhaustively cover all possible scenarios. The first case covers when $\ell^\star$ does not exist.
Otherwise, either $\ell^\star=1$, yielding the second case, or $\ell^\star>1$.
When $\ell^\star>1$, by minimality we have $d(\Din_{\ell^\star-1})+d(\Dout_{\ell^\star-1})<0$.
Since the sequences $\{\Din_\ell\}_{\ell\in[L]}$ and
$\{\Dout_\ell\}_{\ell\in[L]}$ are non-decreasing, at least one of the inequalities
$d(\Din_{\ell^\star-1})<d(\Din_{\ell^\star})$ or $d(\Dout_{\ell^\star-1})<d(\Dout_{\ell^\star})$
must hold; otherwise the sum would not increase at index $\ell^\star$.
If the former holds, we are in the third case; if the latter holds, we are in the fourth case.
Thus, the four cases above are exhaustive.

Hence we can always compute $\Delta_r$ together with $\Tin_r$ and $\Tout_r$ satisfying Properties P1-P3. Note that the sets $\Tin_r,\Tout_r$ are not explicitly computed; they are used solely for analysis.

\paragraph{Bounding the size.} Now we bound size of $\Tin_r$ and $\Tout_r$. By Property P2, for every $u\in \Tin_r$ and $v\in \Tout_r$, we have $d^h(u,r)+d^h(r,v)<0$. This implies that the negative vertex $r$ is in between $(u,v)$ for all $u\in \Tin_r,v\in \Tout_r$. We charge vertex $r$ to these $|\Tin_r|\cdot |\Tout_r|$ pairs.

By the betweenness reduction guarantee of \cref{lem:betweenness-reduction}, each pair $(u,v)\in V\times V$ is charged at most $\eta/b$ times. Therefore, 
\[
\sum_{r\in N} |\Tin_r|\cdot |\Tout_r|\leq \frac{\eta n^2}{b}
\]
Since $|\Tin_r|=\Theta(|\Tout_r|)$ holds for all $r\in N$ (Property P3), we have $|\Tin_r|+|\Tout_r|=O\left(\sqrt{|\Tin_r|\cdot |\Tout_r|+1}\right)$ and
\[
\sum_{r\in N} \left(|\Tin_r|+|\Tout_r|\right)^2=O\left(\sum_{r\in N} \left(\sqrt{|\Tin_r|\cdot |\Tout_r|+1}\right)^2\right) = O\left(\frac{\eta n^2}{b}\right).
\]
Since $|N|=\eta$, applying Cauchy--Schwarz yields
\[
\sum_{r\in N} \left(|\Tin_r|+|\Tout_r|\right) = O\left(\frac{\eta n}{\sqrt{b}}\right).
\]
Since $|\Vin_r|\leq |\Tin_r|$ and $|\Vout_r|\leq |\Tout_r|$ by Property P1, we obtain
\[
\sum_{r\in N} \left(|\Vin_r|+|\Vout_r|\right)^2\leq \sum_{r\in N} \left(|\Tin_r|+|\Tout_r|\right)^2= O\left(\frac{\eta n^2}{b}\right),
\]
and
\[
\sum_{r\in N} \left(|\Vin_r|+|\Vout_r|\right)\leq \sum_{r\in N} \left(|\Tin_r|+|\Tout_r|\right) = O\left(\frac{\eta n}{\sqrt{b}}\right).
\]

\paragraph{Runtime.}
Finally, we analyze the running time of the algorithm. For the algorithm $\estimate$, a single random experiment can be implemented in $O(h(m+n\log n))=\tilde O(hn^2)$ time using a hybrid of Dijkstra and Bellman-Ford. Since we perform $O(\log n)$ repetitions, the total time for one call to $\estimate$ is $\tilde O(hn^2)$. As $\estimate$ is invoked $O(\log n)$ times, the total time spent on these calls is bounded by $\tilde O(hn^2)$. After that, for each negative vertex $r\in N$, it requires only $\tilde O(1)$ time to obtain $\Delta_r$. Therefore, the overall running time of the algorithm is $\tilde O(hn^2)$, which completes the proof.

\section{The Main Algorithm} \label{sec:alg}

\begin{algorithm}
\caption{Single-Source Shortest Path}
\label{alg:sssp}
\KwIn{A directed graph $H$ and a source vertex $s\in V(H)$.}
\KwOut{$d(s,v)$ for every $v\in V(H)$.}
Convert $H$ into an equivalent well-behaved graph $G = (V,E)$.\\
Let $G_0=(V_0,E_0)$ denote the initial state of the graph $G$, and let $n_0=|V_0|$.\\
Let $N$ be the set of negative vertices in $G_0$, define $\bar N=\{\bar r:r\in N\}$, and let $\eta=|N|$. \\
Note that by the construction described in \cref{sec:prelim}, we have $N=V(H)$.\\
For every $v\in V_0$, let $v$ be a heavy vertex.\\
\If{$\eta\leq 100$} {
\Return $d^{100}(s,v)$ for every $v\in V(H)$.
}
\For{$t=\{1,2,\dots,L\}$ where $L=O(\log n_0)$} {
    $\shortcut(G,t)$.
}
Compute $2$-hop shortest paths from $s$.\\
Let $\varphi$ denote the cumulative vertex potential.\\
\Return $d^2(s,v)-\varphi(s)+\varphi(v)$ for every $v\in V(H)$.
\end{algorithm}

We start by providing an overview to our single-source shortest path algorithm, described in \cref{alg:sssp}. We first convert the input graph into an equivalent well-behaved graph.
We then run $\shortcut(G,t)$ for $L = O(\log n_0)$ iterations.
The purpose of a single iteration is to reduce, by a constant factor, the
number of negative edges (the \emph{hop count}) on shortest paths, while preserving all
shortest-path distances. In particular, any $h$-hop shortest path can be transformed into a path with at most $h-\lfloor h/3\rfloor$ hops while preserving total weight. Consequently, after $L = O(\log n_0)$ iterations, it suffices to compute $2$-hop shortest paths from the source.

As in previous works, we \emph{freeze} the set $N$ of negative vertices at the start of the
algorithm; that is, the set $N$ remains fixed throughout.
Recall that under our definition of a well-behaved graph, for every negative vertex $r$
the graph contains both a negative edge $(r,\bar r)$ and a non-negative reverse edge
$(\bar r,r)$ of weight $-w(r,\bar r)$; this symmetric representation is used throughout
the shortcutting algorithm. 

\paragraph{Recap of \cite{LLRZ25}.} We now consider a single iteration of the $\shortcut$ procedure.
Our approach follows the shortcutting framework of~\cite{LLRZ25}.
Suppose that for every negative vertex $r\in N$, we have computed a value $\Delta_r$ and the sets
\[
\Uin_r=\{v\in V:d^0(v,r)<\Delta_r\}, \quad \Uout_r=\{v\in V:d^1(r,v)<-\Delta_r\},
\]
along with $d^0(v,r)$ for every $v\in \Uin_r$ and $d^1(r,v)$ for every $v\in \Uout_r$.
Using betweenness reduction and recursion, \cite{LLRZ25} shows it is possible to obtain the following bounds:
\[
\sum_{r\in N}(|\Uin_r|+|\Uout_r|)=\tilde O(n^{1.5}), \quad \sum_{r\in N}(|\Uin_r|+|\Uout_r|)^2=\tilde O(n^2).
\]

To explain the local shortcutting goal, consider a shortest path $P$ and three successive
negative edges on $P$, say $(u,\bar u)$, $(r,\bar r)$, and $(v,\bar v)$.
The algorithm’s objective around $r$ is to ensure that, after adding shortcut edges,
the middle negative edge $(r,\bar r)$ can be eliminated from the representation of $P$, either by \emph{merging} it with
a predecessor or successor negative edge (the easy scenarios shown as Case 1 and 2 in \Cref{fig:simple}), or by
\emph{bypassing} it as in the hard scenario shown as Case 3 in \Cref{fig:hard}.

\begin{figure}
    \centering
    \begin{subfigure}[b]{0.45\textwidth}
        \centering
        \begin{tikzpicture}
\coordinate (u)  at (-3.2, 1.9);
  \coordinate (up) at (-2.6, 0.1);
  \coordinate (r)  at (-0.3, 1.2);
  \coordinate (rp) at ( 0.3,-1.0);
  \coordinate (v)  at ( 2.6, 0.5);
  \coordinate (vp) at ( 3.2,-1.8);
  \coordinate (rb)  at (-0.3, -0.2);
  \coordinate (rc)  at (-0.3, 0.4);

  \draw[blue,thin] (r) -- (rb);
  \node[blue,left] at (rc) {$\Delta_r$};

  % Dashed levels + bracket
%  \draw[dashedlvl] (-3.6,1.2) -- (3.2,1.2);
  \draw[dashedlvl] (-3.6,-0.2) -- (3.6,-0.2);
%  \sqbracket{0.85}{0.2}{1.0}{0.28}{\Delta_r}

  % Black sampled polylines (up -> r) and (rp -> v)
  \draw[bpath] (up) -- (r);
  \segmentdots{up}{r}{4}{dot}

  \draw[bpath] (rp) -- (v);
  \segmentdots{rp}{v}{4}{dot}

  % Red segments and red curve arrow
  \draw[rthin] (u) -- (up);
  \draw[rthin] (v) -- (vp);
  \draw[rthin] (r) -- (rp);

  \draw[rcurve] (u) .. controls (-2.6,0.2) and (-1.2,-0.7) .. (rp);

  % Labels
  \node[above left]  at (u)  {$u$};
  \node[left]  at (up) {$\bar u$};
  \node[above]       at (r)  {$r$};
  \node[below]       at (rp) {$\bar r$};
  %     \node[below]       at (rb)  {$\tilde r$};
  \node[above right] at (v)  {$v$};
  \node[below right] at (vp) {$\bar v$};

  % Endpoint dots
  \node[dot] at (u)  {};
  \node[dot] at (up) {};
  \node[dot] at (r)  {};
  \node[dot] at (rb)  {};
  \node[dot] at (rp) {};
  \node[dot] at (v)  {};
  \node[dot] at (vp) {};
\end{tikzpicture}
        \caption{Case 1: $\bar u\in \Uin_r$.}
        \label{fig:simple-1}
    \end{subfigure}
    ~
    \begin{subfigure}[b]{0.45\textwidth}
        \centering
        \begin{tikzpicture}
\coordinate (u)  at (-3.2, 1.9);
  \coordinate (up) at (-2.6, -0.6);
  \coordinate (r)  at (-0.3, 1.2);
  \coordinate (rp) at ( 0.3,-1.0);
  \coordinate (v)  at ( 2.9, -0.4);
  \coordinate (vp) at ( 3.3,-1.8);
  \coordinate (rb)  at (-0.3, -0.2);
  \coordinate (rc)  at (-0.3, 0.35);

  \draw[blue,thin] (r) -- (rb);
  \node[blue,left] at (rc) {$\Delta_r$};

  % Dashed levels + bracket
%  \draw[dashedlvl] (-3.6,1.2) -- (3.2,1.2);
  \draw[dashedlvl] (-3.6,-0.2) -- (3.6,-0.2);
%  \sqbracket{0.85}{0.2}{1.0}{0.28}{\Delta_r}

  % Black sampled polylines (up -> r) and (rp -> v)
  \draw[bpath] (up) -- (r);
  \segmentdots{up}{r}{4}{dot}

  \draw[bpath] (rp) -- (v);
  \segmentdots{rp}{v}{4}{dot}

  % Red segments and red curve arrow
  \draw[rthin] (u) -- (up);
  \draw[rthin] (v) -- (vp);
  \draw[rthin] (r) -- (rp);

  \draw[rcurve] (r) .. controls (1.3,0.8) and (2.3,-0.1) .. (vp);

  % Labels
  \node[above left]  at (u)  {$u$};
  \node[below left]  at (up) {$\bar u$};
  \node[above]       at (r)  {$r$};
  \node[below]       at (rp) {$\bar r$};
  %     \node[below]       at (rb)  {$\tilde r$};
  \node[right] at (v)  {$v$};
  \node[below right] at (vp) {$\bar v$};

  % Endpoint dots
  \node[dot] at (u)  {};
  \node[dot] at (up) {};
  \node[dot] at (r)  {};
  \node[dot] at (rb)  {};
  \node[dot] at (rp) {};
  \node[dot] at (v)  {};
  \node[dot] at (vp) {};
\end{tikzpicture}
        \caption{Case 2: $v\in \Uout_r$}
        \label{fig:simple-2}
    \end{subfigure}
	\caption{Illustration of two simple cases. The height of each vertex represents its cumulative distance along the path. Black edges denote non-negative edges in the graph, red edges denote negative edges, and purple edges are the shortcut edges added by the shortcut algorithm.}
    \label{fig:simple}
\end{figure}
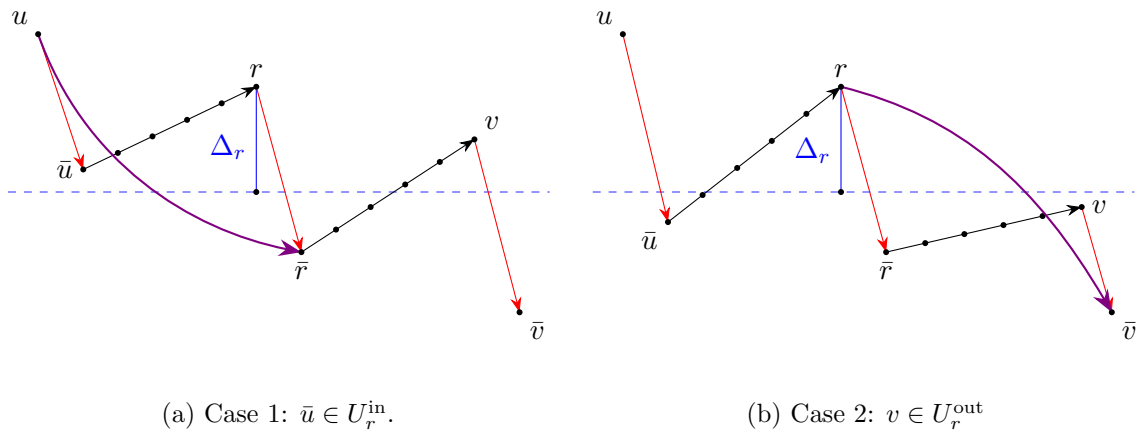

\paragraph{Merging $(r,\bar r)$ with adjacent negative edge (Case 1 and 2).} We first consider two simple cases shown in \cref{fig:simple}. When $\bar u\in \Uin_r$ (i.e. $\bar u$ is above the $\Delta_r$ line), the algorithm adds an edge $(u,\bar r)$ of weight $w(u,\bar u)+d^0(\bar u,r)+w(r,\bar r)$, reducing  the 2-hop path $u\rightsquigarrow \bar r$ into a single negative edge. Likewise, when $v\in \Uout_r$ (i.e. $v$ is below the $\Delta_r$ line), the algorithm adds an edge $(r,\bar v)$ of weight $d^1(r,v)+w(v,\bar v)$, reducing the 2-hop $r\rightsquigarrow \bar v$ path into a single negative edge. Since we assume that the graph $G$ is well-behaved, each vertex is incident to at most one negative edge. Consequently, adding these shortcut edges requires
\[
\sum_{r\in N} O\left(|\Uin_r|+|\Uout_r|\right) = \tilde O(n^{1.5})
\]
time. One issue that arises after adding shortcut edges is that the graph may no longer remain well-behaved. However, it is straightforward to restore this property.

\begin{figure}
    \centering
    \begin{subfigure}[b]{0.45\textwidth}
        \centering
        \begin{tikzpicture}
\coordinate (u)  at (-3.2, 1.9);
  \coordinate (up) at (-2.6, -0.8);
  \coordinate (r)  at (-0.3, 1.2);
  \coordinate (rp) at ( 0.3,-1.0);
  \coordinate (v)  at ( 2.6, 0.7);
  \coordinate (vp) at ( 3.2,-1.8);
  \coordinate (rb)  at (-0.3, -0.2);
  \coordinate (rc)  at (-0.3, 0.33);
  \coordinate (y)  at ( -2.14, -0.4);
  \coordinate (x)  at ( -1.68, 0);
  \coordinate (z)  at ( 1.68, 0.02);

  \draw[blue,thin] (r) -- (rb);
  \node[blue,left] at (rc) {$\Delta_r$};

  % Dashed levels + bracket
%  \draw[dashedlvl] (-3.6,1.2) -- (3.2,1.2);
  \draw[dashedlvl] (-3.6,-0.2) -- (3.6,-0.2);
%  \sqbracket{0.85}{0.2}{1.0}{0.28}{\Delta_r}

  % Black sampled polylines (up -> r) and (rp -> v)
  \draw[bpath] (up) -- (r);
  \segmentdots{up}{r}{4}{dot}

  \draw[bpath] (rp) -- (v);
  \segmentdots{rp}{v}{4}{dot}

  % Red segments and red curve arrow
  \draw[rthin] (u) -- (up);
  \draw[rthin] (v) -- (vp);
  \draw[rthin] (r) -- (rp);

%  \draw[rcurve] (r) .. controls (1.3,0.8) and (2.3,-0.1) .. (vp);

  \draw[violet,thick,-{Stealth[length=2.5mm]}] (y) -- (rb);
  \draw[violet,thick,-{Stealth[length=2.5mm]}] (rb) -- (z);

  % Labels
  \node[above left]  at (u)  {$u$};
  \node[below left]  at (up) {$\bar u$};
  \node[above]       at (r)  {$r$};
  \node[below]       at (rp) {$\bar r$};
  \node[violet,below]       at (rb)  {$\tilde r_t$};
  \node[right] at (v)  {$v$};
  \node[below right] at (vp) {$\bar v$};
  \node at (-1.4, -0.05)  {$x$};
  \node[below right] at (y)  {$y$};
  \node[right] at (z)  {$z$};

  % Endpoint dots
  \node[dot] at (u)  {};
  \node[dot] at (up) {};
  \node[dot] at (r)  {};
  \node[dot] at (rb)  {};
  \node[dot] at (rp) {};
  \node[dot] at (v)  {};
  \node[dot] at (vp) {};
\end{tikzpicture}
        \caption{Case 3: $\bar u\not\in \Uin_r$ and $v\notin \Uout_r$. \textcolor{violet}{$\tilde r_t$} is an N-Steiner vertex that we introduce.}
        \label{fig:hard}
    \end{subfigure}
    ~
    \begin{subfigure}[b]{0.45\textwidth}
        \centering
        \begin{tikzpicture}
  \coordinate (r)  at (2.5, 3);
  \coordinate (rb)  at (2.5, 0);
  \coordinate (rc)  at (2.5, 1.5);
  \coordinate (x)  at (0, 1.7);
  \coordinate (u1)  at (-3.4, -1.8);
  \coordinate (u2)  at (-2.55, -1.5);
  \coordinate (u3)  at (-1.75, -1);
  \coordinate (u4)  at (-0.8, -0.6);
  \coordinate (u5)  at (-0.2, -0.2);
  \coordinate (u6)  at (0.4, 0.3);
  \coordinate (ud)  at (1.12, 0.62);
  \coordinate (uk)  at (2, 1);
  \coordinate (xl)  at (-1.3, -1);

  \draw[blue,thin] (r) -- (rb);
  \node[blue,left] at (rc) {$\Delta_r$};
  
  \draw[black,thin,-{Stealth[length=2mm]}] (x) -- (r);
  \draw[black,thin,midarrow] (u1) -- (x);
  \draw[black,thin,midarrow] (u2) -- (x);
  \draw[black,thin,midarrow] (u3) -- (x);
  \draw[black,thin,midarrow] (u4) -- (x);
  \draw[black,thin,midarrow] (u5) -- (x);
  \draw[black,thin,midarrow] (u6) -- (x);
  \draw[black,thin,midarrow] (uk) -- (x);

  \draw[violet,thick,midarrow] (xl) -- (x);
  \draw[violet,thin,midarrow] (u3) -- (xl);
  \draw[violet,thin,midarrow] (u2) .. controls (-1.65,-1.25) .. (xl);
  \draw[violet,thin,midarrow] (u1) .. controls (-2.5,-2) and (-1.65,-1.5) .. (xl);
  \draw[violet,thick,midarrow] (xl) .. controls (-0.5,-1.1) and (1.2,-0.8).. (rb);
  \draw[violet,thick,midarrow] (u4) .. controls (0,-0.6) and (1.2,-0.4)..  (rb);
  \draw[violet,thick,midarrow] (u5) -- (rb);

  \draw[dashedlvl] (-3.6,0) -- (3.1,0);
  % Labels
  \node[above]      at (r)  {$r$};
  \node[violet,below]       at (rb)  {$\tilde r_t$};
  \node[left]      at (x)  {$x$};
  \node[below]      at (u1)  {$y_1$};
  \node[below]      at (u2)  {$y_2$};
  \node[below]      at (u3)  {$y_3$};
  \node[below]      at (u4)  {$y_4$};
  \node[below]      at (u5)  {$y_5$};
  \node[below]      at (u6)  {$y_6$};
  \node[below]      at (uk)  {$y_k$};
  \node[violet,below right] at (xl)  {$x_\ell$};

  \node[rotate=20] at (ud) {$\dots$};

  % Endpoint dots
  \node[dot] at (x)  {};
  \node[dot] at (r)  {};
  \node[dot] at (rb)  {};
  \node[dot] at (u1)  {};
  \node[dot] at (u2)  {};
  \node[dot] at (u3)  {};
  \node[dot] at (u4)  {};
  \node[dot] at (u5)  {};
  \node[dot] at (u6)  {};
  \node[dot] at (uk)  {};
  \node[dot] at (xl)  {};
\end{tikzpicture}
        \caption{Example of an in-Steiner gadget for $x$. The figure illustrates that case when $k-2^\ell=3$.}
        \label{fig:steiner}
    \end{subfigure}
    \caption{Illustration of the hard case, and an in-Steiner gadget.}
\end{figure}
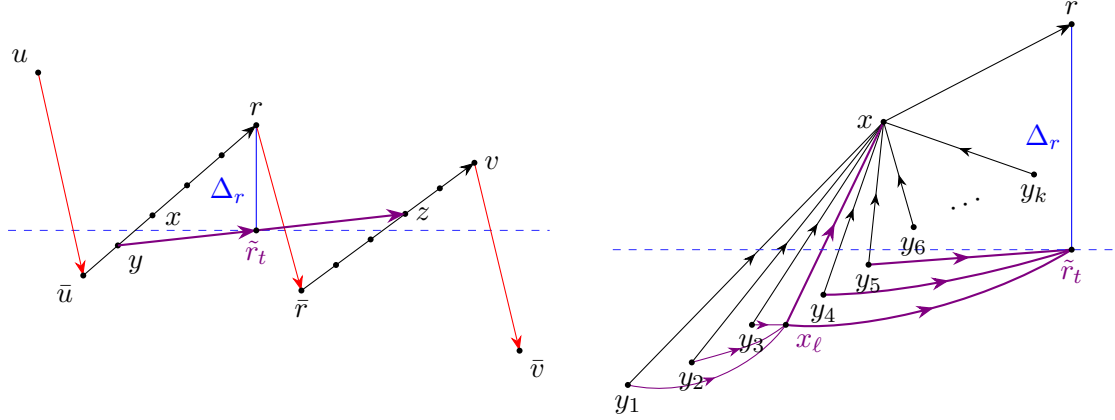

\paragraph{Bypassing $(r,\bar r)$ (Case 3).} Now we turn to the more challenging case as shown in \cref{fig:hard}, where $\bar u\not\in \Uin_r$ and $v\notin \Uout_r$. In this situation, $P$ contains a subpath of the form
\[
\bar u \rightsquigarrow r \to \bar r \rightsquigarrow v,
\]
where the two connecting segments $\bar u \rightsquigarrow r$ and $\bar r \rightsquigarrow v$
contain \emph{no negative edges} (the only negative edge in the displayed subpath is
$(r,\bar r)$).
Bypassing $(r,\bar r)$ means constructing an alternative path from $\bar u$ to $v$
with the same total weight but with one fewer negative edge.

To locate where shortcut edges must attach efficiently, the algorithm measures distances
to and from $r$ along shortest paths.
Along the incoming segment $\bar u \rightsquigarrow r$, the values $d^0(\cdot,r)$ increase
monotonically as we move away from $r$.
Given a threshold value $\Delta_r$, there is therefore a first edge $(y,x)$ on this segment
such that
\[
d^0(x,r) < \Delta_r \le d^0(y,r).
\]
We call such an edge an \emph{incoming threshold-crossing edge}.
A symmetric threshold crossing may occur on the outgoing segment $\bar r \rightsquigarrow v$
when distances are measured from $r$.
Threshold crossings are unavoidable since $\bar u\notin \Uin_r$ and $v\notin\Uout_r$, and they simply identify where
shortcut edges must attach; the choice of $\Delta_r$ is made to control the total work
(and hence running time).

\paragraph{The shortcut construction: $\scin$, $\scout$, and $\tilde r_t$.}
For each negative vertex $r$ and iteration $t$, the algorithm introduces a new auxiliary
vertex $\tilde r_t$, called an \emph{N-Steiner vertex}, which acts as a hub for shortcut
paths associated with $r$.
The construction of shortcut edges is split into two symmetric procedures:
\begin{itemize}
    \item $\scin(r,t)$ identifies vertices $u$ on incoming sides of shortest paths through $r$
    for which adding a non-negative weight edge $(y,\tilde r_t)$ preserves shortest-path distances.
    \item $\scout(r,t)$ identifies vertices $v$ on outgoing sides of shortest paths through $r$
    for which adding a non-negative weight edge $(\tilde r_t,z)$ preserves shortest-path distances.
\end{itemize}
In the hard case, the edges generated by $\scin(r,t)$ and $\scout(r,t)$ yield paths of the form
$\bar u\rightsquigarrow y \to \tilde r_t \to z\rightsquigarrow v$ (as illustrated in \cref{fig:hard}) that bypass the negative edge $(r,\bar r)$ and reduce the hop count
by one.

\paragraph{Adding shortcut edges efficiently via Steiner gadgets.} The main challenge in implementing the procedure $\scin$ efficiently is the time required to identify the vertices $y$ to which the edges $(y,\tilde r_t)$ should be added. In~\cite{LLRZ25}, this is accomplished by checking all neighbors of each vertex $x \in \Uin_r$, which may require $\Theta(|\Uin_r| \cdot n)$ time. Summing over all negative vertices, this leads to a total running time of $\Theta(n^{2.5})$

A natural attempt to improve the running time is to introduce Steiner vertices for each vertex $x$ to aggregate its incident edges efficiently, see \cref{fig:steiner}. Let $y_1,\dots,y_k$ be the in-neighbors of $x$. We first order them so that $w(y_1,x)\geq \dots\geq w(y_k,x)$. We introduce $O(\log k)$ Steiner vertices: for each $0\leq \ell \leq \lceil \log k \rceil - 1$, we create an \emph{in-Steiner vertex} $x_\ell$ whose in-parent is $x$, denoted by $\pin(x_\ell) = x$. For each such $x_\ell$, we first add an edge $(x_\ell,x)$ of weight $w(y_{k-2^\ell},x)$. Then, for every $i \in [1,k - 2^\ell]$, we add an edge $(y_i,x_\ell)$ with weight $w(y_i,x) - w(x_\ell,x)$.

With this construction, whenever $x \in \Uin_r$ for some negative vertex $r \in N$, we no longer need to examine all in-neighbors of $x$ explicitly. Instead, we perform a binary search to find the smallest index $j \in [k]$ such that $w(y_j,x)+d^0(x,r)<\Delta_r$. Let $\ell^\star$ be the smallest integer such that $2^{\ell^\star}>k-j$. For vertices $y_i$ with $i \leq k - 2^{\ell^\star}$, we connect them to $\tilde r_t$ via the Steiner vertex $x_{\ell^\star}$, while for the remaining vertices $y_i$ with $k - 2^{\ell^\star}< i < j$, we connect them directly to $\tilde r_t$ using brute force. Observe that for all $y_i$ with $i\in [j,k]$, we have $d^0(y_i,r)\leq w(y_i,x)+d^0(x,r)<\Delta_r$, and hence $y_i\in \Uin_r$. Consequently, the number of vertices handled via brute force is bounded by $2^{\ell^\star}=O(k-j+1)=O(|\Uin_r|)$. Therefore, the total time spent on this step over all negative vertices is bounded by
\[
\sum_{r\in N} O\left((|\Uin_r|+|\Uout_r|)^2\right) = \tilde O(n^2),
\]
as desired.

The straightforward implementation of this natural idea works fine for the first iteration. However, as each successive iteration creates more and more Steiner vertices, the total number of vertices grows by a multiplicative factor in each successive iteration. As a result, the graph becomes too large after a logarithmic number of iterations of this process. In the following, we describe how we structure shortcut edge insertions that allow us to control this growth, and achieve the desired runtime.

\paragraph{Ensuring efficiency.}
The main challenge, as noted above, is to control the growth in the number of vertices. Under the above construction, the number of vertices can increase by a factor of $\Theta(\log n)$ in each shortcutting iteration. Since $\Theta(\log n)$ iterations are required to reduce the hop distance to $O(1)$, this multiplicative growth would cause the total number of vertices to blow up by the end of the algorithm.

To address this issue, the algorithm distinguishes vertices that can be handled by direct adjacency scans
from those that cannot. For auxiliary Steiner vertices, the classification is one-sided, reflecting how they are used.
For example, an \emph{in-Steiner} vertex is classified based on its \emph{out-degree} (it may have
large in-degree, which can be charged to its in-parent). In particular, an in-Steiner vertex is \emph{light} if its out-degree is sufficiently small, and \emph{heavy} otherwise; and symmetrically for out-Steiner vertices. For the original vertices and N-Steiner vertices, they are always designated as heavy since their number remains bounded in course of teh algorithm.
Heavy vertices are equipped with in-Steiner and out-Steiner gadgets that aggregate adjacency at
geometrically spaced scales, as explained above, enabling threshold crossings to be processed efficiently rather
than linear scans.

Since we create auxiliary in-Steiner and out-Steiner vertices only for heavy vertices, bounding the number of heavy vertices suffices to control the overall growth in the number of vertices. However, a vertex that is initially light may become heavy as shortcut edges are added and its degree increases. If this occurs repeatedly, the number of heavy vertices could grow rapidly. To prevent this, we show that we can ``shift" a shortcut edge insertion that would be incident to a light vertex to instead be incident to a heavy vertex while achieving the same shortcutting effect. This allows a light vertex to remain low degree, thereby keeping the number of heavy vertices, and consequently the total number of vertices, under control.

\paragraph{Shortcut edge insertions: $\ein$ and $\eout$.}
To ensure necessary structural invariants, 
\emph{candidate} shortcut edges generated by procedures $\scin$ and $\scout$  (to
help eliminate a negative hop across $(r,\bar r)$),
are not inserted directly into $G$ but instead, every insertion is routed through $\ein(u,v,W)$ (for edges \emph{into} a hub such as
$\tilde r_t$) and $\eout(u,v,W)$ (for edges \emph{out of} such a hub).

The reason is that shortcutting is implemented using auxiliary Steiner gadgets for high-degree
vertices. As a result, a shortcut edge produced by $\scin/\scout$ may naturally be incident to a Steiner
vertex rather than to an original vertex.
The procedures $\ein$ and $\eout$ are responsible for making such insertions \emph{compatible}
with the Steiner gadget structure and with the invariants used in the analysis.
Concretely, Steiner vertices are not independent vertices in the shortest-path
instance: they are auxiliary constructs that summarize groups of edges incident to
an underlying \emph{parent} vertex. When a new edge is inserted, its effect may need to be propagated to the parent vertex for which the corresponding Steiner gadget was created.
This propagation can require inserting additional edges, and those edges may in turn
be incident to other auxiliary Steiner vertices.
For this reason, the procedures $\ein(\cdot)$ and $\eout(\cdot)$ perform insertions
recursively.

Finally, $\ein$ and $\eout$ enforce the key restriction on negative edges:
although the algorithm may introduce negative-weight edges, any such edge must be incident to a
negative vertex in $N$.
Any negative edges that cannot be safely inserted at the current moment (because they would violate the maintained
invariants under the current reweighting) are stored in auxiliary deferred sets and reconsidered
after the next potential update, at which point they may become safe to add.

\paragraph{Key invariants (informal).}
Throughout the execution of $\shortcut(G,t)$, the algorithm maintains the following properties,
which are formalized later:
\begin{itemize}
    \item Any negative edge inserted during shortcutting is incident to a vertex in $N$.
    \item After each iteration, the graph can be restored to be a well-behaved graph.
    \item For each negative vertex $r$, the work performed in one iteration is bounded in terms of
    $|\Vin_r|$ and $|\Vout_r|$ (and their aggregate bounds from \cref{lem:compute-delta}).
    \item Degree growth is concentrated on vertices designated heavy.
\end{itemize}

We next describe the procedure $\shortcut(G,t)$ in detail.

\subsection{The Procedures $\shortcut(G,t)$} 

During the execution of $\shortcut(G,t)$, the algorithm introduces several types of
auxiliary vertices in addition to the original vertices.

\begin{itemize}
    \item \emph{In-Steiner vertices}: auxiliary vertices created to compactly represent
    groups of incoming edges to a heavy vertex (created in Step~\ref{step:steiner-vertex-in}).
    \item \emph{Out-Steiner vertices}: auxiliary vertices created to compactly represent
    groups of outgoing edges from a heavy vertex (created in Step~\ref{step:steiner-vertex-out}).
    \item \emph{N-Steiner vertices}: vertices $\tilde r_t$ created for each negative
    vertex $r\in N$ in iteration $t$, serving as hubs for shortcut paths that bypass
    $(r,\bar r)$ (created in Step~\ref{step:tilde-r}).
\end{itemize}

Each auxiliary vertex has a designated parent (in-parent, out-parent, or N-parent),
which records the vertex for which it was created. Vertices are also assigned a \emph{level}, equal to the iteration in which they are created.
We say a vertex is a \emph{level-$t$ vertex} if it is created during the execution of $\shortcut(G,t)$ (i.e. the $t$-th iterations). In particular, every vertex in the original graph is a level-$0$ vertex.
All vertices inherited from the original graph as well as N-Steiner vertices are called \emph{regular vertices}.

To ensure both correctness and running time, the algorithm maintains two distinct
systems of adjacency lists. The lists $\{\Nin(v),\Nout(v)\}$ represent the actual adjacency structure of the graph
$G$. In contrast, the lists $\{\Ain_v,\Aout_v\}$ are auxiliary views used during shortcut
generation. They are constructed so as to (i) freeze the relevant neighborhoods while edges are
being added, and (ii) exclude edges whose presence would cause redundant or excessive
work.
The sets $\Fin_t$ and $\Fout_t$ store edges whose insertion is deferred; such edges are
reconsidered in later iteration after reweighting by vertex potentials. We also initialize the sets of deferred edges as $\Fin_0=\Fout_0=\emptyset$.

We now describe the steps of $\shortcut(G,t)$ in detail. We use $G=(V,E)$ to denote the current state of the graph $G$, and let $n=|V|$. Recall that $n_0$ denotes the number of vertices in the original graph $G_0$.

\begin{ourbox}
\begin{enumerate}
\item[] \hspace{-\leftmargin} {\bf Phase 1: Reweighting, threshold computation, and preprocessing.}

\item \label{step:betweenness-reduction}
We set the parameters $h=L+1=O(\log n_0)$, $b=\eta/\gamma$, and $\lambda=n/\sqrt{b}$, where
$\gamma=\Theta(\log n_0\cdot 2^{\sqrt{\log n_0}})$.
We then apply the algorithm of \cref{lem:betweenness-reduction} to the graph $G$. Recall that the algorithm recursively
calls negative-weight single-source shortest path on a graph $G'$ with
$O(hn)$ vertices and $O(b\log n)$ negative vertices.
Let $\varphi$ be the resulting vertex potential, and set $G\gets G_\varphi$.

\smallskip
\noindent
\emph{Explanation.}
This step reweights the graph so that distances are better structured for the
current iteration.
The betweenness reduction guarantees that, for each negative vertex $r$, the
sets of vertices close to $r$ (which will participate in shortcutting) are small
enough to handle efficiently.

\item \label{step:f-in-f-out}
Let $\Fin_t=\Fin_{t-1}$ and $\Fout_t=\Fout_{t-1}$.
\begin{enumerate}
    \item \label{step:f-in}
    For every edge $e=(u,v)\in \Fin_{t-1}$, let $w(e)=w_\varphi(e)$.
    \\If $w(e)\ge 0$, call $\ein(u,v,w(e))$.
    \item \label{step:f-out}
    For every edge $e=(u,v)\in \Fout_{t-1}$, let $w(e)=w_\varphi(e)$.
    \\If $w(e)\ge 0$, call $\eout(u,v,w(e))$.
\end{enumerate}

\smallskip
\noindent
\emph{Explanation.}
We carry forward the deferred edge sets by setting $\Fin_t=\Fin_{t-1}$ and $\Fout_t=\Fout_{t-1}$.
After applying new vertex potentials, some previously deferred edges may become non-negative and hence safe to insert. This step revisits all deferred edges and inserts exactly those that have
become admissible.

\item \label{step:pre}
We compute a threshold value $\Delta_r$ for every negative vertex $r\in N$ using
\cref{lem:compute-delta} with $h=O(\log n_0)$ as defined in Step~\ref{step:betweenness-reduction} above.
We then compute the sets $\Uin_r$ and $\Uout_r$ using \cref{lem:compute-U},
together with distances $d^0(v,r)$ for all $v\in \Uin_r$ and
$d^1(r,v)$ for all $v\in \Uout_r$.

\smallskip
\noindent
\emph{Explanation.}
The threshold $\Delta_r$ determines how far from $r$ we are willing to search
when constructing shortcuts.
The sets $\Uin_r$ and $\Uout_r$ capture the vertices that are “close enough” to
$r$ to be relevant for shortcutting, and their sizes are carefully controlled
by the choice of $\Delta_r$.

\item \label{step:delete} For every vertex $v\in V$:
\begin{itemize}
    \item Let $\Ain_v=\{u\in\Nin(v):w(u,v)\geq 0\}$ and
    $\Aout_v=\{u\in\Nout(v):w(v,u)\geq 0\}$.
    \item If $v$ is heavy, remove from $\Ain_v$ (resp.\ $\Aout_v$) any in-Steiner
    (resp.\ out-Steiner) vertex whose parent is $v$.
    \item If $v$ is an in-Steiner vertex, set $\Ain_v=\emptyset$ and remove its in-parent $\pin(v)$ from $\Aout_v$.
    \item If $v$ is an out-Steiner vertex, set $\Aout_v=\emptyset$ and remove its out-parent $\pout(v)$ from $\Ain_v$.
\end{itemize}

\smallskip
\noindent
\emph{Explanation.}
This step constructs the auxiliary adjacency lists $\Ain_v$ and $\Aout_v$ that
will be scanned by $\scin$ and $\scout$. These lists serve two purposes.

First, they \emph{freeze} the neighborhoods visible during the shortcutting phase:
only edges that were present before shortcut edges are added are included.
Edges inserted later in the same iteration are therefore not reconsidered,
which prevents repeated or cascading work.

Second, we remove edges that connect a Steiner vertex directly to its parent. This is done for technical reasons, and simplifies the analysis. We also remove all incoming edges to an in-Steiner vertex, since such vertex are intended to aggregate adjacency information of its parent. Any shortcut involving its in-neighbors can instead be handled at its parent. Keeping its incoming edges would cause the same neighborhood to be processed twice and would unnecessarily increase degrees. Analogously, we remove all outgoing edges from out-Steiner vertices.

\item[] \hspace{-\leftmargin} {\bf Phase 2: Simple shortcutting.}

\item \label{step:negative-left}
For each negative edge $(r,\bar r)$ and each $\bar u\in \Uin_r$, if $\bar u$ has an
incoming negative edge $(u,\bar u)$, add an edge $(u,\bar r)$ of weight
$w(u,\bar u)+d^0(\bar u,r)+w(r,\bar r)$.

\smallskip
\noindent
\emph{Explanation.}
This handles the first simple case (as illustrated in \cref{fig:simple-1}): two consecutive negative edges can be merged
into a single negative edge. Doing so immediately reduces the hop count without requiring Steiner vertices
or threshold crossings, as analyzed in Case 1 of \cref{lem:shortcut-main}.

\item \label{step:negative-right}
For every negative vertex $r\in N$ and every $v\in \Uout_r$, if $v$ has an outgoing
negative edge $(v,\bar v)$, add an edge $(r,\bar v)$ of weight
$d^1(r,v)+w(v,\bar v)$.

\smallskip
\noindent
\emph{Explanation.}
This is the symmetric simple case on the outgoing side (illustrated in \cref{fig:simple-2}) and is analyzed as Case 2 in \cref{lem:shortcut-main}. After this step, only the hard case remains (illustrated in \cref{fig:hard}), and all steps below are devoted to handling this case. We analyze the correctness of the following steps in Case 3 of \cref{lem:shortcut-main}.\\

\item[] \hspace{-\leftmargin} {\bf Phase 3: Shortcutting via Steiner gadgets.}

\item \label{step:steiner-vertex-in} For every heavy vertex $v\in V$, if $\Ain_v\neq \emptyset$:
        \begin{itemize}
            \item Sort $\Ain_v=\{u_1,u_2,\dots,u_k\}$ so that $w(u_1,v)\geq w(u_2,v)\geq \dots\geq w(u_k,v)$; break ties arbitrarily.
            \item Let $\tau = \lceil \log k\rceil-1$. Create $\tau+1$ Steiner vertices $v_0,v_1,\dots,v_\tau$. We refer to these vertices as \emph{in-Steiner vertices}.
            \item For every $0\leq \ell\leq \tau$:
                \begin{itemize}
                    \item Set $\pin(v_\ell)=v$, we say $\pin(v_\ell)$ is the \emph{in-parent} of $v_\ell$.
                    \item We will later decide whether $v_\ell$ is light or heavy.
                \end{itemize}
            
            \item For every $0\leq \ell\leq \tau$:
                \begin{enumerate}
                    \item \label{step:in-up} Add an edge $(v_\ell,v)$ of weight $w(u_{k-2^{\ell}},v)$.
                    \item \label{step:in-down} For every $1\leq i \leq k-2^\ell$, add an edge $(u_i,v_\ell)$ of weight $w(u_i,v)-w(v_\ell,v)$. If $u_i$ is an out-Steiner vertex, add an edge $(\pout(u_i),v_\ell)$ of weight $w(\pout(u_i),u_i)+w(u_i,v)-w(v_\ell,v)$.
                    \item \label{step:in-r} If $v\in N$, add an edge $(v,v_\ell)$ of weight $-w(v_\ell,v)$.
                    \item \label{step:in-tilde-r} If $v$ is an N-Steiner vertex, add an edge $(\pn(v),v_\ell)$ of weight $w(\pn(v),v)-w(v_\ell,v)$.
                \end{enumerate}
        \end{itemize}

\smallskip
\noindent
\emph{Explanation.}
This step builds an \emph{in-Steiner gadget} for every heavy vertex $v$ in order to
avoid scanning all of its incoming neighbors during shortcut generation.
Specifically, if $v$ has many incoming edges, then when $\scin$ later searches for an
incoming threshold-crossing edge, examining all neighbors of $v$ directly could be
too expensive. To avoid this, we group the incoming neighbors of $v$ by weight and introduce
$\tau+1=O(\log k)$ auxiliary vertices that each represent a geometrically sized
prefix of the sorted list.

After sorting $\Ain_v=\{u_1,\dots,u_k\}$ by decreasing weight, the Steiner
vertex $v_\ell$ represents the block $\{u_1,\dots,u_{k-2^\ell}\}$.
Edges are added so that for every $i\le k-2^\ell$, the two-edge path
$u_i \to v_\ell \to v$ has exactly the same total weight as the original edge $(u_i,v)$.
Thus routing through $v_\ell$ preserves all shortest-path distances. This construction aggregates many threshold crossings through Steiner vertices, so that only a small number of neighbors need to be scanned rather than all $k$ neighbors.

Additional edge is added in Steps~\ref{step:in-r} and \ref{step:in-tilde-r} when $v\in N$ or when $v$ is an N-Steiner vertex. These edges allow one-hop transitions from a negative vertex $r$ to its in-Steiner vertices, or the in-Steiner vertices of its N-Steiner vertices. They help tune the $h$-hop neighborhood of $r$ and are used in the proof of \cref{lem:plus-one-hop}.

\item \label{step:steiner-vertex-out} For every heavy vertex $v\in V$, if $\Aout_v\neq \emptyset$:
        \begin{itemize}
            \item Sort $\Aout_v=\{u_1,u_2,\dots,u_k\}$ so that $w(v,u_1)\geq w(v,u_2)\geq \dots\geq w(v,u_k)$; break ties arbitrarily.
            \item Let $\tau = \lceil \log k\rceil-1$. Create $\tau+1$ Steiner vertices $v_0,v_1,\dots,v_\tau$. We refer to these vertices as \emph{out-Steiner vertices}.
            \item For every $0\leq \ell\leq \tau$:
                \begin{itemize}
                    \item Set $\pout(v_\ell)=v$, we say $\pout(v_\ell)$ is the \emph{out-parent} of $v_\ell$.
                    \item We will later decide whether $v_\ell$ is light or heavy.
                \end{itemize}
            \item For every $0\leq \ell\leq \tau$:
                \begin{enumerate}
                    \item \label{step:out-up} Add an edge $(v,v_\ell)$ of weight $w(v,u_{k-2^{\ell}})$.
                    \item \label{step:out-down} For every $1\leq i \leq k-2^\ell$, add an edge $(v_\ell,u_i)$ of weight $w(v,u_i)-w(v,v_\ell)$. If $u_i$ is an in-Steiner vertex, add an edge $(v_\ell,\pin(u_i))$ of weight $w(v,u_i)+w(u_i,\pin(u_i))-w(v,v_\ell)$.
                    \item \label{step:out-r} If $v\in \bar N$, let $r\in N$ be the unique in-neighbor of $v$. Add an edge $(v_\ell,r)$ of weight $-w(r,v)-w(v,v_\ell)$.
                    \item \label{step:out-tilde-r} If $v$ is an N-Steiner vertex, add an edge $(v_\ell,\pn(v))$ of weight $w(v,\pn(v))-w(v,v_\ell)$.
                \end{enumerate}
        \end{itemize}
\smallskip
\noindent
\emph{Explanation.} This step builds an \emph{out-Steiner gadget} for every heavy vertex. As it is largely symmetric to Step~\ref{step:steiner-vertex-in}, we omit the details.

\item \label{step:tilde-r} For every negative vertex $r\in N$,
    \begin{enumerate}
        \item Create a new vertex $\tilde r_t$. Let $\tilde r_t$ be a heavy vertex. \item Set $\pn(\tilde r_t)=r$, we say $r$ is the \emph{N-parent} of $\tilde r_t$.
        \item \label{step:tilde-r-up} Add an edge $(\tilde r_t,r)$ of weight $\Delta_r$. Add an edge $(r,\tilde r_t)$ of weight $-\Delta_r$.
    \end{enumerate}

\smallskip
\noindent
\emph{Explanation.} This step introduces an N-Steiner vertex $\tilde r_t$ for every negative vertex $r$, which serves as a hub for shortcut paths associated with $r$. We connect $r$ and $\tilde r_t$ with bidirectional edges of weight $\pm \Delta_r$.

\item[] \hspace{-\leftmargin} {\bf Phase 4: Shortcut generation and postprocessing.}

\item \label{step:shortcut} For every negative vertex $r\in N$:
\begin{itemize}
    \item $\scin(r,t)$, $\scout(r,t)$.
    \item If $\Delta_r\leq 0$: For every $v\in \Ain_r$, $\ein(v,\tilde r_t,w(v,r)-\Delta_r)$.
    \item If $\Delta_r\geq -w(r,\bar r)$: For every $v\in \Aout_{\bar r}$, $\eout(\tilde r_t,v,w(r,\bar r)+\Delta_r+w(\bar r,v))$.
\end{itemize}

\smallskip
\noindent
\emph{Explanation.} This step invokes the procedures $\scin(r,t)$ and $\scout(r,t)$ for each negative vertex $r$ to add the shortcut edges needed to bypass the negative edge $(r,\bar r)$. The procedure $\scin(r,t)$ identifies vertices $u$ on the incoming side of shortest paths through $r$ and adds non-negative weight edges $(u,\tilde r_t)$, while $\scout(r,t)$ performs the analogous construction on the outgoing side. The implementation  of these procedures are described and explained in detail below.

We also handle the edge cases where $\Delta_r \leq 0$ or $\Delta_r \geq -w(r,\bar r)$, in which case either $\Uin_r$ or $\Uout_r$ is empty. In particular, if $\Delta_r\leq 0$, then $\Uin_r=\emptyset$, and we would like to add an edge $(r,\tilde r_t)$ with nonnegative weight $-\Delta_r$. However, to preserve the structural property that $r$ has a unique outgoing edge $(r,\bar r)$, we instead add an edge $(v,\tilde r_t)$ for every in-neighbor $v$ of $r$. The case $\Delta_r \geq -w(r,\bar r)$ is treated analogously.

\item \label{step:r'} For every negative vertex $r\in N$, let $W=w(r,\bar r)$, $\wout_r=\min\{w(r,v):v\in \Nout(r)\}$ and $\win_{r}=\min\{0,\min\{w(v,r):v\in \Nin(r)\}\}$.
    \begin{itemize}
        \item For every $v\in \Nin(r)$, let $w(v,r)=w(v,r)-\win_r$.
        \item Let $w(r,\bar r)=\win_r+\wout_r, w(\bar r,r)=-w(r,\bar r)$.
        \item For every $v\in \Nout(\bar r)$, let $w(\bar r,v)=w(\bar r,v)+W-\wout_r$.
        \item For every $v\in \Nout(r)\setminus \{\bar r\}$, add an edge $(\bar r,v)$ of weight $w(r,v)-\wout_r$ and delete the edge $(r,v)$.
    \end{itemize}

\smallskip
\noindent
\emph{Explanation.} After adding the shortcut edges (some of which may be negative), the graph $G$ may no longer be well-behaved. This step restores $G$ to a well-behaved graph. By \cref{clm:negative-incident}, every negative edge must be incident to a negative vertex. Therefore, we may first redirect outgoing edges from $r$ to instead originate from $\bar r$, and then apply appropriate vertex potential to $r$ and $\bar r$, so that $(r,\bar r)$ becomes the unique outgoing edge of $r$, and the only edge incident to $r$ that may have negative weight is $(r,\bar r)$.

\item \label{step:classify-in} For every in-Steiner $v$ created in Step~\ref{step:steiner-vertex-in}: classify $v$ as heavy if $|\Nout(v)|-1\geq\lambda$; classify $v$ as light otherwise.
\item \label{step:classify-out} For every out-Steiner $v$ created in Step~\ref{step:steiner-vertex-out}: classify $v$ as heavy if $|\Nin(v)|-1\geq\lambda$; classify $v$ as light otherwise.

\end{enumerate}
\end{ourbox}

\begin{remark} \label{rmk:r'}
In Step~\ref{step:r'}, for every negative vertex $r\in N$, we effectively replace the edge $(r,v)$ for every $v\in \Nout(r)\setminus \{\bar r\}$ by a length-two path $r\to \bar r\to v$ of the same weight. Henceforth, for simplicity, we may refer to such a construction simply as an $(r,v)$ edge, even though it may in fact correspond to a length-two path.
\end{remark}

\subsubsection*{Procedure $\ein(u,v,W)$}

The input consists vertices $u,v\in V$ and a weight $W$. The vertex $u$ is not a light in-Steiner vertex (in particular, $u$ may be an in-Steiner vertex created in the current iteration that has not yet been classified as light or heavy), and $v$ is either a negative vertex or an N-Steiner vertex. The weight $W$ can be negative; \cref{clm:negative-incident} shows that $W<0$ only if $v$ is a negative vertex.

The primary goal of this procedure is to add an edge $(u,v)$ of weight $W$ into the graph $G$. However, to preserve the invariants in \cref{lem:in-heavy-edge}, this insertion may force the insertion of further edges. In that case, we may recursively call $\ein$ to perform the necessary insertions.

\begin{ourbox}
\begin{itemize}
    \item Add an edge $(u,v)$ of weight $W$.
    \item If $u$ is an in-Steiner vertex:
        \begin{itemize}
            \item If $v\in N$, $\ein(\pin(u),v,W-w(u,\pin(u)))$.
            \item If $v$ is an N-Steiner vertex:
                \begin{itemize}
                    \item Add an edge $(\pin(u),v)$ of weight $W-w(u,\pin(u))$ to $\Fin_t$ (note that it is not added to $G$). If $W-w(u,\pin(u))\geq 0$, $\ein(\pin(u),v,W-w(u,\pin(u)))$.
                    \item $\ein(\pin(u),\pn(v),W-w(\pn(v),v)-w(u,\pin(u)))$.
                \end{itemize}
        \end{itemize}
    \item If $u$ is an out-Steiner vertex:
        \begin{itemize}
            \item $\ein(\pout(u),v,W+w(\pout(u),u))$.
        \end{itemize}
\end{itemize}
\end{ourbox}

For intuition, note that each recursive call requires $u$ to be either an in-Steiner or an out-Steiner vertex, and we recurse with $u$ replaced by $\pin(u)$ or $\pout(u)$. In this case, by \cref{obs:steiner} \ref{obs:pin} and \ref{obs:pout}, if $u$ is a level-$\ell$ vertex, $\pin(u)$ or $\pout(u)$ lies at some level $\ell'<\ell$. Therefore, the level strictly decreases in each recursive step, and the procedure terminates after at most $O(\log n)$ levels of recursion.

\subsubsection*{Procedure $\eout(u,v,W)$}
The input consists vertices $u,v\in V$ and a weight $W$. The vertex $u$ is either a negative vertex or an N-Steiner vertex, and $v$ is not a light out-Steiner vertex (in particular, $v$ may be an out-Steiner vertex created in the current iteration that has not yet been classified as light or heavy). The weight $W$ can be negative; \cref{clm:negative-incident} shows that $W<0$ only if $u$ is a negative vertex.

\begin{ourbox}
\begin{itemize}
    \item Add an edge $(u,v)$ of weight $W$.
    \item If $v$ is an in-Steiner vertex:
        \begin{itemize}
            \item $\eout(u,\pin(v),W+w(v,\pin(v)))$.
        \end{itemize}
    \item If $v$ is an out-Steiner vertex:
        \begin{itemize}
            \item If $u\in N$, $\eout(u,\pout(v),W-w(\pout(v),v))$.
            \item If $u$ is an N-Steiner vertex:
            \begin{itemize}
                \item Add an edge $(u,\pout(v))$ of weight $W-w(\pout(v),v)$ to $\Fout_t$ (note that it is not added to $G$). If $W-w(\pout(v),v)\geq 0$, $\eout(u,\pout(v),W-w(\pout(v),v))$.
                \item $\eout(\pn(u),\pout(v),W-w(u,\pn(u))-w(\pout(v),v))$.
            \end{itemize}
        \end{itemize}
\end{itemize}
\end{ourbox}

\subsubsection*{Procedure $\scin(r,t)$}
To build intuition, we first describe an inefficient brute force implementation of the procedure: For every $u\in \Nin(v)$, add an edge $(u,\tilde r_t)$ of weight $w(u,v)+d_v-\Delta_r$ if it is non-negative. This construction ensures the desired shortcut property as illustrated in \cref{fig:hard}. To obtain a more efficient procedure, we make use of the in-Steiner gadget created earlier for every heavy vertex processed (see Steps~\ref{step:scin-sort} and \ref{step:scin-steiner}). Apart from this optimization, Steps~\ref{step:scin-out}, \ref{step:scin-bf-other} and \ref{step:scin-light-out} follows the same logic as in the brute force implementation.

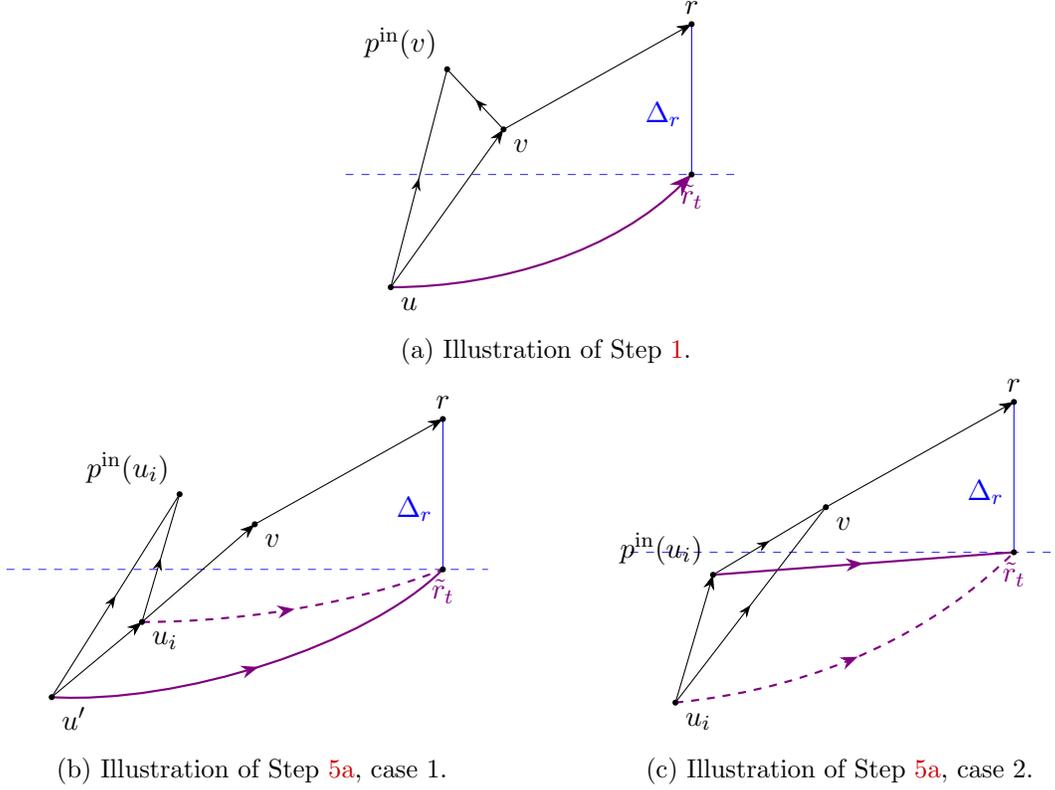
\begin{figure}[h]
    \centering
    \begin{subfigure}[b]{\textwidth}
        \centering
        \begin{tikzpicture}
  \coordinate (r)  at (2.5, 3);
  \coordinate (rb)  at (2.5, 1);
  \coordinate (rc)  at (2.5, 1.8);
  \coordinate (x)  at (0, 1.6);
  \coordinate (y)  at (-1.5, -0.5);
  \coordinate (xin)  at (-0.75, 2.4);

  \draw[blue,thin] (r) -- (rb);
  \node[blue,left] at (rc) {$\Delta_r$};

  \draw[dashedlvl] (-2.1,1) -- (3.1,1);

  \draw[black,thin,-{Stealth[length=2mm]}] (x) -- (r);
  \draw[black,thin,-{Stealth[length=2mm]}] (y) -- (x);
  \draw[black,thin,midarrow] (y) -- (xin);
  \draw[black,thin,midarrow] (x) -- (xin);

  \draw[violet,thick,-{Stealth[length=3mm]}] (y) .. controls (0,-0.5) and (1.5,0) .. (rb);
  
  % Labels
  \node[above]      at (r)  {$r$};
  \node[violet,below]       at (rb)  {$\tilde r_t$};
  \node[below right]      at (x)  {$v$};
  \node[below right]      at (y)  {$u$};
  \node[above left] at (xin)  {$p^{\mathrm{in}}(v)$};

  % Endpoint dots
  \node[dot] at (x)  {};
  \node[dot] at (r)  {};
  \node[dot] at (rb)  {};
  \node[dot] at (y)  {};
  \node[dot] at (xin)  {};
\end{tikzpicture}
        \caption{Illustration of Step~\ref{step:scin-in}.}
        \label{fig:scin-in}
    \end{subfigure}
    \\
    \begin{subfigure}[b]{0.45\textwidth}
        \centering
        \begin{tikzpicture}
  \coordinate (r)  at (2.5, 3);
  \coordinate (rb)  at (2.5, 1);
  \coordinate (rc)  at (2.5, 1.8);
  \coordinate (x)  at (0, 1.6);
  \coordinate (y)  at (-1.5, 0.3);
  \coordinate (yp)  at (-2.7, -0.7);
  \coordinate (yin)  at (-1, 2);

  \draw[blue,thin] (r) -- (rb);
  \node[blue,left] at (rc) {$\Delta_r$};

  \draw[dashedlvl] (-3.3,1) -- (3.1,1);

  \draw[black,thin,-{Stealth[length=2mm]}] (x) -- (r);
  \draw[black,thin,-{Stealth[length=2mm]}] (y) -- (x);
  \draw[black,thin,midarrow] (y) -- (yin);
  \draw[black,thin,midarrow] (yp) -- (yin);
  \draw[black,thin,-{Stealth[length=2mm]}] (yp) -- (y);

  \draw[violet,thick,dashed,midarrow] (y) .. controls (0,0.3) and (1.5,0.6) .. (rb);
  \draw[violet,thick,midarrow] (yp) .. controls (-1,-0.8) and (1.5,0) .. (rb);
  
  % Labels
  \node[above]      at (r)  {$r$};
  \node[violet,below]       at (rb)  {$\tilde r_t$};
  \node[below right]      at (x)  {$v$};
  \node[below right]      at (y)  {$u_i$};
  \node[below right]      at (yp)  {$u'$};
  \node[above left] at (yin)  {$p^{\mathrm{in}}(u_i)$};

  % Endpoint dots
  \node[dot] at (x)  {};
  \node[dot] at (r)  {};
  \node[dot] at (rb)  {};
  \node[dot] at (y)  {};
  \node[dot] at (yin)  {};
  \node[dot] at (yp)  {};
\end{tikzpicture}
        \caption{Illustration of Step~\ref{step:scin-bf-in}, case 1.}
        \label{fig:scin-bf-in-1}
    \end{subfigure}
    ~
    \begin{subfigure}[b]{0.45\textwidth}
        \centering
        \begin{tikzpicture}
  \coordinate (r)  at (2.5, 3);
  \coordinate (rb)  at (2.5, 1);
  \coordinate (rc)  at (2.5, 1.8);
  \coordinate (x)  at (0, 1.6);
  \coordinate (y)  at (-2, -1);
  \coordinate (yin)  at (-1.5, 0.7);

  \draw[blue,thin] (r) -- (rb);
  \node[blue,left] at (rc) {$\Delta_r$};

  \draw[dashedlvl] (-2.6,1) -- (3.1,1);

  \draw[black,thin,-{Stealth[length=2mm]}] (x) -- (r);
  \draw[black,thin,midarrow] (y) -- (x);
  \draw[black,thin,-{Stealth[length=2mm]}] (y) -- (yin);
  \draw[black,thin,midarrow] (yin) -- (x);

  \draw[violet,thick,dashed,midarrow] (y) .. controls (0,-0.8) and (1.5,0) .. (rb);
  \draw[violet,thick,midarrow] (yin) -- (rb);
  
  % Labels
  \node[above]      at (r)  {$r$};
  \node[violet,below]       at (rb)  {$\tilde r_t$};
  \node[below right]      at (x)  {$v$};
  \node[below right]      at (y)  {$u_i$};
  \node[above left] at (yin)  {$p^{\mathrm{in}}(u_i)$};

  % Endpoint dots
  \node[dot] at (x)  {};
  \node[dot] at (r)  {};
  \node[dot] at (rb)  {};
  \node[dot] at (y)  {};
  \node[dot] at (yin)  {};
\end{tikzpicture}
        \caption{Illustration of Step~\ref{step:scin-bf-in}, case 2.}
        \label{fig:scin-bf-in-2}
    \end{subfigure}
    \caption{Illustration of some steps in the procedure $\scin(r,t)$.}
    \label{fig:scin}
\end{figure}

Another objective is to avoid increasing the out-degree of light in-Steiner vertices, which is crucial for controlling the total number of vertices in the graph. To achieve this, we depart from the brute-force implementation in Steps~\ref{step:scin-in} and \ref{step:scin-bf-in}: instead of adding shortcut edges, they may introduce a new vertex for further consideration. These steps are illustrated in \cref{fig:scin}.

In Step~\ref{step:scin-in}, if $v$ is an in-Steiner vertex, we update $d_{\pin(v)}$ and do not process the in-neighbors of $v$ in later steps. As illustrated in \cref{fig:scin-in}, suppose that $u$ is an in-neighbor of $v$. Intuitively, we should add an edge $(u,\tilde r_t)$. We can safely not process the in-neighbors of $v$ because $\pin(v)$ is inserted into $S_{\ell'}$ and will be processed later. When $\pin(v)$ is processed, \cref{lem:in-heavy-edge}~\ref{item:in-1} guarantees that $u$ is an in-neighbor of $\pin(v)$. Therefore, the task of connecting $u$ to $\tilde r_t$ is effectively postponed until $\pin(v)$ is processed.

In Step~\ref{step:scin-bf-in}, there are two cases:
\begin{itemize}
    \item $W=w(u_i,v)+d_v-w(u_i,\pin(u_i))<\Delta_r$ (see \cref{fig:scin-bf-in-1}): In this case, the vertex $\pin(u_i)$ is ``above" the $\Delta_r$ line. Since $(v,u_i)$ crosses the $\Delta_r$ line, intuitively, we should add an edge $(u_i,\tilde r_t)$ (shown in dashed line). However, since $u_i$ is a light in-Steiner vertex, we hope to avoid increasing its out-degree. Recall that the purpose of adding the edge $(u_i,\tilde r_t)$ is to shortcut some shortest path $P$ on which this edge lies. Let $u'$ denote the predecessor of $u_i$ on the path $P$. Then by \cref{lem:in-heavy-edge}~\ref{item:in-1}, $u'$ is an in-neighbor of $\pin(u_i)$. Since $\pin(u_i)$ is inserted into $S_{\ell'}$, it will be processed later. When $\pin(u_i)$ is processed, since $u'$ is an in-neighbor of $\pin(u_i)$, we will consider adding an edge $(u',\tilde r_t)$, which achieves the same shortcutting effect as the edge $(u_i,\tilde r_t)$.
    \item $W\geq \Delta_r$ (see \cref{fig:scin-bf-in-2}): In this case, the vertex $\pin(u_i)$ is ``below" the $\Delta_r$ line. Then we can show that by \cref{lem:in-heavy-edge}~\ref{item:in-3} and \cref{obs:f}~\ref{obs:fin}, the edge $(\pin(u_i),v)$ must exist in the graph. Since $\pin(u_i)$ is an in-neighbor of $v$, the algorithm connects $\pin(u_i)$ to $\tilde r_t$, which achieves the same shortcutting effect as the edge $(u_i,\tilde r_t)$.
\end{itemize}

Moreover, in Steps~\ref{step:scin-in} and \ref{step:scin-bf-in}, every newly introduced vertex ($\pin(v)$ in Step~\ref{step:scin-in} and $\pin(u_i)$ in Step~\ref{step:scin-bf-in}) is guaranteed to lie at a strictly lower level than the currently processed vertex $v$. This is proved formally in \cref{lem:scin-dist-valid}. Hence, the procedure avoids circular insertions and repeated processing of a vertex.\\

Now we describe the procedure $\scin(r,t)$.\\

\begin{ourbox}
\begin{itemize}
    \item Initialize empty sets $S_0,S_1,\dots,S_t$. For every $v\in V$, initialize $d_v=\infty$.
    \item For every $v\in \Uin_r$: Let $d_v=d^0(v,r)$, and insert $v$ into $S_\ell$ if $v$ is a level-$\ell$ vertex.
    \item For $\ell=\{t,t-1,\dots,0\}$: For every vertex $v\in S_{\ell}$: 
        \begin{itemize}
            \item If $v$ is an in-Steiner vertex:
                \begin{enumerate}
                    \item \label{step:scin-in} Update $d_{\pin(v)}=\min\{d_{\pin(v)},d_v-w(v,\pin(v))\}$, and insert $\pin(v)$ into $S_{\ell'}$ if $\pin(v)$ is a level-$\ell'$ vertex.
                \end{enumerate}
            \item If $v$ is an out-Steiner vertex:
                \begin{enumerate}[start=2]
                    \item \label{step:scin-out} Let $W=w(\pout(v),v)+d_v-\Delta_r$. If $W\geq 0$, $\ein(\pout(v),\tilde r_t,W)$.
                \end{enumerate}
            \item If $v$ is not an in-Steiner vertex, and $v$ is heavy:
                \begin{enumerate}[start=3]
                    \item \label{step:scin-sort} Recall that we have $\Ain_v=\{u_1,\dots,u_k\}$ sorted in non-increasing order of $w(u_i,v)$. Let $j$ be the smallest index such that $w(u_j,v)+d_v-\Delta_r<0$; set $j=k$ if no such index exists.
                    \item \label{step:scin-steiner} Let $\ell=\lceil \log (k-j+1)\rceil$. If $\ell\leq \lceil \log k\rceil-1$: $\ein(v_\ell,\tilde r_t,w(v_\ell,v)+d_v-\Delta_r)$.
                    \item \label{step:scin-bf} For every $\max \{k-2^{\ell}+1,1\}\leq i\leq k$:
                        \begin{enumerate}
                            \item \label{step:scin-bf-in} If $u_i$ is a light in-Steiner vertex: Let $W=w(u_i,v)+d_v-w(u_i,\pin(u_i))$. If $W<\Delta_r$, update $d_{\pin(u_i)}=\min\{d_{\pin(u_i)},W\}$, and insert $\pin(u_i)$ into $S_{\ell'}$ if $\pin(u_i)$ is a level-$\ell'$ vertex.
                            \item \label{step:scin-bf-other} Otherwise, $u_i$ is not a light in-Steiner vertex: Let $W=w(u_i,v)+d_v-\Delta_r$. If $W\geq 0$, $\ein(u_i,\tilde r_t,W)$.
                        \end{enumerate}
                \end{enumerate}
            \item If $v$ is not an in-Steiner vertex, and $v$ is light:
                \begin{enumerate} [start=6]
                    \item \label{step:scin-light-out} For every $u\in \Ain_v$: Let $W=w(u,v)+d_v-\Delta_r$. If $W\geq 0$, $\ein(u,\tilde r_t,W)$.
                \end{enumerate}
        \end{itemize}
\end{itemize}
\end{ourbox}

\subsubsection*{Procedure $\scout(r,t)$}
This procedure adds outgoing shortcut edges from $\tilde r_t$. As it is largely symmetric to $\scin(r,t)$, we omit the explanation here.

\begin{ourbox}
\begin{itemize}
    \item Initialize empty sets $S_0,S_1,\dots,S_t$. For every $v\in V$, initialize $d_v=\infty$.
    \item For every $v\in \Uout_r$: Let $d_v=d^1(r,v)$, and insert $v$ into $S_\ell$ if $v$ is a level-$\ell$ vertex.
    
    \item For $\ell=\{t,t-1,\dots,0\}$: For every vertex $v\in S_{\ell}$:
        \begin{itemize}
            \item If $v$ is an out-Steiner vertex:
                \begin{enumerate}
                    \item \label{step:scout-out} Update $d_{\pout(v)}=\min\{d_{\pout(v)},d_v-w(\pout(v),v)\}$, and insert $\pout(v)$ into $S_{\ell'}$ if $\pout(v)$ is a level-$\ell'$ vertex.
                \end{enumerate}
            \item If $v$ is an in-Steiner vertex:
                \begin{enumerate}[start=2]
                    \item \label{step:scout-in} Let $W=d_v+w(v,\pin(v))+\Delta_r$. If $W\geq 0$, $\eout(\tilde r_t,\pin(v),W)$.
                \end{enumerate}
            \item If $v$ is not an out-Steiner vertex, and $v$ is heavy:
                \begin{enumerate}[start=3]
                    \item \label{step:scout-sort} Recall that we have $\Aout_v=\{u_1,\dots,u_k\}$ sorted in non-increasing order of $w(v,u_i)$. Let $j$ be the smallest index such that $d_v+w(v,u_j)+\Delta_r<0$; set $j=k$ if no such index exists.
                    \item \label{step:scout-steiner} Let $\ell=\lceil \log (k-j+1)\rceil$. If $\ell\leq \lceil \log k\rceil-1$: $\eout(\tilde r_t,v_\ell,d_v+w(v,v_\ell)+\Delta_r)$.
                    \item \label{step:scout-bf} For every $\max \{k-2^{\ell}+1,1\}\leq i\leq k$:
                        \begin{enumerate}
                            \item \label{step:scout-bf-out} If $u_i$ is a light out-Steiner vertex: Let $W=d_v+w(v,u_i)-w(\pout(u_i),u_i)$. If $W<-\Delta_r$, update $d_{\pout(u_i)}=\min\{d_{\pout(u_i)},W\}$, and insert $\pout(u_i)$ into $S_{\ell'}$ if $\pout(u_i)$ is a level-$\ell'$ vertex.
                            \item \label{step:scout-bf-other} Otherwise, $u_i$ is not a light out-Steiner vertex: Let $W=d_v+w(v,u_i)+\Delta_r$. If $W\geq 0$, $\eout(\tilde r_t,u_i,W)$.
                        \end{enumerate}
                \end{enumerate}
            \item If $v$ is not an out-Steiner vertex, and $v$ is light:
                \begin{enumerate}[start=6]
                    \item \label{step:scout-light-in} For every $u\in \Aout_v$: Let $W=d_v+w(v,u)+\Delta_r$. If $W\geq 0$, $\eout(\tilde r_t,u,W)$.
                \end{enumerate}
        \end{itemize}
\end{itemize}
\end{ourbox}

\subsection{Correctness and Runtime Analysis}

We now establish the correctness and runtime of various steps in the procedure $\shortcut$, laying the groundwork for bounding the overall runtime of our single-source shortest path algorithm in \Cref{sec:runtime}. The analysis is organized as follows.

\begin{itemize}
    \item In \Cref{sec:structural_invariants}, we establish \emph{local structural invariants} that characterize the parent relationships of Steiner vertices, the behavior of
    deferred edges, and the precise form of the neighborhoods of in/out-Steiner vertices.
    These properties ensure that shortcutting does not result in uncontrolled degree growth
    and that all negative edges remain incident at the original negative vertices.

    \item In \Cref{sec:scin}, we analyze the procedures $\scin$ and $\scout$ directly and
    prove bounds on the work they perform per negative vertex.

    \item In \Cref{sec:overall}, we combine these ingredients to prove several global guarantees:
    bounding the growth in the number of vertices, showing that shortcut edges do not decrease distances, and that the hop distance is reduced by a constant factor in each iteration.

    \item In \cref{sec:runtime}, we bound the total number of Steiner vertices created, and analyze the total work performed by the single-source shortest path algorithm.
\end{itemize}

\subsubsection{Local Structural Invariants}
\label{sec:structural_invariants}

\begin{observation}[Properties of in-Steiner and out-Steiner vertices] \label{obs:steiner}
\noindent
\begin{enumerate}[label=(\alph*)]
    \item \label{obs:lower-level-in} If $v$ is a level-$t$ in-Steiner vertex, then its parent $\pin(v)$ is a level-$\ell$ vertex for some $\ell<t$.
    \item \label{obs:lower-level-out} If $v$ is a level-$t$ out-Steiner vertex, then its parent $\pout(v)$ is a level-$\ell$ vertex for some $\ell<t$.
    \item \label{obs:pin} An in-Steiner vertex $v$ can never be the in-parent of another in-Steiner vertex $u$.
    \item \label{obs:pout} An out-Steiner vertex $v$ can never be the out-parent of another out-Steiner vertex $u$.
\end{enumerate}
\end{observation}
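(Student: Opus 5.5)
The plan is to read all four items directly off the construction in Steps~\ref{step:steiner-vertex-in} and \ref{step:steiner-vertex-out}. These are the only places where in-Steiner and out-Steiner vertices are created and where $\pin$ and $\pout$ are assigned. I will also use the fact that parent pointers are never changed afterwards: none of $\ein$, $\eout$, $\scin$, $\scout$ or Step~\ref{step:r'} modifies $\pin$ or $\pout$.

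For \ref{obs:lower-level-in}, let $v$ be a level-$t$ in-Steiner vertex. By definition of level, $v$ was created during $\shortcut(G,t)$, in Step~\ref{step:steiner-vertex-in}, as one of the gadget vertices $v_\ell$ of some heavy vertex $x$, and $\pin(v)=x$. The loop in Step~\ref{step:steiner-vertex-in} ranges over heavy vertices of the current graph $V$ as it exists at that moment.
\begin{itemize}
\item Every such $x$ was created either in an earlier iteration, or in the current iteration before Step~\ref{step:steiner-vertex-in}.
\item Within iteration $t$, the only vertex-creating steps are Steps~\ref{step:steiner-vertex-in}, \ref{step:steiner-vertex-out} and \ref{step:tilde-r}.
\item Step~\ref{step:tilde-r} comes after Step~\ref{step:steiner-vertex-in}, so no N-Steiner vertex $\tilde r_t$ can be a parent here.
\item Vertices created inside Step~\ref{step:steiner-vertex-in} itself are not iterated over. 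Equivalently, the loop runs over the set $V$ fixed at the beginning of the step, and the newly created $v_\ell$ have not yet been classified heavy; classification happens only in Step~\ref{step:classify-in}.
\item Out-Steiner vertices of level $t$ are created in Step~\ref{step:steiner-vertex-out}, which follows Step~\ref{step:steiner-vertex-in}.
\end{itemize}
Hence $x$ has level $\ell<t$.

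For \ref{obs:lower-level-out} the argument is the same, except that in Step~\ref{step:steiner-vertex-out} the loop could a priori see the level-$t$ in-Steiner vertices just created in Step~\ref{step:steiner-vertex-in}. These are not yet classified (classification happens only in Steps~\ref{step:classify-in}/\ref{step:classify-out}), so they are not heavy and are skipped. The step where I expect to need the most care is exactly this point: I must make precise that ``heavy vertex $v\in V$'' in Steps~\ref{step:steiner-vertex-in}--\ref{step:steiner-vertex-out} refers only to vertices whose status is already heavy, namely original vertices, earlier N-Steiner vertices, and Steiner vertices classified heavy in earlier iterations. Unclassified vertices do not count as heavy.

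For \ref{obs:pin}, suppose an in-Steiner vertex $v$ were the in-parent of another in-Steiner vertex $u$. Then $u$ was produced by running Step~\ref{step:steiner-vertex-in} on $v$, so $v$ was heavy with $\Ain_v\neq\emptyset$ at that time. But Step~\ref{step:delete} sets $\Ain_v=\emptyset$ for every in-Steiner vertex $v$, and Step~\ref{step:steiner-vertex-in} always follows Step~\ref{step:delete} within the same iteration. So no in-Steiner gadget is ever built on an in-Steiner vertex, a contradiction. Item \ref{obs:pout} is symmetric, using $\Aout_v=\emptyset$ for out-Steiner vertices in Step~\ref{step:delete} together with the guard ``if $\Aout_v\neq\emptyset$'' in Step~\ref{step:steiner-vertex-out}.
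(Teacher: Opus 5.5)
Your proposal is correct and follows essentially the same route as the paper. For (a) and (b), the parent must already exist when Steps~\ref{step:steiner-vertex-in} and \ref{step:steiner-vertex-out} run, before any level-$t$ vertex is created or classified. For (c) and (d), Step~\ref{step:delete} empties $\Ain_v$ (resp.\ $\Aout_v$) for in-Steiner (resp.\ out-Steiner) vertices, while gadget construction requires that list to be nonempty. Your extra care that unclassified level-$t$ in-Steiner vertices are not heavy during Step~\ref{step:steiner-vertex-out} only makes explicit what the paper's short argument leaves implicit.
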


\begin{proof}
For \cref{obs:lower-level-in}, observe that whenever a level-$t$ in-Steiner vertex $v$ is created in Step~\ref{step:steiner-vertex-in}, its in-parent $\pin(v)$ must be a vertex that already exists in the graph prior to the $t$-th iteration. Therefore, $\pin(v)$ is a level-$\ell$ vertex for some $\ell < t$. The argument for \cref{obs:lower-level-out} is analogous.

For \cref{obs:pin}, observe that in Step~\ref{step:steiner-vertex-in}, for a heavy vertex $v$, we create in-Steiner vertices $u$ with $\pin(u)=v$ only when $\Ain_v\neq \emptyset$. However, if $v$ is an in-Steiner vertex, then in Step~\ref{step:delete} we set $\Ain_v=\emptyset$. Therefore, an in-Steiner vertex can never be the in-parent of any vertex. The argument for \cref{obs:pout} is analogous.
\end{proof}

In Step~\ref{step:f-in-f-out}, we distinguish between $\Fin_{t-1},\Fout_{t-1}$ and $\Fin_t,\Fout_t$ to ensure that the algorithm does not consider the edges newly added during the procedures $\ein$ and $\eout$. In other steps, for simplicity, we use $\Fin$ (resp. $\Fout$) to denote the set $\Fin_t$ (resp. $\Fout_t$) at the current iteration $t$.

\begin{observation} [Insertion of deferred edges] \label{obs:f}
\noindent
\begin{enumerate}[label=(\alph*)]
    \item \label{obs:fin} For an edge $e\in \Fin_t$, if its current weight $w(e)\geq 0$, then $e$ is added to $G$.
    \item \label{obs:fout} For an edge $e\in \Fout_t$, if its current weight $w(e)\geq 0$, then $e$ is added to $G$.
\end{enumerate}
\end{observation}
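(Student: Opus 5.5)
The proof goes by tracking every edge of $\Fin_t$ from the moment it enters the set until the current time, and checking that whenever its weight is non-negative it has already been inserted into $G$. Part~\ref{obs:fout} is symmetric, with $\eout$ and Step~\ref{step:f-out} in place of $\ein$ and Step~\ref{step:f-in}, so I concentrate on part~\ref{obs:fin}. The statement has to be read with ``current weight'' meaning the weight under the cumulative reweighting. Edges of $\Fin$ are stored but not in $G$, so I take their weight to be transformed by each new potential exactly as Step~\ref{step:f-in} does, via $w(e)\gets w_\varphi(e)$. I will also check that the reweighting in Step~\ref{step:r'} either leaves $\Fin$ edges unaffected or preserves the sign of their weights.

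There are only two ways an edge $e=(u,v)$ enters or persists in $\Fin_t$.

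\emph{Inserted during iteration $t$.} The edge is added inside $\ein$, in the branch where $u$ is in-Steiner and $v$ is N-Steiner, with weight $W'=W-w(u,\pin(u))$. The very next line calls $\ein(\pin(u),v,W')$ precisely when $W'\ge 0$, and the first action of $\ein$ adds the edge to $G$. So at insertion time the claim holds.

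\emph{Carried over from $\Fin_{t-1}$.} The edge is reweighted by $\varphi$ in Step~\ref{step:f-in} and inserted via $\ein$ exactly when $w_\varphi(e)\ge 0$. I need to check that this call to $\ein$ is legal. The first endpoint is $\pin(u)$, which is a regular vertex (in-parents are never in-Steiner by Observation~\ref{obs:steiner}\ref{obs:pin}) or an out-Steiner vertex, and in either case it is not a light in-Steiner vertex. The second endpoint is an N-Steiner vertex. Both conditions required by $\ein$ are therefore met.

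Between these events the weight of $e$ must not change again within the iteration. The only further reweighting inside an iteration is Step~\ref{step:r'}, which only changes edges incident to $r$ or $\bar r$ for $r\in N$. Edges of $\Fin$ have the form $(\pin(\cdot),\tilde r_s)$, whose head is an N-Steiner vertex. The tail could be $r$ itself only if $r$ were the in-parent of an in-Steiner vertex; in that case I argue that Step~\ref{step:r'} modifies only edges into $r$ and out of $r,\bar r$, so an edge $r\to\tilde r_s$ in $\Fin$ would fall under ``$\Nout(r)$''. I will handle this by the convention of Remark~\ref{rmk:r'}, treating the redirected edge as the same edge with the same weight.

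An induction over iterations then finishes the argument. At each moment, every $e\in\Fin_t$ with $w(e)\ge0$ has been added in one of the two situations above. Edges are never deleted from $G$ except the redirected $(r,v)$ edges of Step~\ref{step:r'}, which are replaced by equivalent paths.

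I expect the main obstacle to be this bookkeeping of weights across potentials. Once it is settled, the observation is essentially immediate from the code of $\ein$, $\eout$, and Step~\ref{step:f-in-f-out}.
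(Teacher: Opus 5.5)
Your proposal is correct and follows the paper's proof. Edges carried over from $\Fin_{t-1}$ are reweighted and inserted via $\ein$ in Step~\ref{step:f-in} exactly when non-negative, and any edge newly placed in $\Fin_t$ inside $\ein$ is inserted at once by the recursive call when its weight is non-negative. Your check that these calls are legal is a harmless addition.

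Your extra Step~\ref{step:r'} bookkeeping, which the paper does not attempt, is unnecessary: the observation is only invoked for the graph while $\scin$/$\scout$ run in Step~\ref{step:shortcut}, and any later change is re-examined in Step~\ref{step:f-in-f-out} of the next iteration. Your sign-preservation remark there is also not quite right. $\Fin$ edges can have tail $\bar r$, and Step~\ref{step:r'} increases the outgoing weights of $\bar r$ by $w(r,\bar r)-\wout_r\ge 0$, which can turn a negative weight non-negative.
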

\begin{proof}
We prove \cref{obs:fin}; the proof of \cref{obs:fout} is analogous. In Step~\ref{step:f-in-f-out}, we begin by initializing $\Fin_t=\Fin_{t-1}$. For every edge $e \in \Fin_{t-1}$ with current weight $w(e) \geq 0$, it is inserted into $G$ via a call to $\ein$. Moreover, observe that the execution of $\ein$ may add new edges to $\Fin_t$. For each such newly added edge, if it has non-negative weight, the procedure recursively invokes $\ein$ and inserts it into $G$.

\end{proof}

As described in Step~\ref{step:delete}, the edges excluded from $\{\Ain_v, \Aout_v\}_{v \in V}$ are summarized in the following observation: these are precisely the parent-child edges inside the Steiner gadgets created by the algorithm.

\begin{observation} [Edges excluded from auxiliary lists $\{\Ain_v,\Aout_v\}_{v\in V}$]\label{obs:list}
\noindent
\begin{enumerate}[label=(\alph*)]
    \item \label{obs:in-list} Let $v \in V$ and $u \in \Nin(v) \setminus \Ain_v$ with $w(u,v)\geq 0$. Then one of the following holds:
    \begin{itemize}
    \item $u$ is an in-Steiner vertex with $\pin(u) = v$.
    \item $v$ is an in-Steiner vertex.
    \item $v$ is an out-Steiner vertex with $\pout(v)=u$.
    \end{itemize}
    \item \label{obs:out-list} Let $v \in V$ and $u \in \Nout(v) \setminus \Aout_v$ with $w(u,v)\geq 0$. Then one of the following holds:
    \begin{itemize}
    \item $u$ is an out-Steiner vertex with $\pout(u) = v$.
    \item $v$ is an out-Steiner vertex.
    \item $v$ is an in-Steiner vertex with $\pin(v)=u$.
    \end{itemize}
\end{enumerate}
\end{observation}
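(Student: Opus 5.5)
This is a direct unfolding of the definition of $\Ain_v$ and $\Aout_v$ in Step~\ref{step:delete}. We prove part~\ref{obs:in-list}; part~\ref{obs:out-list} is the mirror image, obtained by swapping in/out, in-Steiner/out-Steiner and $\pin/\pout$.

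Fix $v\in V$ and $u\in\Nin(v)$ with $w(u,v)\ge 0$ and $u\notin\Ain_v$. By the first bullet of Step~\ref{step:delete}, $u$ was placed into the initial list $\{u'\in\Nin(v):w(u',v)\ge 0\}$. The lists are formed at Step~\ref{step:delete}, so $\Nin(v)$ and $w$ are read at that moment. So $u$ must have been removed by one of the three later bullets, and we check them one at a time.

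\begin{itemize}
\item The second bullet removes from $\Ain_v$ only an in-Steiner vertex whose parent is $v$. This gives the first alternative: $u$ is in-Steiner with $\pin(u)=v$.
\item The third bullet, applied to $v$ itself, empties $\Ain_v$ only when $v$ is an in-Steiner vertex. This gives the second alternative. When the third bullet is applied to some other vertex $v'$, it changes only $\Ain_{v'}$ and $\Aout_{v'}$, never $\Ain_v$.
\item The fourth bullet, applied to $v$, removes $\pout(v)$ from $\Ain_v$ when $v$ is out-Steiner. This gives the third alternative: $u=\pout(v)$.
\end{itemize}

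No other operation removes elements from $\Ain_v$. In particular, the bullets applied to a vertex $v'$ modify only $\Ain_{v'}$ and $\Aout_{v'}$. This exhausts the cases.

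The only point requiring care is interpretive. The observation concerns $\Nin(v)$ as it stood when Step~\ref{step:delete} was executed, since edges added afterward in Phases 2--4 are by design never in $\Ain_v$. We will state this explicitly, noting that the lists are frozen at Step~\ref{step:delete}. Beyond that, there is no real obstacle.
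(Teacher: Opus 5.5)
Your proof is correct and is essentially the paper's own argument. The paper just says the observation follows immediately from the description of Step~\ref{step:delete}, and your bullet-by-bullet check of the three removal rules, together with the explicit note that $\Ain_v$ and $\Aout_v$ are frozen at that step, is exactly that unfolding.
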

\begin{proof}
It follows immediately from the description of Step~\ref{step:delete}.
\end{proof}

\begin{claim} [Running time of $\ein$ and $\eout$] \label{clm:addedge-time}
\noindent
Each call to procedures $\ein(u,v,W)$ and $\eout(u,v,W)$ takes $O(L^2)=\tilde O(1)$.
\end{claim}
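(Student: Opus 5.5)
The plan is to analyze the recursion tree of a single call and bound it by its depth and its width. Each call does $O(1)$ local work. It adds one edge to $G$, and possibly one edge to $\Fin_t$ or $\Fout_t$. It looks up a constant number of stored quantities: $\pin(u)$, $\pout(u)$, $\pn(v)$, and the weights $w(u,\pin(u))$, $w(\pout(u),u)$, $w(\pn(v),v)$. These are fixed parent-child edges recorded when the vertices were created, so if we store them with the vertex, each lookup is $O(1)$. It then makes at most two recursive calls. The total cost is therefore proportional to the number of nodes in the recursion tree, and it suffices to show that this tree has $O(L^2)$ nodes.

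\textbf{Depth (first argument).} We treat $\ein$; $\eout$ is symmetric. We track the level of the first argument $u$. Every recursive call replaces $u$ by $\pin(u)$ or $\pout(u)$, and by \cref{obs:steiner}~\ref{obs:lower-level-in} and \ref{obs:lower-level-out} this parent has strictly smaller level. Levels lie in $\{0,\dots,t\}\subseteq\{0,\dots,L\}$, so every root-to-leaf path has length at most $L+1$. This alone gives depth $O(L)$. It does not give $O(L^2)$ nodes, because branching happens in the case where $u$ is an in-Steiner vertex and $v$ is an N-Steiner vertex.

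\textbf{Width (second argument).} To bound the number of nodes, we observe what happens to the second argument $v$. It is either unchanged, or replaced by $\pn(v)\in N$, and this happens only in the branching case. Once $v\in N$, every later call keeps $v$ in $N$ and never branches again: the in-Steiner branch with $v\in N$ makes one call, and the out-Steiner branch makes one call. So along any root-to-leaf path, the second argument switches from an N-Steiner vertex to a negative vertex at most once. Consequently the tree consists of a "spine" of calls with $v=\tilde r_t$ fixed, which is a single path of length at most $L+1$ since those calls only branch off. From each spine node hangs at most one non-branching chain with $v=\pn(v)$, also of length at most $L+1$. The total node count is $O(L)\cdot O(L)=O(L^2)$. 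With $L=O(\log n_0)$, this is $\tilde O(1)$.

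\textbf{Main obstacle.} The step needing care is justifying that each individual call is $O(1)$. Specifically, the weights $w(u,\pin(u))$ etc. must be retrievable in constant time, and they must be well defined at the moment of the call even though Step~\ref{step:r'} later reweights edges around $r$. My plan is to store these gadget weights in the vertex record at creation time and update them whenever potentials or Step~\ref{step:r'} change them. Charging that maintenance to the reweighting steps, rather than to $\ein$/$\eout$, avoids any extra dependence on degree. Also, the insertion into $\Fin_t$ together with the conditional extra call happens in only one branch, which yields at most two recursive children per node. This is consistent with the spine-plus-chains count above.
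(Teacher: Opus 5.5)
Your proposal is correct and follows the paper's own argument. The recursion depth is at most $L$ because the level of the argument being replaced by $\pin(\cdot)$ or $\pout(\cdot)$ strictly decreases. Branching occurs only while the other argument is an N-Steiner vertex, and that argument can switch to a negative vertex at most once, which gives $O(L)\cdot O(L)=O(L^2)$ calls. Your spine-plus-chains count and your remarks on constant-time weight lookups spell out details the paper leaves implicit; they are harmless.
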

\begin{proof}
We will bound here the runtime of $\ein$, the proof of $\eout$ runtime follows analogously. Consider the procedure $\ein(u,v,W)$, where $u$ is a level-$\ell$ vertex. Observe that any recursive call to $\ein$ must satisfy one of the following: either $u$ is an in-Steiner vertex, in which case the first parameter becomes $\pin(u)$, or $u$ is an out-Steiner vertex, in which case the first parameter becomes $\pout(u)$. In either case, \cref{obs:steiner}~\ref{obs:lower-level-in} and \ref{obs:lower-level-out}, $\pin(v)$ or $\pout(v)$ is a level-$\ell'$ vertex for some $\ell'<\ell$. Therefore, in every recursive call, the vertex in the first parameter strictly decreases in level, so the recursion depth is bounded by $t\leq L$.

We now analyze the second parameter $v$, which is either a negative vertex or an N-Steiner vertex. If $v \in N$, then at most one recursive call is made and the second parameter remains unchanged. Therefore, when $v \in N$, the running time of $\ein(u,v,W)$ is bounded by $O(L)$. If instead $v$ is an N-Steiner vertex, the procedure may make one recursive call with the second parameter unchanged and another recursive call in which the second parameter becomes $\pn(v)$, which is a negative vertex. In this case, since the recursion depth is bounded by $t\leq L$, the running time of $\ein(u,v,W)$ is bounded by $O(L^2)$.

\end{proof}

Now we analyze the edges added by the algorithm. In Steps~\ref{step:f-in}, \ref{step:f-out} and \ref{step:shortcut}, the algorithm may add edges into $G$ via procedures $\ein$ and $\eout$. In addition, edges are added in Steps~\ref{step:negative-left}, \ref{step:negative-right}, \ref{step:in-up}, \ref{step:in-down}, \ref{step:in-r}, \ref{step:in-tilde-r}, \ref{step:out-up}, \ref{step:out-down}, \ref{step:out-r}, \ref{step:out-tilde-r}, \ref{step:tilde-r-up}, \ref{step:r'} of $\shortcut(G,t)$. During the execution of $\scin$, edges are added via the procedure $\ein$, which is called in Steps~\ref{step:scin-out}, \ref{step:scin-steiner}, \ref{step:scin-bf-other}, \ref{step:scin-light-out}. During the execution of $\scout$, edges are added via the procedure $\eout$, which is called in Steps~\ref{step:scout-in}, \ref{step:scout-steiner}, \ref{step:scout-bf-other}, \ref{step:scout-light-in}.

\begin{claim} \label{clm:negative-incident}
Before Step~\ref{step:r'} of $\shortcut(G,t)$, every negative edge in $G$ is incident to a negative vertex in $N$.
\end{claim}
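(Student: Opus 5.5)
We prove the statement by induction on the iterations $t$, and within an iteration by going through every step that inserts an edge before Step~\ref{step:r'}. The invariant we carry across iterations is: \emph{at the start of $\shortcut(G,t)$, every edge not incident to a vertex of $N$ has non-negative weight.}

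\textbf{Base case and Phase 1.} For $t=1$ the invariant holds because $G$ is well-behaved, and its only negative edges are the $(r,\bar r)$. For later iterations it comes from the end of the previous iteration. There, Step~\ref{step:r'} modifies only edges incident to $r$ or $\bar r$: it only adds a constant to edges out of $\bar r$, and we will need that these weights do not become negative. Combined with the claim for iteration $t-1$, this gives the invariant. Step~\ref{step:betweenness-reduction} applies a valid potential, which by definition keeps non-negative edges non-negative, so the invariant survives reweighting. Step~\ref{step:f-in-f-out} only calls $\ein/\eout$ on edges of weight $w(e)\ge 0$.

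\textbf{Direct insertion steps.} We check each one separately.
\begin{itemize}
\item Steps~\ref{step:negative-left} and~\ref{step:negative-right} add $(u,\bar r)$ and $(r,\bar v)$. Here $u$ has an outgoing negative edge, so $u\in N$, and $r\in N$.
\item Step~\ref{step:tilde-r-up} adds two edges incident to $r$.
\item Steps~\ref{step:in-r} and~\ref{step:in-tilde-r} add edges out of $v\in N$ or out of $\pn(v)\in N$. Steps~\ref{step:out-r} and~\ref{step:out-tilde-r} are symmetric.
\item Step~\ref{step:in-up}: the weight is $w(u_{k-2^\ell},v)\ge 0$, because $\Ain_v$ keeps only non-negative edges.
\item Step~\ref{step:in-down}: the weight $w(u_i,v)-w(u_{k-2^\ell},v)$ is non-negative by the sorted order, since $i\le k-2^\ell$. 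For the extra edge from $\pout(u_i)$, we additionally need $w(\pout(u_i),u_i)\ge 0$. This edge was created in some Step~\ref{step:out-up} with non-negative weight. Its tail $\pout(u_i)$ could be in $N$, but then the inserted edge is incident to $N$ and we are done anyway. Otherwise the edge is not incident to $N$, and the invariant applies.
\end{itemize}
The out-gadget steps are symmetric.

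\textbf{Edges inserted by $\ein/\eout$.} We prove an auxiliary statement: \emph{whenever $\ein(u,v,W)$ is called with $v$ an N-Steiner vertex, $W\ge0$.} Given this, every edge inserted by $\ein$ either ends at $v\in N$ (so it is incident to $N$) or is non-negative. We establish the auxiliary statement first for top-level calls and then for recursive calls.
\begin{itemize}
\item Steps~\ref{step:scin-out}, \ref{step:scin-bf-other} and~\ref{step:scin-light-out} test $W\ge 0$ explicitly.
\item Step~\ref{step:scin-steiner}: by the choice $\ell=\lceil\log(k-j+1)\rceil$ we have $k-2^\ell<j$. By minimality of $j$ this gives $w(u_{k-2^\ell},v)+d_v-\Delta_r\ge 0$. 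Since $w(v_\ell,v)=w(u_{k-2^\ell},v)$, the weight passed is $\ge 0$. The boundary case where $j$ is undefined and set to $k$ needs a quick separate check.
\item Step~\ref{step:shortcut}: if $\Delta_r\le 0$, then $w(v,r)-\Delta_r\ge 0$ because $v\in\Ain_r$. If $\Delta_r\ge -w(r,\bar r)$, then $w(r,\bar r)+\Delta_r+w(\bar r,v)\ge 0$.
\item Recursive calls inside $\ein$ that keep an N-Steiner second argument come in two kinds. One is guarded by an explicit test $W-w(u,\pin(u))\ge 0$. The other is the out-Steiner branch, which passes $W+w(\pout(u),u)$. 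This is $\ge W\ge 0$ by the non-negativity of gadget edges established above.
\item All other recursive calls have second argument in $N$.
\end{itemize}
The argument for $\eout$ is symmetric.

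\textbf{Main obstacle.} The delicate point is justifying that the gadget edges $(\pout(u),u)$ and $(u,\pin(u))$ remain non-negative over time, and handling the $W$ that is transported through Step~\ref{step:r'} when a parent lies in $\bar N$ or $N$. I would first try to make this an explicit part of the inductive invariant: every Steiner parent-child edge not incident to $N$ is non-negative. I would then verify directly that Step~\ref{step:r'} preserves it, since it adds $W-\wout_r\ge 0$ to out-edges of $\bar r$ and $-\win_r\ge 0$ to in-edges of $r$.
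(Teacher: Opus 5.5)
Your proposal is correct and follows the paper's proof essentially step for step. It uses the same case check of the directly inserted edges and the same argument that top-level calls to $\ein/\eout$ with an N-Steiner endpoint have $W\ge 0$ (via minimality of $j$ in Step~\ref{step:scin-steiner}). It then propagates non-negativity through the recursion in the same way, and your cross-iteration invariant spells out what the paper compresses into ``valid potentials never make non-negative edges negative.''

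One small tightening is needed, which the paper also glosses over. In the out-Steiner branch of $\ein$, if $\pout(u)\in N$ then $w(\pout(u),u)$ (after Step~\ref{step:r'}, really a path through $\bar r$) need not be non-negative, so your auxiliary statement ``$W\ge 0$ whenever the target is N-Steiner'' should exclude that case. The claim is unaffected there: the inserted edge is incident to $\pout(u)\in N$, and $\ein(\pout(u),\cdot,\cdot)$ makes no further recursive calls.
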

\begin{proof}
Since we only apply valid vertex potentials, a non-negative weight edge will never become negative. Therefore, it suffices to consider the newly added edges. First consider the edges added in the following steps of $\shortcut(G,t)$,
\begin{itemize}
    \item Steps~\ref{step:negative-left}, \ref{step:negative-right}: These edges are incident to a negative vertex.
    \item Steps~\ref{step:in-up}, \ref{step:in-down}, \ref{step:out-up}, \ref{step:out-down}: These edges are non-negative.
    \item Steps~\ref{step:in-r}, \ref{step:in-tilde-r}, \ref{step:out-r}, \ref{step:out-tilde-r}: These edges are incident to a negative vertex.
    \item Step~\ref{step:tilde-r-up}: These edges are incident to a negative vertex.
\end{itemize}
Now we consider the edges added via procedures $\ein$ and $\eout$. For simplicity, we focus on $\ein$. We first observe that the initial calls to $\ein(u,v,W)$ (excluding recursive calls) always satisfy $W\geq 0$. The only step that requires justification is Step~\ref{step:scin-steiner} of $\scin$. At that step, $\Ain_v=\{u_1,\dots,u_k\}$ is sorted so that $w(u_1,v)\geq \dots\geq w(u_k,v)$, and we have computed the smallest index $j$ such that $w(u_j,v)+d_v-\Delta_r<0$ and $\ell=\lceil \log(k-j+1)\rceil$. Then we have $k-2^{\ell}\leq k-(k-j+1) = j-1$. By the minimality of $j$, $w(u_{j-1},v)+d_v-\Delta_r\geq 0$ (the case $j=1$ cannot occur since $\ein$ is called only if $\ell\leq \lceil \log k\rceil-1$). Therefore,
\[
w(v_\ell,v)+d_v-\Delta_r=w(u_{k-2^{\ell}},v)+d_v-\Delta_r \geq w(u_{j-1},v)+d_v-\Delta_r\geq 0,
\]
and the weight is non-negative.

Now we show that for any call to the procedure $\ein(u,v,W)$ with $W\geq 0$, any negative edge added by the procedure (including those added during recursions) must be incident to a negative vertex. The claim is immediate if $v$ is a negative vertex. Therefore, we assume in the following that $v$ is an N-Steiner vertex. We show that whenever the procedure makes a recursive call to $\ein(u',v',W')$, either $v'\in N$ or $W'\geq 0$ holds.

If $u$ is an in-Steiner vertex, the procedure recursively calls $\ein(\pin(u),\pn(v),W-w(\pn(v),v)-w(u,\pin(u)))$; since $\pn(v)$ is a negative vertex, the claim follows. It may also calls $\ein(\pin(u),v,W-w(u,\pin(u)))$, but this occurs only when $W-w(u,\pin(u))\geq 0$. If $u$ is an out-Steiner vertex, then the procedure calls $\ein(\pout(u),v,W+w(\pout(u),u))$. Since $W\geq 0$ and $w(\pout(u),u)\geq 0$, the new weight is also non-negative.

Therefore, by induction, every negative edge added by the procedure $\ein$ is incident to a negative vertex. An analogous argument holds for $\eout$.
\end{proof}

\begin{claim}
In the end of the algorithm, the graph $G$ is a well-behaved graph. Moreover, every negative vertex in $G$ (that is, any vertex with an outgoing negative edge) belongs to $N$.
\end{claim}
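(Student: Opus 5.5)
The plan is to argue by induction over iterations $t$: if $G$ is well-behaved at the start of $\shortcut(G,t)$ with negative-vertex set contained in $N$, then the same holds at its end. The base case is the initial conversion in \cref{sec:prelim}. Reweighting by the valid potential $\varphi$ in Step~\ref{step:betweenness-reduction} keeps non-negative edges non-negative. The pairs $(r,\bar r),(\bar r,r)$ always satisfy $w_\varphi(\bar r,r)=-w_\varphi(r,\bar r)$, since the potentials cancel. So after Step~\ref{step:betweenness-reduction} the graph may have changed weights but not its edge structure. All later violations come from edges added between Step~\ref{step:f-in-f-out} and Step~\ref{step:shortcut}. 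By \cref{clm:negative-incident}, just before Step~\ref{step:r'} every negative edge is incident to some $r\in N$.

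Next, I will pin down which endpoints these negative edges can have. I expect this step to be the main obstacle: one has to go through every source of negative edges listed in the proof of \cref{clm:negative-incident}. The goal is to show that each negative edge is either an edge into $r$ from some vertex (Steps~\ref{step:out-r}, \ref{step:negative-left} reinterpreted, $\ein$ with second argument $r$), or an edge out of $r$ (Steps~\ref{step:negative-right}, \ref{step:in-r}, \ref{step:tilde-r-up}, $\eout$ with first argument $r$). Step~\ref{step:negative-left} adds $(u,\bar r)$ with $u\in N$, which is an edge out of the negative vertex $u$. The point to verify is that no negative edge has $\bar r$ as its tail, nor is a negative edge between two non-$N$ vertices. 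This follows from \cref{clm:negative-incident} plus the observation that $\ein$ only ever targets $v\in N$ or N-Steiner vertices, and $\eout$ only originates at such vertices, with negative weight only when the $N$-endpoint is the head (for $\ein$) or tail (for $\eout$). I also need that $\bar r$ has no new incoming edges other than $(r,\bar r)$, or at least none that matter. Edges added into $\bar r$ are of the form $(u,\bar r)$ with $u\in N$ from Step~\ref{step:negative-left}; Step~\ref{step:r'} for $u$ relocates them to $\bar u$.

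Then I verify that Step~\ref{step:r'} restores the structure. For each $r$, all out-edges of $r$ other than $(r,\bar r)$ are moved to $\bar r$, with weight $w(r,v)-\wout_r\ge 0$ by the definition of $\wout_r$. This makes $(r,\bar r)$ the unique out-edge of $r$. Shifting incoming edges of $r$ by $-\win_r\ge 0$ makes every edge into $r$ non-negative. This can be seen as applying the potential $\win_r$ at $r$. Similarly, changing $w(r,\bar r)$ and the out-edges of $\bar r$ amounts to a potential shift at $\bar r$ that preserves all path lengths through $r\to\bar r\to v$. I will check that this shift makes out-edges of $\bar r$ non-negative: the new weight is $w(\bar r,v)+W-\wout_r$. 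Since $(r,\bar r)$ was itself an out-edge of $r$, we have $\wout_r\le W$, so non-negativity holds. The reverse edge is reset to $w(\bar r,r)=-w(r,\bar r)$. Since $r\ne r'$ are processed on disjoint vertices, except for edges $(r,\bar r')$ which get moved to $(\bar r,\bar r')$ and shifted consistently, the per-$r$ operations compose.

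Finally, I conclude. After Step~\ref{step:r'}, every remaining negative edge had an endpoint $r\in N$; by the above, all such edges are now $(r,\bar r)$. Hence the only negative edges are the $(r,\bar r)$, $r$ has out-degree $1$ and no incoming negative edge, and $\bar r$ has in-degree $1$ (in-edges from $N$ were relocated) and no outgoing negative edges. Steps~\ref{step:classify-in}–\ref{step:classify-out} add no edges, so this gives both parts of the claim for iteration $t$, and induction finishes.
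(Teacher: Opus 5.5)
Your route is the same as the paper's. You invoke \cref{clm:negative-incident}, then read off from Step~\ref{step:r'} three facts: in-edges of $r$ are shifted by $-\win_r\ge 0$; out-edges of $r$ other than $(r,\bar r)$ are moved to $\bar r$ with weight $w(r,v)-\wout_r\ge 0$; and out-edges of $\bar r$ are shifted by $W-\wout_r\ge 0$. You also note that $\wout_r\le W$ and that all these shifts are non-negative, so the order in which vertices of $N$ are processed is irrelevant. These checks fill in what the paper leaves as easy to see. They correctly give that the only negative edges are the $(r,\bar r)$, that $r$ has out-degree $1$ and no negative in-edge, and that every negative vertex lies in $N$.

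\textbf{The gap: $\bar r$ does not have in-degree $1$.} The step that fails is ``$\bar r$ has in-degree $1$ (in-edges from $N$ were relocated)''. Step~\ref{step:r'} applied to $u$ changes the \emph{tail} of the edge $(u,\bar r)$ created in Step~\ref{step:negative-left}, not its head. That edge becomes $(\bar u,\bar r)$, which is still an in-edge of $\bar r$. Likewise $(r,\bar v)$ from Step~\ref{step:negative-right} becomes $(\bar r,\bar v)$. Such edges are created whenever some $\bar u\in\Uin_r$, which is exactly Case~1 of \cref{lem:shortcut-main}.

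Later iterations give $\bar r$ still more in-edges. One source is its own in-Steiner gadget (Step~\ref{step:in-up}) once $|\Ain_{\bar r}|\ge 2$. Another is calls $\eout(\tilde s_t,\bar r,\cdot)$, e.g.\ in Step~\ref{step:shortcut} when $\bar r\in\Aout_{\bar s}$. So the in-degree-$1$ condition in the definition of well-behaved does not hold in general. The paper's proof asserts it without argument, so you have reproduced a weak point of the source rather than introduced a new one. Still, it cannot be established as you propose. What both arguments actually prove is the weaker structure listed above.

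\textbf{A smaller slip.} Your sub-goal that no negative edge has tail $\bar r$ before Step~\ref{step:r'} is false. Recursive $\ein$ calls whose second argument lies in $N$ may carry negative weight; the proof of \cref{clm:negative-incident} explicitly allows this. Their first argument can be $\bar r$, e.g.\ $\pin(u)=\bar r$ once $\bar r$ has an in-Steiner gadget. This is harmless, because such an edge $(\bar r,s)$ has head $s\in N$. But non-negativity of $\bar r$'s out-edges must then also use the in-edge shift $-\win_s$ at $s$, not only $W-\wout_r\ge 0$.
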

\begin{proof}
By \cref{clm:negative-incident}, every negative edge in $G$ is incident to a negative vertex in $N$. Fix a negative vertex $r\in N$, after Step~\ref{step:r'}, it is easy to see that every incoming edge of $r$ has non-negative weight and every outgoing edge of $\bar r$ has non-negative weight. Therefore, the only edges in $G$ that may have negative weight are the edges $(r,\bar r)$ for all $r\in N$, each of which is the unique outgoing edge of $r$. Moreover, $r$ has out-degree 1 and $\bar r$ has in-degree 1. Hence, $G$ is a well-behaved graph.
\end{proof}

We now turn to the properties of heavy and light vertices and begin with some simple observations.

\begin{observation} \label{obs:light-vertex}
Every light vertex is either an in-Steiner vertex, or an out-Steiner vertex.
\end{observation}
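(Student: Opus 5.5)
The plan is to go through every place in the algorithm that assigns a light/heavy label and check that the label ``light'' only ever lands on an in-Steiner or out-Steiner vertex. Vertices are created in three places. Original vertices appear at initialization in \cref{alg:sssp}. In-Steiner and out-Steiner vertices are created in Steps~\ref{step:steiner-vertex-in} and \ref{step:steiner-vertex-out} of $\shortcut(G,t)$. N-Steiner vertices $\tilde r_t$ are created in Step~\ref{step:tilde-r}. For each type, we read off the label it receives and check that no later step changes it.

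First, \cref{alg:sssp} explicitly declares every $v\in V_0$ heavy. This covers all level-$0$ vertices, including the copies $\bar r\in\bar N$ produced by the well-behaved reduction, since that reduction runs before the labelling line.

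Second, Step~\ref{step:tilde-r} declares each newly created $\tilde r_t$ heavy at creation.

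Third, the only steps that can make a vertex light are the classification Steps~\ref{step:classify-in} and \ref{step:classify-out}. These apply only to in-Steiner vertices created in Step~\ref{step:steiner-vertex-in} and out-Steiner vertices created in Step~\ref{step:steiner-vertex-out}, respectively.

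To complete the argument, we note that labels are never reassigned afterwards. No step revisits the classification of a vertex from an earlier level: Steps~\ref{step:classify-in} and \ref{step:classify-out} in iteration $t$ range only over Steiner vertices created in that same iteration. We also note that procedures such as Step~\ref{step:r'}, $\ein$, and $\eout$ only add, delete, or reweight edges and never touch labels. Hence any vertex labelled light received that label in Step~\ref{step:classify-in} or \ref{step:classify-out}, and is therefore an in-Steiner or out-Steiner vertex.

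The only point needing care is exhaustiveness: we must confirm that every vertex of $G$ arises from one of the three creation sites. The vertex set only grows via Steps~\ref{step:steiner-vertex-in}, \ref{step:steiner-vertex-out} and \ref{step:tilde-r}. The auxiliary graph $G'$ used inside the betweenness reduction of \cref{lem:betweenness-reduction} is a separate recursive instance and does not add vertices to $G$. With that settled, the observation follows by a direct inspection of the algorithm's description.
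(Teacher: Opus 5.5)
Your proposal is correct and follows essentially the same route as the paper: original vertices are declared heavy at initialization, N-Steiner vertices are declared heavy when they are created, and only in-Steiner and out-Steiner vertices ever go through the light/heavy classification. Your additional checks, that labels are never reassigned and that these three creation sites are the only ones, spell out what the paper leaves implicit without changing the argument.
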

\begin{proof}
As described in the initialization process, all vertices in the original graph are heavy. Apart from the in-Steiner and out-Steiner vertices, the only other Steiner vertices we create are the N-Steiner vertices, which are also heavy as specified in Step~\ref{step:tilde-r}.
\end{proof}

\begin{observation} \label{obs:edge-regular}
In Steps~\ref{step:negative-left}, \ref{step:negative-right}, \ref{step:tilde-r-up} of $\shortcut(G,t)$, it only adds edges between regular vertices.
\end{observation}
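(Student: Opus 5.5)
The statement concerns Steps~\ref{step:negative-left}, \ref{step:negative-right} and \ref{step:tilde-r-up} of $\shortcut(G,t)$. I will show that every edge added there has both endpoints regular, i.e., each endpoint is a level-$0$ vertex or an N-Steiner vertex. The plan is to go through the three steps one at a time and identify the endpoints of each edge directly from how the step is defined.

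\textbf{Step~\ref{step:negative-left}.} The added edge is $(u,\bar r)$, where $r\in N$ and $(u,\bar u)$ is a negative edge into $\bar u$. Since $r\in N$, and $N$ and $\bar N$ consist of vertices of the original well-behaved graph $G_0$, the vertex $\bar r\in\bar N$ is a level-$0$ vertex. For $u$, I will use that $G$ is well-behaved at the start of each iteration; this holds initially and is restored by Step~\ref{step:r'}, by the claim that the final graph is well-behaved with all negative vertices in $N$. Reweighting by the valid potential in Step~\ref{step:betweenness-reduction} cannot create new negative edges. Step~\ref{step:f-in-f-out} inserts only non-negative edges, because edges are passed to $\ein$ and $\eout$ only when $w(e)\ge 0$, and the recursive calls inside $\ein$/$\eout$ change only the vertices that get connected, so any negative edge they create is incident to $N$. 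Hence when Step~\ref{step:negative-left} runs, every negative edge is of the form $(r',\bar r')$ with $r'\in N$. So $u\in N$ and $\bar u\in \bar N$, both level-$0$ vertices.

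\textbf{Steps~\ref{step:negative-right} and \ref{step:tilde-r-up}.} Step~\ref{step:negative-right} is symmetric: the edge $(r,\bar v)$ has $r\in N$, and $(v,\bar v)$ is a negative edge, so by the same argument $\bar v\in\bar N$; both endpoints are level-$0$. Step~\ref{step:tilde-r-up} adds $(\tilde r_t,r)$ and $(r,\tilde r_t)$. Here $r\in N$ is original, and $\tilde r_t$ is an N-Steiner vertex, which is regular by definition.

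\textbf{Main obstacle.} The only delicate point is the claim that, during Steps~\ref{step:negative-left}--\ref{step:negative-right}, every negative edge has the form $(r',\bar r')$ with $r'\in N$. I would argue it from the well-behavedness claim at the end of the previous iteration, together with the fact that Steps~\ref{step:f-in-f-out}--\ref{step:delete} add only non-negative edges. Step~\ref{step:pre} and Step~\ref{step:delete} add no edges at all. One might worry that Step~\ref{step:negative-left} itself adds negative edges $(u,\bar r)$, which Step~\ref{step:negative-right} could then see. Even so, these edges go between original vertices, $u\in N$ and $\bar r\in\bar N$. Moreover, Step~\ref{step:negative-right} looks only at outgoing negative edges of $v\in\Uout_r$, and any such edge has a tail in $N$ and a head that is original. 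So the conclusion still holds.
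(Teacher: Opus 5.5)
Your step-by-step reading of the endpoints is the same as the paper's proof, and your treatment of $\bar r$, $r$ and $\tilde r_t$ is fine. The gap is in your ``main obstacle'' paragraph, where you conclude that every negative edge present during Steps~\ref{step:negative-left}--\ref{step:negative-right} has the form $(r',\bar r')$ with $r'\in N$.

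Your sentence on Step~\ref{step:f-in-f-out} contradicts itself: it first says only non-negative edges are inserted, then that the recursive calls may create negative edges. The second half is the accurate one. The recursive calls $\ein(\pin(u),\pn(v),W-w(\pn(v),v)-w(u,\pin(u)))$ and $\eout(\pn(u),\pout(v),W-w(u,\pn(u))-w(\pout(v),v))$ can carry negative weight even when the top-level call has $W\ge 0$. \cref{clm:negative-incident} then only tells you that such an edge is \emph{incident} to some vertex of $N$. Its other endpoint, $\pin(u)$ or $\pout(v)$, need not be regular; for example, a heavy out-Steiner vertex can be an in-parent. So ``incident to $N$'' does not give ``of the form $(r',\bar r')$'', and well-behavedness at that moment is not established.

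This matters under your literal reading. Take a negative edge $(z,s)$ with $s\in N\cap\Uin_r$ and $z$ a Steiner vertex; for instance $s=r$ works when $\Delta_r>0$, since $d^0(r,r)=0$. Step~\ref{step:negative-left} would then add $(z,\bar r)$, and Step~\ref{step:negative-right} has the symmetric problem. Nothing in your argument excludes this.

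The paper never argues via well-behavedness. Since $N$ is frozen, the negative edges $(u,\bar u)$ and $(v,\bar v)$ used in these two steps are the designated edges of negative vertices. That is, $u,v\in N$ and $\bar u,\bar v\in\bar N$, which is what the bar notation denotes. Hence both endpoints of $(u,\bar r)$ and $(r,\bar v)$ lie in $N\cup\bar N$ directly. Replace your last paragraph with this reading and the proof is complete.
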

\begin{proof}
In Steps~\ref{step:negative-left} and \ref{step:negative-right}, the endpoints of the added edges must lie in $N$ and $\bar N$, which are regular by definition. In Step~\ref{step:tilde-r-up}, the endpoints of the added edges must belong to $N$ or be N-Steiner vertices, all of which are regular by definition.
\end{proof}

Intuitively, when an in-Steiner vertex $v$ is created and classified as light, it has small out-degree at the end of that iteration. In subsequent iterations, our goal is to (effectively) avoid adding new outgoing edges from $v$. This ensures that the out-degree of $v$ remains small, allowing us to handle its out-neighbors by brute force.

In the following two lemmas, we precisely characterize the in-neighborhood and out-neighborhood of an in-Steiner vertex. The next two lemmas provide analogous characterizations for out-Steiner vertices. The proofs of \cref{lem:in-heavy-edge,lem:out-heavy-edge}, as well as \cref{clm:ein-light}, \cref{lem:in-light-edge}, \cref{clm:eout-light} and \cref{lem:out-light-edge} are mutually interdependent, and we establish them simultaneously via induction on the sequence of graph operations. Specifically, we assume that all statements hold before a given update performed by the algorithm, and then show that they continue to hold afterward.

\begin{lemma} \label{lem:in-heavy-edge}
Let $v$ be a level-$t$ in-Steiner vertex. By the end of the $t$-th iteration, $\Nout(v)$ only consists of $\pin(v)$ and some level-$t$ N-Steiner vertices. Moreover, the following properties hold throughout:
\begin{enumerate}[label=(\alph*)]
    \item \label{item:in-1} For every $u\in \Nin(v)$, there is an edge $(u,\pin(v))$ in $G$ of weight at most $w(u,v)+w(v,\pin(v))$.
    \item \label{item:in-2} For every $u\in \Nout(v)$, one of the following holds:
    \begin{itemize}
        \item $u=\pin(v)$.
        \item $u$ is an out-Steiner vertex with $\pout(u)=v$.
        \item $u\in N$.
        \item $u$ is an N-Steiner vertex.
        \item $u$ is an in-Steiner vertex with $\pin(u)\in \Nout(v)$ and $\pin(u)\in N$.
        \item $u$ is an in-Steiner vertex and $\pin(u)\in \Nout(v)$ is an N-Steiner vertex.
    \end{itemize}
    \item \label{item:in-3} For every $u\in \Nout(v)$:
    \begin{itemize}
        \item If $u\in N$, then there is an edge $(\pin(v),u)$ of weight at most $w(v,u)-w(v,\pin(v))$.
        \item If $u$ is an N-Steiner vertex, then there is an edge $(\pin(v),u)$ of weight $w(v,u)-w(v,\pin(v))$ in $\Fin$, and there is an edge $(\pin(v),\pn(u))$ of weight at most $w(v,u)-w(\pn(u),u)-w(v,\pin(v))$ in $G$.
    \end{itemize}
\end{enumerate}
\end{lemma}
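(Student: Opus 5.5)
We argue by induction over the sequence of graph operations, jointly with \cref{lem:out-heavy-edge}, \cref{clm:ein-light}, \cref{lem:in-light-edge}, \cref{clm:eout-light} and \cref{lem:out-light-edge}. We assume all these statements hold before an update and show they still hold after it. The update is either an edge insertion (by some step of $\shortcut$ or inside $\ein/\eout$), a reweighting by a valid potential (Step~\ref{step:betweenness-reduction}), or the restructuring of Step~\ref{step:r'}. Reweighting is easy: each inequality in (a) and (c) compares $w(u,\pin(v))$ with a sum of weights along a path from $u$ to $\pin(v)$. Applying $\varphi$ adds the same telescoping term $\varphi(u)-\varphi(\pin(v))$ to both sides, and in the deferred-edge clause $w_\varphi$ is applied to edges of $\Fin$ as well. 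Step~\ref{step:r'} shifts weights at $r,\bar r$ by potential-like amounts and redirects outgoing edges of $r$ to $\bar r$. By \cref{rmk:r'} we interpret $(r,x)$ as the path $r\to\bar r\to x$. An in-Steiner vertex is never $r$ or $\bar r$, so here only incident-weight shifts at $r$ matter, and these again telescope.

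For the first sentence and the level-$t$ behaviour, we list the edges incident to a freshly created level-$t$ in-Steiner vertex $v=x_\ell$ with $\pin(v)=x$. In Step~\ref{step:steiner-vertex-in} it receives the out-edge $(v,x)$ and in-edges from $u_i$ of weight $w(u_i,x)-w(v,x)$. So (a) holds with equality, since $(u_i,x)$ exists. The extra in-edges from $\pout(u_i)$, $x$ (if $x\in N$) and $\pn(x)$ satisfy (a) because the corresponding edges $(\pout(u_i),x)$ (possibly via an out-Steiner detour—here we use \cref{lem:out-heavy-edge}(a) for $u_i$), $(x,x)$ (vacuous, or handled as a length-zero comparison), and $(\pn(x),x)$ exist with matching weights. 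During Phase~4, new out-edges of $v$ arise only from $\ein(v,\cdot,\cdot)$ calls in $\scin$ (Step~\ref{step:scin-steiner}, possibly \ref{step:scin-bf-other}), whose target is $\tilde r_t$, a level-$t$ N-Steiner vertex. This gives the first sentence. Each such call immediately triggers the recursive insertions prescribed in $\ein$: the edge $(\pin(v),\tilde r_t)$ of weight $W-w(v,\pin(v))$ is placed into $\Fin$, and $\ein(\pin(v),\pn(\tilde r_t),\cdot)$ is called with exactly the weight required by (c). When the target is in $N$, $\ein(\pin(v),v',W-w(v,\pin(v)))$ is called. Thus (b) and (c) are preserved by construction of $\ein$.

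In later iterations $t'>t$ we must show that $v$ gains out-neighbours only of the listed types, and that (a) survives new in-edges. New in-edges of $v$ come from Step~\ref{step:out-down} of some out-Steiner gadget ($v$ as some $u_i$, adding $(v_\ell,\pin(v))$ as well, which gives (a)), or from $\eout(\tilde r,v,W)$, which recurses to $\eout(\tilde r,\pin(v),W+w(v,\pin(v)))$ and gives (a). New out-edges come from Step~\ref{step:in-down} when $\pin(v)$'s out-Steiner or another vertex's in-gadget lists $v$. Here $v\in\Ain_y$ implies $y\neq\pin(v)$ by Step~\ref{step:delete}. If $y$ is an out-Steiner child of $v$ the edge is allowed; otherwise we need to rule it out or fit it into the in-Steiner cases of (b), using (b) for $y$ itself. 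They also come from $\ein(v,\cdot)$ in $\scin$, where $v$ is heavy; the light case is forbidden by the precondition of $\ein$, which is the role of \cref{clm:ein-light}.

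The main obstacle is exactly this last point: showing that the Steiner edges $(v,y_\ell)$ created when $v\in\Ain_y$ for a heavy $y$ fit one of the bullets of (b), with (c) holding for the $N$/N-Steiner cases. From (b) at the time of gadget creation, $y$ is $\pin(v)$ (excluded), in $N$, or an N-Steiner vertex. In these cases $y_\ell$ is an in-Steiner vertex with $\pin(y_\ell)=y\in\Nout(v)$, matching the last two bullets, and (c) for $y_\ell$ itself is not required since $y_\ell\notin N$. We must check carefully that the out-Steiner case and the weight bookkeeping of deferred $\Fin$ edges under subsequent reweighting remain consistent. We expect this to need \cref{obs:f} and the exhaustive case list of \cref{obs:list}.
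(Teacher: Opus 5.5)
Your framework is the same as the paper's: joint induction over individual operations, telescoping under potentials, the creation-time check of (a), reading (c) off the recursion in $\ein$, and (a) in later iterations via the companion edge of Step~\ref{step:out-down} and the recursion in $\eout$. Those parts are essentially fine. The gap is where you place the ``main obstacle'': the out-edges $v$ gains in later iterations through Step~\ref{step:in-down}. You leave this unresolved and partly get it wrong.

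In the direct case $v\in\Ain_y$, you say that if $y$ is an out-Steiner child of $v$ ``the edge is allowed''. It is not. The new edge is $(v,y_\ell)$, where $y_\ell$ is an in-Steiner vertex whose parent $y$ is neither in $N$ nor an N-Steiner vertex, so it matches no bullet of (b). This case must be excluded, and it is excluded by the last rule of Step~\ref{step:delete}, which removes $\pout(y)$ from $\Ain_y$ when $y$ is out-Steiner. With that, apply (b) to $v$, not ``(b) for $y$ itself''. Since $y\neq\pin(v)$, and $\Ain_y=\emptyset$ whenever $y$ is in-Steiner, only $y\in N$ or $y$ N-Steiner remain, which are exactly the last two bullets.

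Second, you never analyse the companion edges of Step~\ref{step:in-down}. Suppose $v$ (heavy) has an out-Steiner child $v'$ with $v'\in\Ain_y$. Then the step adds $(v',y_\ell)$ and also $(\pout(v'),y_\ell)=(v,y_\ell)$, even though $v$ itself need not lie in $\Ain_y$. The paper handles this in three moves:
\begin{enumerate}
\item \cref{lem:out-heavy-edge}~\ref{item:out-1} applied to $v'$ gives $y\in\Nout(v)$.
\item Then (b) for $v$ applies.
\item The option ``$y$ is an out-Steiner child of $v$'' is ruled out by \cref{lem:out-heavy-edge}~\ref{item:out-2} applied to $y$: $v'\in\Nin(y)$ would be an out-Steiner vertex whose parent $v$ is neither in $\bar N$ nor N-Steiner.
\end{enumerate}
The paper does not discuss $y=\pin(v)$ in this case, and Step~\ref{step:delete} no longer excludes it. It is harmless: since $\pin(v)\notin\Aout_v$, $v'$ can acquire the out-neighbour $\pin(v)$ only via a sibling in-Steiner out-neighbour of $v$ or via $\ein$. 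Both routes force $\pin(v)\in N$ or $\pin(v)$ N-Steiner.

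This is exactly where the joint induction with the out-Steiner lemma is needed. \cref{obs:f} and \cref{obs:list} play no role here. The deferred-edge weights you worry about are already handled by your reweighting paragraph, since Step~\ref{step:f-in-f-out} reweights $\Fin$ by $w_\varphi$.
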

\begin{proof}
By \cref{obs:edge-regular}, we only need to consider the edges added in Steps~\ref{step:in-up}, \ref{step:in-down}, \ref{step:in-r}, \ref{step:in-tilde-r}, \ref{step:out-up}, \ref{step:out-down}, \ref{step:out-r}, \ref{step:out-tilde-r}, \ref{step:r'} of $\shortcut(G,t)$, and the edges added via procedures $\ein$ and $\eout$.

\paragraph{Proof of (a).}
We first consider the incoming edges of $v$ and prove the first item. Specifically, we show that whenever an incoming edge to $v$ is added, the claim holds. That is, when an edge $(u,v)$ of weight $w(u,v)$ is added, an edge $(u,\pin(v))$ of weight at most $w(u,v)+w(v,\pin(v))$ is also added. After the edges are added, the algorithm may apply vertex potentials. However, it is easy to see that the claim continues to hold after applying any vertex potential.

We first focus on Steps~\ref{step:in-up}, \ref{step:in-down}, \ref{step:in-r}, \ref{step:in-tilde-r}, \ref{step:out-up}, \ref{step:out-down}, \ref{step:out-r}, \ref{step:out-tilde-r} of $\shortcut(G,t)$. When vertex $v$ is created in Step~\ref{step:steiner-vertex-in} of the $t$-th iteration, the algorithm adds incoming edges to $v$ in Step~\ref{step:in-down}. In particular, for some in-neighbor $u$ of $\pin(v)$, it may add an edge $(u,v)$ of weight $w(u,\pin(v))-w(v,\pin(v))$. This edge satisfies the claim trivially. If $u$ is an out-Steiner vertex, the algorithm may add an edge $(\pout(u),v)$ of weight $w(\pout(u),u)+w(u,\pin(v))-w(v,\pin(v))$. Note that since $u$ is an out-Steiner vertex and $\pin(v)\in \Nout(u)$, \cref{lem:out-heavy-edge}~\ref{item:out-1} implies that there is an edge $(\pout(u),\pin(v))$ of weight at most $w(\pout(u),u)+w(u,\pin(v))$. Therefore, the edge $(\pout(u),v)$ also satisfies the claim. Next, if $\pin(v)\in N$, the algorithm adds an edge $(\pin(v),v)$ of weight $-w(v,\pin(v))$ in Step~\ref{step:in-r}. Finally, if $\pin(v)$ is an N-Steiner vertex, it adds an edge $(\pn(\pin(v)),v)$ of weight $w(\pn(\pin(v)),\pin(v))-w(v,\pin(v))$ in Step~\ref{step:in-tilde-r}. It is straightforward to verify that all these edges trivially satisfy the claim.

In subsequent iterations, the algorithm never adds incoming edges to $v$ in Steps~\ref{step:in-down} and~\ref{step:out-up} as edges added in these steps are directed into newly created vertices. The same holds for Steps~\ref{step:in-r}, \ref{step:in-tilde-r}, \ref{step:out-r}, \ref{step:out-tilde-r}, because the endpoints of the added edges are either negative vertices or newly created vertices. In addition, it also never adds incoming edges to $v$ in Step~\ref{step:in-up}, since $v$ is an in-Steiner vertex so $\Ain_v=\emptyset$. The only remaining case is Step~\ref{step:out-down}. In this case, since $v$ is an in-Steiner vertex, \cref{obs:steiner}~\ref{obs:pin} implies that it can never be the in-parent of any vertex. Thus, the algorithm can only add an edge $(u,v)$ of weight $W$ for some out-Steiner vertex $u$, followed by an edge $(u,\pin(v))$ of weight $W+w(v,\pin(v))$, which satisfies the claim.

Next we focus on the edges added by the procedures $\ein$ and $\eout$. Since procedure $\ein$ adds edges only into a negative vertex or an N-Steiner vertex, incoming edges to $v$ can be added only via $\eout$. When an edge $(u,v)$ of weight $W$ is added for some $u\in V$, since $v$ is an in-Steiner vertex, the algorithm must have recursively invoked $\eout(u,\pin(v),W+w(v,\pin(v)))$ and added an edge $(u,\pin(v))$ of weight $W+w(v,\pin(v))$.

Finally, it is straightforward to verify that if the claim holds before Step~\ref{step:r'}, then it continues to hold after Step~\ref{step:r'}. This completes the proof of the first item.

\paragraph{Proof of (b).}
Now we consider the outgoing edges from $v$ and prove the second item. During the $t$-th iteration, the algorithm first adds the edge $w(v,\pin(v))$ in Step~\ref{step:in-up}. Then, in Step~\ref{step:scin-steiner} of $\scin(r,t)$ for some negative vertex $r\in N$, it may invoke $\ein(v,\tilde r_t,W)$ and add an edge $(v,\tilde r_t)$ of weight $W$. Note that we will never add any outgoing edge from $v$ in recursive calls to $\ein$ since $v$ is neither the in-parent nor the out-parent of any vertex. Therefore, by the end of the $t$-th iteration, $\Nout(v)$ only contains $\pin(v)$ and some level-$t$ N-Steiner vertices.

In subsequent iterations, consider Steps~\ref{step:in-up}, \ref{step:in-down}, \ref{step:in-r}, \ref{step:in-tilde-r}, \ref{step:out-up}, \ref{step:out-down}, \ref{step:out-r}, \ref{step:out-tilde-r} of $\shortcut(G,t)$. The algorithm never adds outgoing edges from $v$ in Steps~\ref{step:in-up} and~\ref{step:out-down} as edges added in these steps are directed out from newly created vertices. The same holds for Steps~\ref{step:in-r}, \ref{step:in-tilde-r}, \ref{step:out-r}, \ref{step:out-tilde-r}, because the endpoints of the added edges are either negative vertices or newly created vertices. In Step~\ref{step:out-up}, the algorithm may only add an edge $(v,u)$ for some out-Steiner vertex $u$ with $\pout(u)=v$, which satisfies the claim. The only remaining case is Step~\ref{step:in-down}. There are two cases:
\begin{itemize}
    \item The first case is the algorithm adds an edge $(v,u)$ for some in-Steiner vertex $u$, given that the edge $(v,\pin(u))$ already exists in the graph. Assume \cref{lem:in-heavy-edge}~\ref{item:in-2} holds before adding this edge. Since $\pin(u)\in \Nout(v)$, and an in-Steiner vertex cannot be the in-parent of any vertex by \cref{obs:steiner}~\ref{obs:pin}, $\pin(u)$ must fall into one of the following categories: $\pin(u)$ is an out-Steiner vertex with $\pout(\pin(u))=v$; $\pin(u)\in N$; or $\pin(u)$ is an N-Steiner vertex. We now show that the first case is not possible. This is because the algorithm adds the edge $(v,u)$ only if $v\in \Ain_{\pin(u)}$. However, if $v=\pout(\pin(u))$, then $v$ would have been removed from $\Ain_{\pin(u)}$ in Step~\ref{step:delete} of $\shortcut(G,t)$. Therefore, it must be either $\pin(u)\in N$ or $\pin(u)$ is an N-Steiner vertex, and \cref{lem:in-heavy-edge}~\ref{item:in-2} continues to hold after adding the edge $(v,u)$.
    \item The second case is the algorithm adds an edge $(v',u)$ for some in-Steiner vertex $u$ and out-Steiner vertex $v'$ with $\pout(v')=v$, given that the edge $(v',\pin(u))$ already exists in the graph. Then the algorithm adds an edge $(\pout(v'),u)=(v,u)$. In this case, since $v'$ is an out-Steiner vertex with $\pout(v')=v$ and $\pin(u)\in \Nout(v')$, by \cref{lem:out-heavy-edge}~\ref{item:out-1}, we must have $\pin(u)\in \Nout(v)$. Assume \cref{lem:in-heavy-edge}~\ref{item:in-2} holds before adding this edge. Since $\pin(u)\in \Nout(v)$, and an in-Steiner vertex cannot be the in-parent of any vertex by \cref{obs:steiner}~\ref{obs:pin}, $\pin(u)$ must fall into one of the following categories: $\pin(u)$ is an out-Steiner vertex with $\pout(\pin(u))=v$; $\pin(u)\in N$; or $\pin(u)$ is an N-Steiner vertex. We again show that the first case is not possible. This is because in that case, $\pin(u)$ is an out-Steiner vertex and $v'\in \Nin(\pin(u))$. Since $v'$ is an out-Steiner vertex with $\pout(v')=v$, but $v$ is neither in $\bar N$ nor an N-Steiner vertex, this would contradict \cref{lem:out-heavy-edge}~\ref{item:out-2} applied to the vertex $\pin(u)$. Therefore, it must be either $\pin(u)\in N$ or $\pin(u)$ is an N-Steiner vertex, and the \cref{lem:in-heavy-edge}~\ref{item:in-2} continues to hold after adding the edge $(v,u)$.
\end{itemize}

Next, consider the edges added by the procedure $\ein$ and $\eout$. Since procedure $\eout$ add edges only from a negative vertex or an N-Steiner vertex, outgoing edges from $v$ can be added only via $\ein$. In any call to $\ein(v,u,W)$, $u$ must be a negative vertex or an N-Steiner vertex, and therefore the claim always holds. Finally, the algorithm never adds outgoing edges from $v$ in Step~\ref{step:r'}. This completes the proof for the second item.

\paragraph{Proof of (c).}
For the third item, note that the algorithm adds outgoing edges from $v$ to negative vertices or to N-Steiner vertices only via the procedure $\ein$. The claim therefore follows directly from the description of $\ein$. It is easy to see that the claim continues to hold after applying any vertex potential.
\end{proof}

\begin{claim} \label{clm:ein-light}
For every call $\ein(u,v,W)$ made by the algorithm, the vertex $u$ is not a light in-Steiner vertex.
\end{claim}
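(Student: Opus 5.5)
The plan is to enumerate every place $\ein$ is called, including recursive calls, and check that the first argument is never a light in-Steiner vertex. Lightness is assigned only in Step~\ref{step:classify-in}, at the end of the iteration in which the vertex is created. So an in-Steiner vertex created in iteration $t$ is unclassified, hence not light, during all calls made in iteration $t$. It suffices to treat calls where $u$ is an in-Steiner vertex of level strictly less than the current iteration and classified light. As the paper says, this is proved jointly with the other lemmas by induction over the sequence of operations, so we may assume \cref{lem:in-heavy-edge} and its companions hold before the call.

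\textbf{Top-level calls.}
\begin{itemize}
\item Step~\ref{step:shortcut}: the call $\ein(v,\tilde r_t,\cdot)$ with $v\in\Ain_r$ and $r\in N$. The vertex $r$ is regular, hence heavy, so by Step~\ref{step:delete} every in-Steiner vertex in $\Ain_r$ has in-parent $r$ and was removed. One must still check that no in-Steiner vertex with a different parent points into $r$. By \cref{lem:in-heavy-edge}\ref{item:in-2}, such an edge $(u,r)$ is allowed, since $r\in N$. We want to exclude it, or at least exclude it for light $u$. I would argue this via Step~\ref{step:r'}, which makes every incoming edge of $r$ come through the reweighting. Failing that, I would show from the structure of $\ein$ that edges from earlier-level in-Steiner vertices into $N$ arise only from recursive propagation and are therefore harmless. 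This is the point I am least sure about.
\item Step~\ref{step:scin-out}: the first argument is $\pout(v)$, which is not an in-Steiner vertex. The in-Steiner case is excluded by the analogue of \cref{obs:steiner}\ref{obs:pin} for out-parents: an in-Steiner vertex has $\Aout=\emptyset$ removed from its parent direction, but more directly, an in-Steiner vertex $x$ with $\Aout_x\neq\emptyset$ and $x$ heavy could be an out-parent. Here I would use that heavy in-Steiner vertices are allowed anyway, and that out-Steiner gadgets are built only for heavy vertices, so $\pout(v)$ is heavy.
\item Step~\ref{step:scin-steiner}: the first argument $v_\ell$ is a freshly created, unclassified in-Steiner vertex.
\item Step~\ref{step:scin-bf-other}: lightness of $u_i$ is excluded by the branch condition.
\item Step~\ref{step:scin-light-out}: $v$ is light and not in-Steiner, so by \cref{obs:light-vertex} it is a light out-Steiner vertex. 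Step~\ref{step:delete} then gives $\Aout_v=\emptyset$, but here we scan $\Ain_v$. A light in-Steiner in-neighbour of an out-Steiner vertex is the case to exclude. I would use the out-Steiner analogue of \cref{lem:in-heavy-edge}\ref{item:in-2} (\cref{lem:out-heavy-edge}), which constrains in-neighbours of out-Steiner vertices to the parent, in-Steiner children, $\bar N$ and N-Steiner vertices, or their Steiner children. An in-Steiner in-neighbour $x$ of an out-Steiner vertex $v$ must have been added in Step~\ref{step:in-up}/\ref{step:out-down}-type steps, namely $(v_\ell,v)$ with $v=\pin(x)$, and Step~\ref{step:delete} removes exactly that parent edge.
\item Steps~\ref{step:f-in}: the edges in $\Fin_{t-1}$ have the form $(\pin(u),v)$. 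Their first endpoint is an in-parent, which by \cref{obs:steiner}\ref{obs:pin} is not an in-Steiner vertex.
\end{itemize}

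\textbf{Recursive calls.} These replace $u$ by $\pin(u)$ or $\pout(u)$. The first is never in-Steiner by \cref{obs:steiner}\ref{obs:pin}. The second is in-Steiner only if a gadget was built on an in-Steiner vertex, which requires that vertex to be heavy. So neither is a light in-Steiner vertex.

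The main obstacle is the first two bullets of the top-level case: ruling out light in-Steiner vertices appearing in $\Ain_r$ for $r\in N$, and in $\Ain_v$ for a light out-Steiner vertex $v$. This requires carefully invoking the inductive neighbourhood characterizations (\cref{lem:in-heavy-edge}\ref{item:in-2} and its out-analogue) together with the removals in Step~\ref{step:delete}.
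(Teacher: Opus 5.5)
Your proof has a genuine gap exactly where you suspected it: the call $\ein(v,\tilde r_t,\cdot)$ for $v\in\Ain_r$ in Step~\ref{step:shortcut}. Step~\ref{step:delete} removes from $\Ain_r$ only the in-Steiner vertices whose in-parent is $r$. \cref{lem:in-heavy-edge}~\ref{item:in-2} cannot exclude the others, because it explicitly allows an in-Steiner vertex to have an out-neighbour in $N$. This does occur for heavy in-Steiner vertices: a call $\ein(y,r,\cdot)$ with $y$ out-Steiner recurses to $\ein(\pout(y),r,\cdot)$, and $\pout(y)$ may be a heavy in-Steiner vertex.

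Neither of your fallbacks closes the case. Step~\ref{step:r'} only reweights edges into $r$ and redirects edges out of $r$, so it never removes an in-neighbour of $r$. An edge that ``arises from recursive propagation'' still puts its tail into $\Ain_r$, so calling it harmless is not an argument.

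What you need is that a \emph{light} in-Steiner vertex never acquires such an edge. That is precisely \cref{lem:in-light-edge}, proved in the same simultaneous induction: for light $x$, every out-neighbour other than $\pin(x)$ is either an N-Steiner vertex in $\Sin_x$ or an in-Steiner vertex. Since $r\in N$ is neither, $x\in\Nin(r)$ forces $r=\pin(x)$. Then Step~\ref{step:delete}, with $r$ heavy, has already removed $x$ from $\Ain_r$. This is the paper's argument. The lemma itself rests on two facts: the inductive hypothesis of the present claim (a classified light in-Steiner vertex is never the first argument of $\ein$), and the observation that afterwards it only gains out-edges in Step~\ref{step:in-down}, towards in-Steiner vertices whose in-parent is already an out-neighbour.

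The other cases are essentially fine, with a few corrections.
\begin{itemize}
\item In Step~\ref{step:scin-out}, $\pout(v)$ can be an in-Steiner vertex, since a heavy in-Steiner vertex may carry an out-gadget. Drop that assertion. The reason you end with, that gadgets are built only on heavy vertices, is the paper's reason and suffices.
\item In Step~\ref{step:scin-light-out}, the paper uses \cref{lem:out-light-edge}, which says directly that $\Nin(v)\setminus\{\pout(v)\}$ contains only N-Steiner and out-Steiner vertices. Your route through \cref{lem:out-heavy-edge}~\ref{item:out-2} also works, but the justification should change. A light vertex never receives an in-Steiner gadget, so the ``in-Steiner child'' case is empty; Step~\ref{step:delete} would remove such children only for heavy $v$. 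Separately, $\pout(v)$, which may be in-Steiner, is heavy and is removed from $\Ain_v$ anyway.
\item Your treatment of the deferred-edge calls in Step~\ref{step:f-in} is correct, and it covers a call site the paper's proof does not list.
\item Your handling of Steps~\ref{step:scin-steiner} and~\ref{step:scin-bf-other}, and of the recursive calls, matches the paper.
\end{itemize}
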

\begin{proof}
We focus on the initial calls to the procedure $\ein$. For the recursive calls, the claim is immediate since a light vertex cannot be the in-parent or out-parent of any vertex. We now consider each location where $\ein$ is invoked:
\begin{itemize}
    \item Step~\ref{step:shortcut} of $\shortcut(G,t)$: Since $r\in \Nout(v)$, \cref{lem:in-light-edge} implies that $v$ is not a light in-Steiner vertex. Note that $r$ cannot be the in-parent of $v$, as in that case $v$ would have been deleted from $\Aout_r$ in Step~\ref{step:delete}.
    \item Step~\ref{step:scin-out} of $\scin$: $\pout(v)$ is heavy.
    \item Step~\ref{step:scin-steiner} of $\scin$: $v_\ell$ is a vertex created in the current iteration.
    \item Step~\ref{step:scin-bf-other} of $\scin$: $u_i$ is not a light in-Steiner vertex.
    \item Step~\ref{step:scin-light-out} of $\scin$: Since $v$ is light and $v$ is not an in-Steiner vertex, by \cref{obs:light-vertex}, $v$ must be a light out-Steiner vertex. Then by \cref{lem:out-light-edge}, since $u\in \Nin(v)$, $u$ is not a light in-Steiner vertex.
\end{itemize}
\end{proof}

\begin{lemma} \label{lem:in-light-edge}
Let $v$ be a level-$t$ in-Steiner vertex that is classified as \emph{light}. By the end of the $t$-th iteration, we have $\Nout(v)=\{\pin(v)\}\cup \Sin_v$, where $\Sin_v$ consists of only level-$t$ N-Steiner vertices and $|\Sin_v|\leq \lambda$. Moreover, the following property holds throughout: For every $u\in \Nout(v)\setminus \{\pin(v)\}$, either $u\in \Sin_v$, or $u$ is an in-Steiner vertex with $\pin(u)\in \Sin_v$.
\end{lemma}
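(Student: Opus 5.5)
The plan is to argue inside the simultaneous induction on graph operations already set up for \cref{lem:in-heavy-edge,clm:ein-light}. We assume all the interdependent statements hold before an update and check that they hold after it. The statement has two parts. The first, about the end of iteration $t$, should follow from \cref{lem:in-heavy-edge} together with the classification rule. The second is an invariant over all later iterations: it needs a case analysis over every step that can add an out-edge from $v$.

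\textbf{Part 1 (end of iteration $t$).} By \cref{lem:in-heavy-edge}, at the end of the $t$-th iteration $\Nout(v)$ consists of $\pin(v)$ together with some level-$t$ N-Steiner vertices. Define $\Sin_v=\Nout(v)\setminus\{\pin(v)\}$ at that moment. The light classification in Step~\ref{step:classify-in} is made at the end of iteration $t$ and requires $|\Nout(v)|-1<\lambda$, which gives $|\Sin_v|\le\lambda$. One detail to check is that no out-edge of $v$ is added after Step~\ref{step:classify-in} within iteration $t$. This holds because Steps~\ref{step:classify-in}--\ref{step:classify-out} are the final steps of the iteration, and Step~\ref{step:r'} does not touch $v$, since $v\notin N$.

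\textbf{Part 2 (later iterations).} For iterations $t'>t$, I go through the edge-adding steps exactly as in the proof of \cref{lem:in-heavy-edge}(b).
\begin{itemize}
\item Steps~\ref{step:in-up}, \ref{step:in-r}, \ref{step:in-tilde-r}, \ref{step:out-r} and \ref{step:out-tilde-r} add only edges out of newly created or regular vertices, so they never add an out-edge from $v$.
\item Step~\ref{step:out-up} would add an edge from $v$ only if $v$ were heavy. Since $v$ is light, it receives no out-Steiner gadget, and this step is excluded.
\item Step~\ref{step:r'} only modifies edges at $r,\bar r$.
\item Procedure $\ein(v,\cdot,\cdot)$ is never called, by \cref{clm:ein-light}. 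Procedure $\eout$ only adds edges out of negative or N-Steiner vertices.
\end{itemize}
The remaining case is Step~\ref{step:in-down}. There the algorithm adds an edge $(v,u)$ with $u$ a new in-Steiner vertex, where either $v\in\Ain_{\pin(u)}$ or $v$ is replaced by $\pout$ of an out-Steiner vertex. The second alternative is impossible, because $v$ is not an out-parent. So $\pin(u)\in\Nout(v)$ already, and $\pin(u)\neq\pin(v)$ because Step~\ref{step:delete} removed $\pin(v)$ from $\Aout_v$. By the induction hypothesis, $\pin(u)$ is either in $\Sin_v$ or an in-Steiner vertex. The latter is impossible by \cref{obs:steiner}~\ref{obs:pin}. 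Hence $\pin(u)\in\Sin_v$, and the invariant is preserved. Applying vertex potentials changes no adjacency, so it cannot break the invariant.

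\textbf{Main obstacle.} The delicate point is ruling out every route by which a light in-Steiner vertex could gain an out-edge. The shortcut steps of $\scin$ are where such an edge would naturally be added. They are deliberately redirected to $\pin(u_i)$ in Step~\ref{step:scin-bf-in}, and this redirection has to be matched against the precondition of $\ein$. That matching is exactly \cref{clm:ein-light}, which in turn relies on the present lemma (for Step~\ref{step:shortcut}) and on \cref{lem:out-light-edge}. I therefore expect the main care to go into ordering the joint induction so that these circular dependencies are only invoked on earlier operations.
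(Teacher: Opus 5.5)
Your proposal is correct and follows the paper's proof essentially step for step. At the end of iteration $t$, \cref{lem:in-heavy-edge} together with the light classification gives the required description of $\Nout(v)$. In later iterations the only nontrivial source of out-edges from $v$ is Step~\ref{step:in-down}, which you handle via the inductive hypothesis and \cref{obs:steiner}~\ref{obs:pin}, with $\ein$ excluded by \cref{clm:ein-light}.

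Two small touch-ups. First, Step~\ref{step:out-down} is missing from your list of harmless steps; it only adds edges out of newly created vertices. Second, $\pin(u)\neq\pin(v)$ holds because Step~\ref{step:delete} removes $v$ from $\Ain_{\pin(v)}$, which is the list actually scanned in Step~\ref{step:in-down}. Removing $\pin(v)$ from $\Aout_v$ is a different clause and does not imply it. You are right to rule this case out explicitly, which the paper's proof does not do.
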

\begin{proof}
\cref{lem:in-heavy-edge} implies that by the end of the $t$-th iteration, $\Nout(v)$ only contains $\pin(v)$ and some level-$t$ N-Steiner vertices. We denote the latter by the set $\Sin_v$. Since $v$ is classified as light in Step~\ref{step:classify-in}, we have $|\Sin_v|\leq \lambda$.

In subsequent iterations, consider Steps~\ref{step:in-up}, \ref{step:in-down}, \ref{step:in-r}, \ref{step:in-tilde-r}, \ref{step:out-up}, \ref{step:out-down}, \ref{step:out-r}, \ref{step:out-tilde-r} of $\shortcut(G,t)$. The algorithm never adds outgoing edges from $v$ in Steps~\ref{step:in-up} and~\ref{step:out-down} as edges added in these steps are directed out from newly created vertices. The same holds for Steps~\ref{step:in-r}, \ref{step:in-tilde-r}, \ref{step:out-r}, \ref{step:out-tilde-r}, because the endpoints of the added edges are either negative vertices or newly created vertices. In addition, it also never adds outgoing edges from $v$ in Step~\ref{step:out-up}, since $v$ is light. The only remaining case is Step~\ref{step:in-down}. In this case, since $v$ is light, it can never be the out-parent of any vertex. Thus, the algorithm can only add an edge $(v,u)$ for some in-Steiner vertex $u$, given that the edge $(v,\pin(u))$ already exists in the graph. Assuming \cref{lem:in-light-edge} holds before adding this edge, then we must have $\pin(u)\in \Sin_v$ since an in-Steiner vertex cannot be the in-parent of any vertex by \cref{obs:steiner}~\ref{obs:pin}. Therefore, the lemma continues to hold after adding this edge. We conclude the proof by noting that the algorithm never adds outgoing edges from $v$ via procedure $\ein$ by \cref{clm:ein-light}, and it also never adds outgoing edges from $v$ in Step~\ref{step:r'}.
\end{proof}

The proofs of \cref{lem:out-heavy-edge}, \cref{clm:eout-light} and \cref{lem:out-light-edge} are largely symmetric to those of \cref{lem:in-heavy-edge}, \cref{clm:ein-light} and \cref{lem:in-light-edge}. We therefore defer them to Appendix~\ref{sec:appendix}.

\begin{lemma} \label{lem:out-heavy-edge}
Let $v$ be a level-$t$ out-Steiner vertex. By the end of the $t$-th iteration, $\Nin(v)$ only consists of $\pout(v)$ and some level-$t$ N-Steiner vertices. Moreover, the following properties hold throughout:
\begin{enumerate}[label=(\alph*)]
    \item \label{item:out-1} For every $u\in \Nout(v)$, there is an edge $(\pout(v),u)$ in $G$ of weight at most $w(\pout(v),v)+w(v,u)$.
    \item \label{item:out-2} For every $u\in \Nin(v)$, one of the following holds:
    \begin{itemize}
        \item $u=\pout(v)$.
        \item $u$ is an in-Steiner vertex with $\pin(u)=v$.
        \item $u\in \bar N$. (Before Step~\ref{step:r'}, temporary edges with $u\in N$ may exist, and the edge $(u,v)$ will be replaced by $(\bar u,v)$ in Step~\ref{step:r'}.)
        \item $u$ is an N-Steiner vertex.
        \item $u$ is an out-Steiner vertex with $\pout(u)\in \Nin(v)$ and $\pout(u)\in \bar N$.
        \item $u$ is an out-Steiner vertex and $\pout(u)\in \Nin(v)$ is an N-Steiner vertex.
    \end{itemize}
    \item \label{item:out-3} For every $u\in \Nin(v)$:
    \begin{itemize}
        \item If $u\in \bar N$, then there exists an edge $(u,\pout(v))$ of weight at most $w(u,v)-w(\pout(v),v)$.
        \item If $u$ is an N-Steiner vertex, then there is an edge $(u,\pout(v))$ of weight $w(u,v)-w(\pout(v),v)$ in $\Fout$, and there is an edge $(\pn(u),\pout(v))$ of weight at most $w(u,v)-w(u,\pn(u))-w(\pout(v),v)$ in $G$.
    \end{itemize}
\end{enumerate}
\end{lemma}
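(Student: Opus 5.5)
The plan is to mirror the proof of \cref{lem:in-heavy-edge} under the in/out symmetry: reverse all edges, swap $\Nin\leftrightarrow\Nout$, $\pin\leftrightarrow\pout$, $\ein\leftrightarrow\eout$, $\Fin\leftrightarrow\Fout$, and $N\leftrightarrow\bar N$. The proof is part of the simultaneous induction over the algorithm's sequence of graph operations. We assume \cref{lem:in-heavy-edge}, \cref{clm:ein-light}, \cref{lem:in-light-edge} and the out-versions all hold before an update, and check that each update preserves the statement. By \cref{obs:edge-regular}, only Steps~\ref{step:in-up}--\ref{step:out-tilde-r}, Step~\ref{step:r'}, and edges from $\ein/\eout$ need examination. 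Each property is stable under valid potentials, since both sides of every inequality shift by the same potential difference.

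\textbf{Item (a)}, outgoing edges of $v$, is checked operation by operation.
\begin{itemize}
\item At creation in Step~\ref{step:out-down}, the edge $(v,u_i)$ has weight $w(\pout(v),u_i)-w(\pout(v),v)$, which trivially satisfies (a).
\item Also in Step~\ref{step:out-down}, the extra edge $(v,\pin(u_i))$ is handled by \cref{lem:in-heavy-edge}~\ref{item:in-1}.
\item The edges from Steps~\ref{step:out-r} and~\ref{step:out-tilde-r} are checked directly. For Step~\ref{step:out-r} with $\pout(v)\in\bar N$, we need an edge $(\pout(v),r)$ of weight $-w(r,\pout(v))$, which is the reverse edge of the well-behaved structure.
\item In later iterations, outgoing edges of $v$ can arise only in Step~\ref{step:in-down}: $v$ is an out-Steiner vertex and, by \cref{obs:steiner}~\ref{obs:pout}, not an out-parent, so the edge is accompanied by the parent-level edge $(\pout(v),v_\ell)$.
\item Edges added via $\ein(v,x,W)$ recursively call $\ein(\pout(v),x,W+w(\pout(v),v))$, which gives (a) exactly.
\end{itemize}

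\textbf{Item (b)}, incoming edges of $v$, is handled similarly. In iteration $t$, the only incoming edges are $(\pout(v),v)$ from Step~\ref{step:out-up} and edges $(\tilde r_t,v)$ from $\eout$ in Step~\ref{step:scout-steiner}. This gives the first sentence of the lemma, noting that recursive $\eout$ calls never target $v$ as a second argument unless $v$ is an in/out-parent. Later, Step~\ref{step:in-up} adds $(u,v)$ only for in-Steiner $u$ with $\pin(u)=v$. Step~\ref{step:out-down} adds $(u,v)$ for an out-Steiner $u$ with $(\pout(u),v)$ already present, and in a second case from $\pin(u')$ with $u'$ an in-Steiner neighbour.

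For these we invoke the inductive hypothesis of (b) on $\pout(u)\in\Nin(v)$ and rule out that $\pout(u)$ is an in-Steiner vertex with $\pin=v$. The reason is that Step~\ref{step:delete} removes $v$ from $\Aout_{\pout(u)}$, or alternatively \cref{lem:in-heavy-edge}~\ref{item:in-2} applies. This leaves $\pout(u)\in\bar N$ or an N-Steiner vertex. Edges from $\eout(u,v,W)$ have $u\in N$ or $u$ N-Steiner, which matches the list. In Step~\ref{step:r'}, edges $(r,v)$ are moved to $(\bar r,v)$, which explains the parenthetical temporary case.

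\textbf{Item (c)} follows directly from the description of $\eout$ when $v$ is an out-Steiner vertex.
\begin{itemize}
\item If $u\in N$, there is a recursive call adding $(u,\pout(v))$ of weight $W-w(\pout(v),v)$. After Step~\ref{step:r'}, the source becomes $\bar u$ with the weight adjusted consistently.
\item If $u$ is N-Steiner, the $\Fout$ entry and the edge $(\pn(u),\pout(v))$ are exactly those inserted by $\eout$.
\end{itemize}

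The main obstacle is the Step~\ref{step:r'} bookkeeping in (b) and (c), where $N$ and $\bar N$ are not symmetric. The weights of $(\bar r,\cdot)$ are shifted by $W-\wout_r$ and edges are relocated, so "at most" inequalities must be rechecked after the relocation rather than inherited verbatim. I would verify this by writing the Step~\ref{step:r'} change as the potential $\win_r$ on $r$ together with a path replacement $r\to\bar r\to x$ of equal weight (\cref{rmk:r'}).
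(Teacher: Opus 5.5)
Your proposal is correct and follows the paper's own appendix proof essentially step for step. That proof is exactly the mirror of \cref{lem:in-heavy-edge}: the same operation-by-operation check for (a), the same two Step~\ref{step:out-down} cases for (b), and (c) read off from $\eout$, while you are a bit more explicit than the paper about Step~\ref{step:out-r} (via the reverse edge $(\bar r,r)$) and Step~\ref{step:r'} (as a potential plus path replacement). One precision point: in (b) the two exclusion arguments are not interchangeable. Step~\ref{step:delete} rules out $\pin(\pout(u))=v$ only in the first Step~\ref{step:out-down} case, whereas the second case needs \cref{lem:in-heavy-edge}~\ref{item:in-1} to place $\pout(u)$ in $\Nin(v)$ and then \cref{lem:in-heavy-edge}~\ref{item:in-2} for the contradiction.
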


\begin{claim} \label{clm:eout-light}
For every call $\eout(u,v,W)$ made by the algorithm, the vertex $v$ is not a light out-Steiner vertex.
\end{claim}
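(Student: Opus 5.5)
The plan mirrors the proof of \cref{clm:ein-light}. For recursive calls the claim is immediate. Inside $\eout(u,v,W)$, a recursive call changes the second argument from $v$ to $\pin(v)$ (when $v$ is an in-Steiner vertex) or to $\pout(v)$ (when $v$ is an out-Steiner vertex). A light vertex is, by \cref{obs:light-vertex}, an in- or out-Steiner vertex. The new second argument is a parent, and a parent can never be a light out-Steiner vertex: an out-Steiner vertex is never an out-parent by \cref{obs:steiner}~\ref{obs:pout}, and an in-Steiner vertex cannot be the in-parent of anything by \cref{obs:steiner}~\ref{obs:pin}. More directly, parents are only created for heavy vertices in Steps~\ref{step:steiner-vertex-in} and \ref{step:steiner-vertex-out}, and a light vertex never acquires Steiner children.

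It therefore remains to check each location where $\eout$ is initially invoked. There are six.
\begin{itemize}
\item \textbf{Step~\ref{step:f-out}.} The edge $(u,\pout(v))$ was placed into $\Fout$ inside an earlier $\eout$ call. Its head is an out-parent, so by the argument above it is not a light out-Steiner vertex.
\item \textbf{Step~\ref{step:shortcut}.} We have $v\in\Aout_{\bar r}$, so $\bar r\in\Nin(v)$. If $v$ were a light out-Steiner vertex, the symmetric statement \cref{lem:out-light-edge} (whose in-neighbors besides $\pout(v)$ are N-Steiner vertices or out-Steiner children of them) would force $\bar r=\pout(v)$. In that case $v$ would have been removed from $\Aout_{\bar r}$ in Step~\ref{step:delete}.
\item \textbf{Step~\ref{step:scout-in} of $\scout$.} The target is $\pin(v)$, which is heavy because it is a parent.
\item \textbf{Step~\ref{step:scout-steiner}.} The target $v_\ell$ is created in the current iteration and is not yet classified.
\item \textbf{Step~\ref{step:scout-bf-other}.} The target $u_i$ is explicitly not a light out-Steiner vertex.
\item \textbf{Step~\ref{step:scout-light-in}.} Here $v$ is light and not an out-Steiner vertex, so by \cref{obs:light-vertex} it is a light in-Steiner vertex. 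Each $u\in\Aout_v$ is an out-neighbor of $v$ other than $\pin(v)$, which was removed in Step~\ref{step:delete}. By \cref{lem:in-light-edge}, such a $u$ is either an N-Steiner vertex or an in-Steiner vertex. Neither is a light out-Steiner vertex.
\end{itemize}

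The main subtlety is the circularity. The argument uses \cref{lem:in-light-edge} and \cref{lem:out-light-edge}, which are themselves proven in the same simultaneous induction over graph operations. We therefore apply them in their inductive form, valid for the graph state just before the current update. This is legitimate because the neighborhood facts we use concern edges that already exist when the call is made.
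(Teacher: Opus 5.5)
Your proof is correct and is essentially the paper's. Recursive calls are harmless because only heavy vertices ever acquire Steiner children, and each initial call site is checked using \cref{lem:out-light-edge} and \cref{lem:in-light-edge} in their inductive form. You also cover the re-insertion of deferred edges in Step~\ref{step:f-out}, which the paper's proof does not list, and you handle it correctly since the head there is an out-parent; note, though, that \cref{obs:steiner}~\ref{obs:pin} is not what rules out a light out-Steiner target $\pin(v)$ in the recursion---your ``parents are heavy'' argument is the one doing the work.
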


\begin{lemma} \label{lem:out-light-edge}
Let $v$ be a level-$t$ out-Steiner vertex. By the end of the $t$-th iteration, we have $\Nin(v)=\{\pout(v)\}\cup \Sout_v$, where $\Sout_v$ consists of only level-$t$ N-Steiner vertices. Moreover, the following property holds throughout: For every $u\in \Nin(v)\setminus \{\pout(v)\}$, either $u\in \Sout_v$, or $u$ is an out-Steiner vertex with $\pout(u)\in \Sout_v$.
\end{lemma}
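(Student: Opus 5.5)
The plan mirrors the proof of \cref{lem:in-light-edge}, with in- and out-roles swapped. It runs inside the simultaneous induction over the sequence of graph updates used for \cref{lem:in-heavy-edge,lem:out-heavy-edge,clm:ein-light,clm:eout-light,lem:in-light-edge}. We assume all of these statements hold before a given edge insertion and check that each insertion into $v$ preserves the claim.

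The first part, describing $\Nin(v)$ at the end of the $t$-th iteration, comes straight from \cref{lem:out-heavy-edge}. By the end of iteration $t$, $\Nin(v)$ consists of $\pout(v)$ and some level-$t$ N-Steiner vertices, and we name the latter set $\Sout_v$. If $v$ is classified as light in Step~\ref{step:classify-out}, we additionally get $|\Sout_v|\le\lambda$, since $|\Nin(v)|-1<\lambda$. The statement as written does not assume lightness, but the invariant we need is meaningful for light $v$. For heavy $v$ we will argue the same characterization separately, or observe that the downstream uses only need the light case.

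For subsequent iterations $t'>t$, we go through the edge-adding steps one by one.

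\textbf{Steps that never add an edge into $v$.}
\begin{itemize}
\item Steps~\ref{step:in-r}, \ref{step:in-tilde-r}, \ref{step:out-r}, \ref{step:out-tilde-r}: every added edge has an endpoint that is a negative vertex or a newly created vertex.
\item Steps~\ref{step:in-down} and \ref{step:out-up}: their heads are newly created Steiner vertices.
\item Step~\ref{step:in-up}: the head $w$ would need $\Ain_w\neq\emptyset$, but for an out-Steiner $v$ the head here is $\pin$ of a new in-Steiner vertex. If $v$ itself were a heavy parent, its $\Ain_v$ would only give edges $(v_\ell,v)$ from new in-Steiner children. This case is consistent with the heavy characterization but excluded for light $v$, since a light vertex never gets a gadget.
\item Step~\ref{step:r'}: it only redirects edges out of $r$ to out of $\bar r$, so it never creates an in-edge to an out-Steiner vertex from a non-$\bar N$ vertex. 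It needs a brief check that it preserves the invariant.
\end{itemize}

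\textbf{The one structural case: Step~\ref{step:out-down}.} Since $v$ is out-Steiner, by \cref{obs:steiner}~\ref{obs:pout} it is not the out-parent of any vertex, so $v$ can receive an edge here only as some $u_i$ with $(w',u_i)=(w'_\ell,v)$. Here $w'_\ell$ is a new out-Steiner vertex whose parent $w'$ satisfies $v\in\Aout_{w'}$. The invariant holding beforehand gives $w'\in\{\pout(v)\}\cup\Sout_v$ or $w'$ out-Steiner with $\pout(w')\in\Sout_v$. The last option cannot occur, because $w'$ must be heavy to own a gadget and out-Steiner vertices have $\Aout=\emptyset$ by Step~\ref{step:delete}. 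The option $w'=\pout(v)$ is excluded by Step~\ref{step:delete}, which removes $v$ from $\Aout_{\pout(v)}$. Hence $w'\in\Sout_v$, and the new in-neighbor $w'_\ell$ is an out-Steiner vertex with $\pout(w'_\ell)\in\Sout_v$, exactly as required. The companion edge $(\pin(u_i),\cdot)$ case does not apply, since $v$ is not in-Steiner.

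\textbf{Insertions through $\ein$ and $\eout$.} $\ein$ only adds edges into negative or N-Steiner vertices, so it never adds an edge into $v$. $\eout$ adds edges into $v$ only when $v$ is the second argument, and \cref{clm:eout-light} says this never happens for light out-Steiner $v$.

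\textbf{Main obstacle.} The delicate points are the Step~\ref{step:out-down} case, specifically ruling out $w'=\pout(v)$ and $w'$ being out-Steiner, and the fact that the invariant is used circularly with \cref{clm:eout-light} (whose Step~\ref{step:scout-light-in} case relies on the present lemma for light in-Steiner vertices). That circularity is resolved by the ordering in the simultaneous induction. Finally, the invariant is unaffected by applying vertex potentials, since it concerns only adjacency and not weights.
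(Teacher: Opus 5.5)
Your argument follows the paper's proof essentially step for step. You take the end-of-iteration description from \cref{lem:out-heavy-edge}, then run a case analysis over the edge-adding steps inside the simultaneous induction. Step~\ref{step:out-down} is the only nontrivial case, $\ein$ never points into $v$, and $\eout$ is ruled out by \cref{clm:eout-light}. Your explicit exclusion of $w'=\pout(v)$ via Step~\ref{step:delete} is slightly more careful than the paper's text. Three local justifications need repair.

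(i) In Step~\ref{step:out-down} the new edges are $(w'_\ell,u_i)$ and, when $u_i$ is in-Steiner, the companion edge $(w'_\ell,\pin(u_i))$. So an edge can land on $v$ either as $v=u_i$ or as $v=\pin(u_i)$. The second possibility is not excluded by \cref{obs:steiner}~\ref{obs:pout}, which concerns out-parents. Nor is it excluded by ``$v$ is not in-Steiner'', which only says no companion edge arises when $u_i=v$. What excludes it is lightness: a light $v$ is never processed in Step~\ref{step:steiner-vertex-in}, so it is nobody's in-parent.

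(ii) Lightness is a genuine hypothesis of this lemma; the paper's proof uses it, matching \cref{lem:in-light-edge}. For a heavy out-Steiner $v$ the invariant is false. Step~\ref{step:in-up} adds edges $(v_\ell,v)$ from $v$'s own in-Steiner children, and later calls such as Step~\ref{step:scout-bf-other} of $\scout$ can add $(\tilde r_{t'},v)$ with $t'>t$. So drop the option of ``arguing the same characterization separately for heavy $v$'' and state the lemma for light $v$ only.

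(iii) For Step~\ref{step:r'}, observing that redirected edges start at a vertex of $\bar N$ does not suffice, since an edge $(\bar r,v)$ would itself break the invariant. The needed check, as in the paper, is this: by the inductive hypothesis $\Nin(v)$ contains no vertex of $N$, so Step~\ref{step:r'} moves no edge onto $v$.

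A minor attribution point: \cref{clm:eout-light} depends on the present lemma through its Step~\ref{step:shortcut} case, while its Step~\ref{step:scout-light-in} case uses \cref{lem:in-light-edge}. The simultaneous induction covers both.
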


\subsubsection{Analysis of $\scin(r,t)$ and $\scout(r,t)$} 
\label{sec:scin}

We establish the following lemmas, which characterize the vertices that may ever be processed by the procedure.

\begin{lemma} \label{lem:plus-one-hop}
Let $r\in N$ be a negative vertex. For any in-Steiner vertex $v$, if there exists an $h$-hop $v\rightsquigarrow r$ path $P$ in $G$ of weight $W$, then there exists an $(h+1)$-hop path $\pin(v)\rightsquigarrow r$ in $G$ of weight at most $W-w(v,\pin(v))$.
\end{lemma}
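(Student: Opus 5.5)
Since $P$ starts at the in-Steiner vertex $v$, the plan is to case on the first edge $(v,u)$ of $P$, classified by \cref{lem:in-heavy-edge}~\ref{item:in-2}. In each case we find a path from $\pin(v)$ to $u$, or to a vertex further along $P$, whose weight is at most $w(v,u)-w(v,\pin(v))$ plus whatever we skip, and which uses at most one extra hop. We then splice it onto the remainder $P[u,r]$. Where a shortcut from $\pin(v)$ is negative, it is incident to a negative vertex and costs one hop; that is the source of the "$+1$".

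The cases are as follows.
\begin{itemize}
\item If $u=\pin(v)$, we take $P[u,r]$ itself. Its weight is $W-w(v,\pin(v))$ and it has $h$ hops.
\item If $u\in N$, \cref{lem:in-heavy-edge}~\ref{item:in-3} gives an edge $(\pin(v),u)$ of weight at most $w(v,u)-w(v,\pin(v))$. This edge may be negative, so it is charged one hop, and the rest of $P$ follows.
\item If $u$ is an N-Steiner vertex, \ref{item:in-3} gives an edge $(\pin(v),\pn(u))$ in $G$ of weight at most $w(v,u)-w(\pn(u),u)-w(v,\pin(v))$. We follow it with the edge $(\pn(u),u)$, which exists by Step~\ref{step:tilde-r-up} (weight $-\Delta$, incident to $N$). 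The two weights telescope to at most $w(v,u)-w(v,\pin(v))$. The pair costs at most one extra hop, provided we count the negative edge into $\pn(u)$ and the edge $(\pn(u),u)$ as a single hop.
\item If $u$ is an in-Steiner vertex whose parent $\pin(u)\in\Nout(v)$ lies in $N$ or is N-Steiner, I would first replace the prefix $v\to u\to\pin(u)$ via the structure of Step~\ref{step:in-down}. That edge was created alongside $(v,\pin(u))$ with matching weight, so $w(v,\pin(u))\le w(v,u)+w(u,\pin(u))$. We then recurse on the path starting $v\to\pin(u)$, reducing to the previous two cases.
\item If $u$ is an out-Steiner vertex with $\pout(u)=v$, \cref{lem:out-heavy-edge}~\ref{item:out-1} lets us replace $v\to u\to u'$ by $v\to u'$ with no larger weight. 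We then re-examine the new first edge.
\end{itemize}

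To make the recursive reductions well-founded, I would induct on the number of edges of $P$, or on levels. I also need the hop count not to increase in these reductions. It doesn't, because the replaced edges are non-negative, given \cref{obs:list} and that only edges incident to $N$ are negative, by \cref{clm:negative-incident}.

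The step I expect to be the main obstacle is the hop accounting in the N-Steiner and negative-vertex cases. The new edge $(\pin(v),\cdot)$ into $N$ may be negative, and I must check that it adds exactly one hop even when the original path's next edge out of that negative vertex was already a hop. There is also the edge $(\pn(u),u)$ of weight $-\Delta$. I would handle this using the fact that, after Step~\ref{step:r'}, negative edges appear only as $(r,\bar r)$, together with \cref{rmk:r'}. Then $r\to\tilde r_t$ is realized through $\bar r$ and counts as the single new hop.

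The in-Steiner-child case also needs care. We must confirm, by the argument given in the proof of \cref{lem:in-heavy-edge}~\ref{item:in-2}, that the edge $(v,\pin(u))$ still exists with a suitably small weight after subsequent potentials; this holds because potentials preserve differences along paths.
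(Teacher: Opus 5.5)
Your cases $u=\pin(v)$, $u\in N$, $u$ N-Steiner, and $u$ out-Steiner with $\pout(u)=v$ match the paper's proof. The gap is in the in-Steiner-child case. You propose to ``replace the prefix $v\to u\to\pin(u)$'' by the edge $(v,\pin(u))$ and recurse. But nothing forces $P$ to continue from $u$ to $\pin(u)$. The successor of $u$ on $P$ can be any out-neighbor allowed by \cref{lem:in-heavy-edge}~\ref{item:in-2} applied to $u$. Typically it is an N-Steiner hub $\tilde s_{t'}$, reached through a shortcut edge added in Step~\ref{step:scin-steiner}. In that situation $P$ never visits $\pin(u)$, so there is no such prefix and no path ``starting $v\to\pin(u)$'' to recurse on. Applying the lemma inductively to $u$ and $P[u,r]$ is not a substitute either. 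When $\pin(u)$ is N-Steiner, the $\Fin$ edge $(\pin(v),\pin(u))$ need not be in $G$. You would then have to enter $\pin(u)$ via $\pn(\pin(u))$, which costs a second extra hop on top of the one the induction already spends.

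The missing ingredient is the gadget edges created in Steps~\ref{step:in-r} and~\ref{step:in-tilde-r}. They let you re-enter $u$ itself from the negative vertex associated with its parent, so that $P[u,r]$ is followed unchanged.

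\emph{Case $\pin(u)\in N$.} \cref{lem:in-heavy-edge}~\ref{item:in-1} applied to $u$ gives $w(v,\pin(u))\le w(v,u)+w(u,\pin(u))$. Part~\ref{item:in-3} applied to $v$ gives an edge $(\pin(v),\pin(u))$ of weight at most $w(v,\pin(u))-w(v,\pin(v))$. Step~\ref{step:in-r} supplies the edge $(\pin(u),u)$ of weight $-w(u,\pin(u))$. The path $\pin(v)\to\pin(u)\to u\rightsquigarrow r$ therefore has weight at most $W-w(v,\pin(v))$, and $(\pin(u),u)$ is the single extra hop.

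\emph{Case $\pin(u)$ N-Steiner.} Use the $G$-edge $(\pin(v),\pn(\pin(u)))$ from the second bullet of \ref{item:in-3}. Follow it with the Step~\ref{step:in-tilde-r} edge $(\pn(\pin(u)),u)$, of weight $w(\pn(\pin(u)),\pin(u))-w(u,\pin(u))$. The weights telescope in the same way.

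The paper adds these edges precisely for this lemma. Without them your case analysis does not close.
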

\begin{proof}
Let $u$ be the successor of $v$ on the path $P$. Such a vertex must exist since $r$ is not an in-Steiner vertex. Since $v$ is an in-Steiner vertex and $u\in \Nout(v)$, by \cref{lem:in-heavy-edge}~\ref{item:in-2}, one of the following holds:
\begin{itemize}
    \item $u$ is an out-Steiner vertex with $\pout(u)=v$: Let $u'$ be the successor of $u$ on the path $P$. Such a vertex must exist since $r$ is not an out-Steiner vertex. Since $u'\in \Nout(u)$ and $\pout(u)=v$, by \cref{lem:out-heavy-edge}~\ref{item:out-1}, there is an edge $(v,u')$ of weight at most $w(v,u)+w(u,u')=w(P[v,u'])$. We note that $u'$ is not an out-Steiner vertex. Otherwise, since $u\in \Nin(u')$ and $\pout(u)=v$ is an in-Steiner vertex, this would contradict \cref{lem:out-heavy-edge}~\ref{item:out-2} applied to $u'$. Hence, we can replace $u$ by $u'$. After that, we still have $u\in \Nout(v)$ and $w(v,u)\leq w(P[v,u])$, but $u$ must now fall into one of the later cases of our analysis.
    \item $u=\pin(v)$: The desired path is $P[u,r]$. We have
    \[
    w(P[u,r])=w(P)-w(P[v,u])\leq w(P)-w(v,u)=w(P)-w(v,\pin(v)),
    \]
    and $P[u,r]$ clearly contains at most $h$ hops.
    \item $u\in N$: By \cref{lem:in-heavy-edge}~\ref{item:in-3}, there is an edge $(\pin(v),u)$ of weight at most $w(v,u)-w(v,\pin(v))$. Therefore, the desired path is $\pin(v)\to u\rightsquigarrow r$, where the $u\rightsquigarrow r$ segment is $P[u,r]$. The weight of the path is
    \[
        w(\pin(v),u)+w(P[u,r])\leq w(v,u)-w(v,\pin(v))+w(P)-w(v,u)\leq w(P)-w(v,\pin(v)),
    \]
    and it contains at most $h$ hops.
    \item $u$ is an N-Steiner vertex: By \cref{lem:in-heavy-edge}~\ref{item:in-3}, there is an edge $(\pin(v),\pn(u))$ of weight at most $w(v,u)-w(\pn(u),u)-w(v,\pin(v))$. The desired path is $\pin(v)\to \pn(u)\to u\rightsquigarrow r$, where the $u\rightsquigarrow r$ segment is $P[u,r]$. The weight of this path is
    \begin{align*}
    & \quad\; w(\pin(v),\pn(u))+w(\pn(u),u)+w(P[u,r])\\
    & \leq w(v,u)-w(\pn(u),u)-w(v,\pin(v))+w(\pn(u),u)+w(P)-w(v,u)\\
    & = w(P)-w(v,\pin(v)).
    \end{align*}
    Note that in the $\pin(v)\to \pn(u)\to u$ segment, $(\pn(u),u)$ is the only edge that can be negative. (Strictly speaking, the edge $(\pn(u),u)$ may actually correspond to a length-two path, see \cref{rmk:r'}. Nevertheless, such a length-two path contains at most one negative edge.) Therefore, this path consists of at most $h+1$ hops.
    \item $u$ is an in-Steiner vertex and $\pin(u)\in N$: Since $v\in \Nin(u)$, by \cref{lem:in-heavy-edge}~\ref{item:in-1} applied to $u$, there is an edge $(v,\pin(u))$ of weight at most $w(v,u)+w(u,\pin(u))$. Since $\pin(u)\in\Nout(v)$ and $\pin(u)\in N$, by \cref{lem:in-heavy-edge}~\ref{item:in-3} applied to $v$, there is an edge $(\pin(v),\pin(u))$ of weight at most
    \[
    w(v,\pin(u))-w(v,\pin(v))\leq w(v,u)+w(u,\pin(u))-w(v,\pin(v))
    \]
    Moreover, in Step~\ref{step:in-r} of the procedure $\shortcut$, it adds an edge $(\pin(u),u)$ of weight $-w(u,\pin(u))$. Therefore, the desired path is $\pin(v)\to \pin(u)\to u\rightsquigarrow r$, where the $u\rightsquigarrow r$ segment is $P[u,r]$. The weight of the path is
    \begin{align*}
    & \quad\; w(\pin(v),\pin(u))+w(\pin(u),u)+w(P[u,r])\\
    & \leq w(v,u)+w(u,\pin(u))-w(v,\pin(v))-w(u,\pin(u))+w(P)-w(v,u)\\
    & = w(P)-w(v,\pin(v)).
    \end{align*}
    Note that in the $\pin(v)\to \pin(u)\to u$ segment, $(\pin(u),u)$ is the only edge that can be negative. (Strictly speaking, the edge $(\pin(u),u)$ may actually correspond to a length-two path, see \cref{rmk:r'}. Nevertheless, such a length-two path contains at most one negative edge.) Therefore, this path consists of at most $h+1$ hops.
    \item $u$ is an in-Steiner vertex and $\pin(u)$ is an N-Steiner vertex: Since $v\in \Nin(u)$, by \cref{lem:in-heavy-edge}~\ref{item:in-1} applied to $u$, there is an edge $(v,\pin(u))$ of weight at most $w(v,u)+w(u,\pin(u))$. Since $\pin(u)\in\Nout(v)$ and $\pin(u)$ is an N-Steiner vertex, by \cref{lem:in-heavy-edge}~\ref{item:in-3} applied to $v$, there is an edge $(\pin(v),\pn(\pin(u)))$ of weight at most
    \begin{align*}
    & \quad\; w(v,\pin(u))-w(\pn(\pin(u)),\pin(u))-w(v,\pin(v))\\
    &\leq w(v,u)+w(u,\pin(u))-w(\pn(\pin(u)),\pin(u))-w(v,\pin(v))
    \end{align*}
    Moreover, in Step~\ref{step:in-tilde-r} of the procedure $\shortcut$, it adds an edge $(\pn(\pin(u)),u)$ of weight $w(\pn(\pin(u)),\pin(u))-w(u,\pin(u))$. Therefore, the desired path is $\pin(v)\to \pn(\pin(u))\to u\rightsquigarrow r$, where the $u\rightsquigarrow r$ segment is $P[u,r]$. The weight of the path is
    \begin{align*}
    & \quad\; w(\pin(v),\pn(\pin(u)))+w(\pn(\pin(u)),u)+w(P[u,r])\\
    & \leq w(v,u)+w(u,\pin(u))-w(\pn(\pin(u)),\pin(u))-w(v,\pin(v))\\
    & \quad\; +w(\pn(\pin(u)),\pin(u))-w(u,\pin(u))+w(P)-w(v,u)\\
    & = w(P)-w(v,\pin(v)).
    \end{align*}
    Note that in the $\pin(v)\to \pn(\pin(u))\to u$ segment, $(\pn(\pin(u)),u)$ is the only edge that can be negative. (Strictly speaking, the edge $(\pn(\pin(u)),u)$ may actually correspond to a length-two path, see \cref{rmk:r'}. Nevertheless, such a length-two path contains at most one negative edge.) Therefore, this path consists of at most $h+1$ hops.
\end{itemize}
Since the desired path can be found in all cases, the lemma follows.
\end{proof}

\begin{lemma} \label{lem:scin-dist-valid}
Throughout the algorithm $\scin(r,t)$, for any vertex $v\in V$, if $d_v<\infty$, then there exists an $L$-hop $v\rightsquigarrow r$ path in $G$ of weight at most $d_v$.
\end{lemma}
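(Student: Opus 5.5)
The plan is to prove a stronger, hop-indexed statement by induction over the order in which $\scin(r,t)$ assigns finite values to $d_v$. If $d_v$ is set or updated while a vertex of level $\ell$ is being processed, I will show there is a $(t-\ell')$-hop $v\rightsquigarrow r$ path of weight at most $d_v$, where $\ell'$ is the level of $v$. (Alternatively, the hop count can be bounded by the number of update steps in the derivation chain.) Since levels range over $0,\dots,t$ and $t\le L$, every path produced has at most $L$ hops. The invariant is preserved under taking minima, because each finite value of $d_v$ is witnessed by some path; so it suffices to check each individual assignment.

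\textbf{Base case.} The initial assignments are $d_v=d^0(v,r)$ for $v\in\Uin_r$, and these are witnessed by $0$-hop paths by definition.

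\textbf{Step~\ref{step:scin-in}.} Here $v$ is an in-Steiner vertex with a witnessing path $P$ of weight at most $d_v$ and some number $h$ of hops. I apply \cref{lem:plus-one-hop} to $P$. This gives a $(h+1)$-hop path from $\pin(v)$ to $r$ of weight at most $d_v-w(v,\pin(v))$, which is exactly the new candidate value for $d_{\pin(v)}$.

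\textbf{Step~\ref{step:scin-bf-in}.} Here $u_i\in\Ain_v$, so $(u_i,v)$ is an edge. The path consisting of $(u_i,v)$ followed by $v$'s witness is a $u_i\rightsquigarrow r$ path of weight at most $w(u_i,v)+d_v$ with the same hop count, since $w(u_i,v)\ge 0$. Then $u_i$ is an in-Steiner vertex, so \cref{lem:plus-one-hop} applies again. It yields a path from $\pin(u_i)$ with one more hop and weight at most $w(u_i,v)+d_v-w(u_i,\pin(u_i))=W$.

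\textbf{Hop counting.} Each update adds at most one hop, so it remains to show that each update strictly lowers the level, which gives depth at most $t\le L$. In Step~\ref{step:scin-in}, \cref{obs:steiner} gives that $\pin(v)$ has lower level than $v$. In Step~\ref{step:scin-bf-in}, I need the level of $\pin(u_i)$ to be below that of the processed vertex $v$. I would argue this as follows: $u_i$ is a light in-Steiner vertex and an in-neighbor of $v$ with a nonnegative edge. By \cref{lem:in-light-edge}, $v$ is then either $\pin(u_i)$, which is excluded by Step~\ref{step:delete}, or a level-$\operatorname{lev}(u_i)$ N-Steiner vertex, or an in-Steiner vertex whose parent is such a vertex. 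The last option is excluded because in-Steiner vertices have empty $\Ain$. Hence $\operatorname{lev}(v)\ge\operatorname{lev}(u_i)>\operatorname{lev}(\pin(u_i))$.

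The same argument also shows that a vertex inserted into $S_{\ell'}$ lies at a strictly lower level than the one being processed, so each $S_\ell$ is finalized before it is processed and the induction order is well defined.

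\textbf{Main obstacle.} I expect the main difficulty to be this level bookkeeping in Step~\ref{step:scin-bf-in}, in particular making the case analysis from \cref{lem:in-light-edge} airtight. A secondary point to verify is the hop accounting along the induction chain: each application of \cref{lem:plus-one-hop} costs one hop, and the number of such applications along any chain is bounded by the number of level decreases, hence by $t\le L$.
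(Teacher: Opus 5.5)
Your proposal is correct and follows essentially the same route as the paper: the same level-indexed invariant (a $(t-\ell)$-hop witness path for a level-$\ell$ vertex), the same use of \cref{lem:plus-one-hop} in Steps~\ref{step:scin-in} and~\ref{step:scin-bf-in}, and the same appeal to \cref{obs:steiner} and \cref{lem:in-light-edge} to show that the level strictly drops with each update. The only difference is in excluding the case that $v$ is an in-Steiner vertex: you use that $\Ain_v=\emptyset$ for in-Steiner $v$, while the paper notes that Step~\ref{step:scin-bf-in} is only reached for non-in-Steiner $v$; both arguments are valid.
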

\begin{proof}
We prove the following claim, from which the lemma follows immediately.
\begin{claim}
For any level-$\ell$ vertex $v\in V$, if $d_v<\infty$, then there exists an $(t-\ell)$-hop $v\rightsquigarrow r$ path in $G$ of weight at most $d_v$.
\end{claim}
For every vertex $v\in \Uin_r$, we set $d_v=d^0(v,r)$ initially and the claim holds trivially. We now prove the claim inductively for distances updated in Steps~\ref{step:scin-in} and~\ref{step:scin-bf-in}.

Suppose we are at Step~\ref{step:scin-in}, and suppose $v$ is a level-$\ell$ vertex. By induction hypothesis, there exists a $(t-\ell)$-hop $v\rightsquigarrow r$ path in $G$ of weight at most $d_v$. By \cref{lem:plus-one-hop}, there exists a $(t-\ell+1)$-hop $\pin(v)\rightsquigarrow r$ path of weight at most $d_v-w(v,\pin(v))$. Moreover, by \cref{obs:steiner}~\ref{obs:lower-level-in}, $\pin(v)$ is a level-$\ell'$ vertex for some $\ell' < \ell$. Therefore, if we update $d_{\pin(v)}$ to $d_v-w(v,\pin(v))$ in Step~\ref{step:scin-in}, there must exist a $(t-\ell+1)\leq (t-\ell')$-hop $\pin(v)\rightsquigarrow r$ path in $G$ with weight at most $d_{\pin(v)}$.

Suppose we are at Step~\ref{step:scin-bf-in}, and suppose $v$ is a level-$\ell$ vertex, $u_i$ is a level-$\ell'$ vertex. By induction hypothesis, there exists a $(t-\ell)$-hop $v\rightsquigarrow r$ path in $G$ of weight at most $d_v$. This implies that there is a $(t-\ell)$-hop $u_i\rightsquigarrow r$ path of weight at most $d_v+w(u_i,v)$. By \cref{lem:plus-one-hop}, there exists a $(t-\ell+1)$-hop $\pin(u_i)\rightsquigarrow r$ path of weight at most $d_v+w(u_i,v)-w(u_i,\pin(u_i))$. On the other hand, since $u_i$ is a light in-Steiner vertex, by \cref{lem:in-light-edge}, $v$ must fall into one of the following categories:
\begin{itemize}
    \item $v=\pin(u_i)$: This is not possible since we have $u_i\in \Ain_v$ at Step~\ref{step:scin-bf-in}. If $v=\pin(u_i)$, we would have deleted $u_i$ from $\Ain_v$ in Step~\ref{step:delete} of $\shortcut(G,t)$.
    \item $v$ is a level-$\ell'$ N-Steiner vertex: Then we have $\ell'=\ell$. By \cref{obs:steiner}~\ref{obs:lower-level-in}, $\pin(u_i)$ is a level-$\ell''$ vertex for some $\ell''<\ell'=\ell$.
    \item $v$ is an in-Steiner vertex where $\pin(v)$ is a level-$\ell'$ vertex: This is also not possible since at Step~\ref{step:scin-bf-in}, $v$ is not an in-Steiner vertex.
\end{itemize}
Therefore, we must have $\pin(u_i)$ is a level-$\ell''$ vertex for some $\ell'' < \ell$. Therefore, if we update $d_{\pin(u_i)}$ to $d_v+w(u_i,v)-w(u_i,\pin(u_i))$ in Step~\ref{step:scin-bf-in}, then there must be a $(t-\ell+1)\leq (t-\ell'')$-hop $\pin(u_i)\rightsquigarrow r$ path in $G$ with weight at most $d_{\pin(u_i)}$.
\end{proof}

\begin{claim} \label{clm:scin-dist}
During the algorithm $\scin(r,t)$, for any vertex $v\in V$, if $d_v<\infty$, then $d_v<\Delta_r$.
\end{claim}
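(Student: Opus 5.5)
The plan is to argue by induction over the sequence of assignments to the values $d_v$ during $\scin(r,t)$, noting that $d_v$ is only ever finite after being set at one of three places. These are the initialization for $v\in\Uin_r$, the update in Step~\ref{step:scin-in}, and the update in Step~\ref{step:scin-bf-in}. Since every update takes a minimum, it suffices to show that each newly proposed value is strictly less than $\Delta_r$. The invariant then persists, because $\min\{d,x\}<\Delta_r$ whenever $x<\Delta_r$, or whenever $d<\Delta_r$ already holds.

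For the initialization, $v\in\Uin_r$ means by definition $d^0(v,r)<\Delta_r$, so $d_v=d^0(v,r)<\Delta_r$. For Step~\ref{step:scin-bf-in}, the update is only performed when $W<\Delta_r$, and the new value is exactly that $W$. So this case holds by construction.

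The only nontrivial case is Step~\ref{step:scin-in}. Here $v$ is an in-Steiner vertex with $d_v<\Delta_r$ by the inductive hypothesis, and the proposed value is $d_v-w(v,\pin(v))$. It therefore suffices to show $w(v,\pin(v))\ge 0$. The edge $(v,\pin(v))$ is created in Step~\ref{step:in-up} with weight $w(u_{k-2^\ell},\pin(v))$, where $u_{k-2^\ell}\in\Ain_{\pin(v)}$. By Step~\ref{step:delete}, $\Ain$ contains only in-neighbors joined by non-negative edges, so this weight is non-negative at creation. Afterwards the graph is only modified by applying valid potentials (Step~\ref{step:betweenness-reduction}), which never make a non-negative edge negative. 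It is also modified by Step~\ref{step:r'}. That step only changes weights of edges into $r\in N$ (it subtracts $\win_r\le 0$, so these stay non-negative), weights of edges out of $\bar r$ (these remain non-negative, since Step~\ref{step:r'} is designed to restore well-behavedness), and the edge $(r,\bar r)$ itself. Since an in-Steiner vertex is never in $N$ or $\bar N$, the edge $(v,\pin(v))$ keeps a non-negative weight. Hence $d_v-w(v,\pin(v))\le d_v<\Delta_r$.

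I expect the main obstacle to be this verification that $w(v,\pin(v))\ge 0$ persists across iterations, in particular through the reweighting in Step~\ref{step:r'} when $\pin(v)\in N$. The case analysis above, together with \cref{clm:negative-incident}, should handle it. If needed, one can fall back on observing that this edge is non-negative and not incident to $N$, so by \cref{clm:negative-incident} and validity of potentials it stays non-negative.
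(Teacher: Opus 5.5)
Your proof is correct and follows the paper's argument exactly. It proceeds by induction over the assignments to $d_v$: the initialization is handled by the definition of $\Uin_r$, the update in Step~\ref{step:scin-bf-in} is guarded by $W<\Delta_r$, and the update in Step~\ref{step:scin-in} is handled via $w(v,\pin(v))\ge 0$. The paper merely asserts that last inequality, whereas you justify it by tracking the Step~\ref{step:in-up} edge through valid reweightings and Step~\ref{step:r'}.

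One small quibble: your fallback remark that this edge is ``not incident to $N$'' fails when $\pin(v)\in N$. Your main case analysis already covers that case, since weights into $r\in N$ only increase in Step~\ref{step:r'} because $\win_r\le 0$.
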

\begin{proof}
For every $v\in \Uin_r$, \cref{lem:compute-U} implies that $d^0(v,r)<\Delta_r$. We inductively show that the bound holds for distances updated in Steps~\ref{step:scin-in} and~\ref{step:scin-bf-in}. In Step~\ref{step:scin-in}, since $d_v<\Delta_r$ by induction hypothesis and $w(v,\pin(v))\geq 0$, we update $d_{\pin(v)}$ only with the value $d_v - w(v,\pin(v)) < \Delta_r$. In Step~\ref{step:scin-bf-in}, we update $d_{\pin(u_i)}$ with a value $W$ only when $W < \Delta_r$. Therefore, the claim always holds.
\end{proof}

\begin{lemma} \label{lem:scin-time}
The running time of $\scin(r,t)$ is $\tilde O(|\Vin_r|^2+
\lambda|\Vin_r|)$.
\end{lemma}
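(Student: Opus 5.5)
The plan is to bound two things: the set of vertices ever inserted into some $S_\ell$, and the work done per processed vertex. Let $X_r$ be the set of vertices $v$ with $d_v<\infty$ at some point during $\scin(r,t)$.

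\textbf{Step 1: every processed vertex lies in $\Vin_r$, so $|X_r|\le|\Vin_r|$.} By \cref{lem:scin-dist-valid}, each $v\in X_r$ has an $L$-hop $v\rightsquigarrow r$ path of weight at most $d_v$. By \cref{clm:scin-dist}, $d_v<\Delta_r$. Since $h=L+1\ge L$, this gives $d^h(v,r)\le d_v<\Delta_r$, i.e. $v\in\Vin_r$. (The $\Delta_r$ used here is the one from \cref{lem:compute-delta} with this $h$.)

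\textbf{Step 2: each vertex is processed once.} Every update in Steps~\ref{step:scin-in} and~\ref{step:scin-bf-in} inserts a vertex of strictly lower level than the vertex being processed; this is shown inside the proof of \cref{lem:scin-dist-valid}. The levels are swept from $t$ down to $0$, so a vertex is processed only after all its insertions are done. Storing each $S_\ell$ as a set, each vertex is therefore processed at most once. Initialization costs $O(|\Uin_r|)$ plus $d_v$ resets, which can be done lazily.

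\textbf{Step 3: cost per processed vertex $v$.} Each $\ein$ call costs $\tilde O(1)$ by \cref{clm:addedge-time}.
\begin{itemize}
\item If $v$ is in-Steiner or out-Steiner (Steps~\ref{step:scin-in},~\ref{step:scin-out}), the cost is $\tilde O(1)$.
\item If $v$ is heavy and not in-Steiner, the binary search in Step~\ref{step:scin-sort} on the presorted $\Ain_v$ costs $O(\log n)$, and Step~\ref{step:scin-steiner} costs $\tilde O(1)$. The brute-force loop in Step~\ref{step:scin-bf} touches at most $2^\ell\le 2(k-j+1)$ indices.
\item If $v$ is light and not in-Steiner, then by \cref{obs:light-vertex} it is a light out-Steiner vertex. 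Its frozen list $\Ain_v$ then has size at most $\lambda+1$, which comes from the classification in Step~\ref{step:classify-out} together with \cref{lem:out-light-edge}. Step~\ref{step:scin-light-out} therefore costs $\tilde O(\lambda)$.
\end{itemize}
Summed over $X_r$, the light vertices contribute $\tilde O(\lambda|\Vin_r|)$.

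\textbf{Step 4 (main obstacle): bounding the brute-force indices $u_i$ with $i\ge j$.} For these indices, $w(u_i,v)+d_v<\Delta_r$. Combined with Step~1, this gives an $L'$-hop $u_i\rightsquigarrow r$ path of weight below $\Delta_r$, where $L'=L$ or $L+1$; if Step~1 already gives $t-\ell$ hops, then $L'\le h$. Hence every $u_i$ with $i\ge j$ lies in $\Vin_r$, except possibly $u_j$ when $j=k$ is the default, which is $O(1)$ extra. So $k-j+1\le|\Vin_r|+1$, and the brute-force work at each heavy $v$ is $O(|\Vin_r|)$. Summed over at most $|\Vin_r|$ processed vertices, this gives $\tilde O(|\Vin_r|^2)$.

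The care needed is hop accounting: the hop count must stay $\le h=L+1$ after appending one non-negative edge. Appending $(u_i,v)$ adds no hop, since $\Ain_v$ contains only non-negative edges, so the bound holds. Adding all terms yields $\tilde O(|\Vin_r|^2+\lambda|\Vin_r|)$.
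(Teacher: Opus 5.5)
Your proof follows the paper's structure: processed vertices lie in $\Vin_r$, each is processed once, and the per-vertex cost is $\tilde O(|\Vin_r|+\lambda)$. Your handling of the default case $j=k$ and your lazy initialization of $d_v$ are actually more careful than the paper's.

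However, one step is false as written. For a light out-Steiner vertex $v$ processed in Step~\ref{step:scin-light-out}, you claim $|\Ain_v|\le\lambda+1$ from the classification in Step~\ref{step:classify-out}. That classification only bounds $\Nin(v)$ at the end of the iteration $t_0$ in which $v$ was created. A light vertex processed in $\scin(r,t)$ was classified in an earlier iteration $t_0<t$, and it keeps gaining in-neighbors in between. The lemma you cite, \cref{lem:out-light-edge}, explicitly allows in-neighbors $u$ that are out-Steiner vertices with $\pout(u)\in\Sout_v$. Concretely:
\begin{itemize}
\item each $x\in\Sout_v$ is an N-Steiner vertex, hence always heavy;
\item $v$ stays in $\Aout_x$, since only $\pout(v)\neq x$ is excluded;
\item so in every later iteration, Step~\ref{step:steiner-vertex-out} builds an out-gadget for $x$, and Step~\ref{step:out-down} can add non-negative edges $(x_\ell,v)$ for up to $\lceil\log k\rceil$ values of $\ell$.
\end{itemize}
All of these edges land in the frozen list $\Ain_v$, so $\lambda+1$ is not an upper bound on $|\Ain_v|$.

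The fix is the counting the paper does. Every $u\in\Ain_v$ is either in $\Sout_v$, with $|\Sout_v|\le\lambda$, or an out-Steiner child of some $x\in\Sout_v$. Each such $x$ acquires $O(\log n)$ out-Steiner children per iteration, over at most $L$ iterations. Hence $|\Ain_v|=O(\lambda L\log n)=\tilde O(\lambda)$. Your stated cost of $\tilde O(\lambda)$ for Step~\ref{step:scin-light-out} is therefore correct, and the lemma goes through, but only by accounting for this polylogarithmic growth, not via a $\lambda+1$ bound.
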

\begin{proof}
Combining \cref{lem:scin-dist-valid} and \cref{clm:scin-dist}, we see that for every vertex $v\in V$ with $d_v<\infty$, $d^L(v,r)<\Delta_r$ holds. Therefore, for every vertex $v$ that has been processed during $\scin(r,t)$, $v\in \Vin_r$ must hold. We first note that each vertex is only processed once. This follows from the proof of \cref{lem:scin-dist-valid}: whenever a level-$\ell$ vertex $v$ is processed, and a level-$\ell'$ vertex $v'$ is added in Step~\ref{step:scin-in} or Step~\ref{step:scin-bf-in}, $\ell'<\ell$ must hold.

Now we show that the time required for processing a single vertex $v$ is $\tilde O(|\Vin_r|+\lambda L)$. By \cref{clm:addedge-time}, the running time of $\ein$ is $\tilde O(1)$. We focus on the non-trivial steps:
\begin{itemize}
    \item Step~\ref{step:scin-bf}: The running time of this step is $O(2^{\ell})=O(k-j+1)$. We now show that $k-j+1\leq |\Vin_r|$. This is because for every $j\leq i\leq k$, $w(u_i,v)+d_v-\Delta_r<0$ holds. Since \cref{lem:scin-dist-valid} implies $d_v\geq d^L(v,r)$, by triangle inequality, we have
    \[
    d^L(u_i,r)\leq w(u_i,v)+d^L(v,r)\leq w(u_i,v)+d_v<\Delta_r.
    \]
    Therefore, $u_i\in \Vin_r$ for every $j\leq i\leq k$, which implies that $k-j+1\leq |\Vin_r|$.
    \item Step~\ref{step:scin-light-out}: The running time of this step is $O(|\Ain_v|)=O(|\Nin(v)|)$. Since $v$ is not an in-Steiner vertex and $v$ is light, by \cref{obs:light-vertex}, $v$ must be a light out-Steiner vertex. Then by \cref{lem:out-light-edge}, for every $u\in \Nin(v)\setminus \{\pout(v)\}$, either $u\in \Sout_v$ or $u$ is an out-Steiner vertex with $\pout(u)\in \Sout_v$. Moreover, we have $|\Sout_v|\leq \lambda$. For every vertex $x\in \Sout_v$, each iteration of the $\shortcut$ algorithm may create at most $O(\log n)$ out-Steiner vertices $y$ with $\pout(y)=x$. Note that the number of vertices may increase, \cref{sec:runtime} shows that it is always bounded by $\tilde O(n)$. Since there are $L$ iterations in total, the number of such out-Steiner vertices $y$ is bounded by $O(L\log n)$. Therefore,
    \[
    \Nin(v)=O(|\Sout_v|L\log n)=O(\lambda L\log n)=\tilde O(\lambda).
    \]
\end{itemize}
\end{proof}

An analogous analysis applies to $\scout$, yielding the same running time bound. Since the argument is largely symmetric, we defer the proof to Appendix~\ref{sec:appendix}.

\begin{lemma} \label{lem:scout-time}
The running time of $\scout(r,t)$ is $\tilde O(|\Vout_r|^2+
\lambda|\Vout_r|)$.
\end{lemma}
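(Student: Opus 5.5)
We mirror the proof of \cref{lem:scin-time}, swapping in-/out-roles throughout. The plan has three parts: show that every processed vertex lies in $\Vout_r$ and is processed at most once, then bound the per-vertex work, then multiply.

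\textbf{Step 1: processed vertices lie in $\Vout_r$.} We first prove the out-analogue of \cref{lem:plus-one-hop}: for any out-Steiner vertex $v$, if there is an $h$-hop path $r\rightsquigarrow v$ of weight $W$, then there is an $(h+1)$-hop path $r\rightsquigarrow \pout(v)$ of weight at most $W-w(\pout(v),v)$.

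The argument looks at the predecessor $u$ of $v$ on the path and splits into cases according to \cref{lem:out-heavy-edge}~\ref{item:out-2}.
\begin{itemize}
\item If $u=\pout(v)$, truncate the path.
\item If $u\in\bar N$ or $u$ is an N-Steiner vertex, use \cref{lem:out-heavy-edge}~\ref{item:out-3}. In the N-Steiner case we route through $\pn(u)$, which costs one extra hop.
\item If $u$ is an out-Steiner vertex, use the edges from Steps~\ref{step:out-r} and \ref{step:out-tilde-r} together with \cref{lem:out-heavy-edge}~\ref{item:out-1}.
\item If $u$ is an in-Steiner vertex with parent $v$, step back once more via \cref{lem:in-heavy-edge}~\ref{item:in-1}.
\end{itemize}

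With this lemma in hand, we induct over levels exactly as in \cref{lem:scin-dist-valid}. The claim is that whenever $d_v<\infty$ for a level-$\ell$ vertex $v$, there is a $(t-\ell+1)$-hop path $r\rightsquigarrow v$ of weight at most $d_v$. The extra hop accounts for the initial $d^1$, and since $h=L+1$ this stays within the hop bound of \cref{lem:compute-delta}. For Step~\ref{step:scout-bf-out} we need the parent to have strictly lower level; this uses \cref{lem:out-light-edge} in the same way \cref{lem:in-light-edge} was used on the in-side. Next, the analogue of \cref{clm:scin-dist} gives $d_v<-\Delta_r$ whenever $d_v<\infty$. Combining the two, every processed $v$ satisfies $d^h(r,v)<-\Delta_r$, so $v\in\Vout_r$. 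Finally, levels strictly decrease along insertions, so each vertex is processed at most once.

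\textbf{Step 2: per-vertex cost $\tilde O(|\Vout_r|+\lambda)$.} Each call to $\eout$ costs $\tilde O(1)$ by \cref{clm:addedge-time}, and sorting and the binary search are free since the lists are pre-sorted.
\begin{itemize}
\item \emph{Step~\ref{step:scout-bf}.} This costs $O(2^\ell)=O(k-j+1)$. Every $u_i$ with $i\ge j$ satisfies $d^h(r,u_i)\le d_v+w(v,u_i)<-\Delta_r$ by the triangle inequality, so $u_i\in\Vout_r$ and hence $k-j+1\le|\Vout_r|$.
\item \emph{Step~\ref{step:scout-light-in}.} By \cref{obs:light-vertex}, $v$ is a light in-Steiner vertex. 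By \cref{lem:in-light-edge}, its out-neighbours are $\pin(v)$, the set $\Sin_v$ with $|\Sin_v|\le\lambda$, and in-Steiner children of members of $\Sin_v$. Each member of $\Sin_v$ has at most $O(L\log n)$ such children, so this step costs $\tilde O(\lambda)$.
\end{itemize}

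\textbf{Step 3: total.} Multiplying the per-vertex cost by the number of processed vertices, at most $|\Vout_r|$, gives $\tilde O(|\Vout_r|^2+\lambda|\Vout_r|)$.

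The main obstacle is the out-version of \cref{lem:plus-one-hop}, in particular the cases where the predecessor is an N-Steiner vertex or an out-Steiner vertex whose parent lies in $\bar N$. These are the cases where a negative edge enters, so the weights must be tracked carefully and the hop count must increase by at most one. The remaining parts of the argument carry over from the in-side proof essentially verbatim.
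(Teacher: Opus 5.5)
Your proposal is correct and follows the paper's proof essentially step for step. Both prove the out-analogue of \cref{lem:plus-one-hop} (the paper's \cref{lem:plus-one-hop-out}) with the same case split on the predecessor, run the level induction giving $(t-\ell+1)$-hop paths so that every processed vertex lies in $\Vout_r$ (using $h=L+1$), and bound the per-vertex work the same way: for Step~\ref{step:scout-bf} via the triangle inequality over $\Aout_v$, and for Step~\ref{step:scout-light-in} via \cref{lem:in-light-edge}.
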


\subsubsection{Putting it Together} \label{sec:overall}
We now combine the ingredients from \cref{sec:structural_invariants,sec:scin} to establish the key global consequences of shortcutting: controlled growth in the number of vertices, distance preservation, and hop reduction.
\begin{claim} \label{clm:heavy-increase}
In a single iteration of the algorithm $\shortcut(G,t)$, the number of heavy vertices increases by $O(\eta \log n)$.
\end{claim}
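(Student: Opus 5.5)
Since the heavy/light designation is made only once per vertex, I will count the vertices newly designated heavy during iteration $t$. By \cref{obs:light-vertex} and the initialization, there are only three sources: (i) the N-Steiner vertices $\tilde r_t$ created in Step~\ref{step:tilde-r}, which are heavy by fiat; (ii) in-Steiner vertices created in Step~\ref{step:steiner-vertex-in} and classified heavy in Step~\ref{step:classify-in}; and (iii) out-Steiner vertices created in Step~\ref{step:steiner-vertex-out} and classified heavy in Step~\ref{step:classify-out}. No vertex changes from light to heavy later, because classification is performed only for vertices created in the current iteration. Source (i) contributes exactly $\eta$ vertices, one per $r\in N$.

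For source (ii), I will use the characterization in \cref{lem:in-heavy-edge}. At the end of iteration $t$, the out-neighborhood of a level-$t$ in-Steiner vertex $v$ is $\{\pin(v)\}$ together with some level-$t$ N-Steiner vertices. Each edge $(v,\tilde r_t)$ to such an N-Steiner vertex is added during iteration $t$ by a direct (non-recursive) call to $\ein$ with first argument $v$. By Step~\ref{step:scin-steiner}, such a call happens at most once per processed pair $(r,x)$, where $x$ is the heavy vertex being processed in $\scin(r,t)$ and $v=x_\ell$ is the single Steiner vertex chosen for it. Recursive calls never create edges out of $v$, since $v$ is not anyone's parent (\cref{obs:steiner}). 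Other direct call sites (Steps~\ref{step:scin-bf-other}, \ref{step:scin-light-out}, \ref{step:shortcut}) use vertices $u_i\in\Ain_x$ that existed before the iteration, so they are not level-$t$ in-Steiner vertices. Hence the total number of such edges over all level-$t$ in-Steiner vertices is at most the total number of vertices processed across all calls $\scin(r,t)$. By the argument in \cref{lem:scin-time}, each processed vertex lies in $\Vin_r$ and is processed once, so this total is at most $\sum_{r}|\Vin_r|$.

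A vertex classified heavy has at least $\lambda$ such edges. So the number of heavy in-Steiner vertices is at most $\frac{1}{\lambda}\sum_r|\Vin_r| = O\!\left(\frac{\eta n/\sqrt b}{\lambda}\right)=O(\eta)$ by \cref{lem:compute-delta} and the choice $\lambda=n/\sqrt b$. Source (iii) is symmetric via \cref{lem:out-heavy-edge} and $\scout$, giving $O(\eta)$ as well. This already gives $O(\eta)$ in total, within the claimed $O(\eta\log n)$. The slack is available if needed: for instance, one could bound crudely by counting every Steiner vertex of a gadget, or absorb logarithmic factors from the sums.

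The main obstacle is the charging step: justifying that every out-edge of a new in-Steiner vertex to an N-Steiner vertex comes from one processing event. This requires checking every call site of $\ein$ and ruling out recursive additions. I would go through the list used in \cref{clm:ein-light} case by case.
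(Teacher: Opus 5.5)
Your proof is correct and follows essentially the paper's argument: \cref{lem:in-heavy-edge} confines the out-edges of a new in-Steiner vertex $v$ to edges $(v,\tilde r_t)$ created in Step~\ref{step:scin-steiner} while $\pin(v)$ is processed in $\scin(r,t)$; processed vertices lie in $\Vin_r$; and \cref{lem:compute-delta} together with $\lambda=n/\sqrt{b}$ finishes the count. The one refinement is that you charge each such edge to a processing event, each of which adds an edge from only one $x_\ell$, whereas the paper charges $\mathbf{1}[\pin(v)\in\Vin_r]$ to all $O(\log n)$ in-Steiner children of each $u\in\Vin_r$. This saves the $\log n$ factor, giving $O(\eta)$, and you also make explicit the $\eta$ new N-Steiner vertices that the paper leaves implicit.
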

\begin{proof}
Since Steiner vertices created in earlier iterations cannot become heavy, we consider only the newly created vertices.
We first focus on the newly created in-Steiner vertices. Let $S$ be the set of all level-$t$ in-Steiner vertices. For any $v\in S$, by \cref{lem:in-heavy-edge}, $\Nout(v)$ only consists of $\pin(v)$ and some level-$t$ N-Steiner vertices by the end of $\shortcut(G,t)$. For any negative vertex $r$, if the edge $(v,\tilde r_t)$ is added, then it must be added in Step~\ref{step:scin-steiner} of $\scin(r,t)$  when its parent vertex $\pin(v)$ is processed. Since $d_{\pin(v)}<\infty$, by \cref{lem:scin-dist-valid} and \cref{clm:scin-dist}, we have that $d^L(\pin(v),r)<\Delta_r$ and thus $\pin(v)\in \Vin_r$. Therefore, we have that
\[
|\Nout(v)|\leq 1+\sum_{r\in N} \textbf{1}\left[\pin(v)\in \Vin_r\right].
\]
For each $v\in S$, we classify it as heavy if $|\Nout(v)|-1\geq \lambda$. Summing over all $v\in S$, we obtain
\begin{align*}
\sum_{v\in S}\left(|\Nout(v)|-1\right)&\leq \sum_{v\in S}\sum_{r\in N}\textbf{1}\left[\pin(v)\in \Vin_r\right]\\
& \leq \sum_{r\in N}\sum_{u\in \Vin_r} |\{v\in S:\pin(v)=u\}|.
\end{align*}
For every $u\in \Vin_r$, if $u$ is heavy, then in Step~\ref{step:steiner-vertex-in}, the algorithm may create at most $O(\log n)$ in-Steiner vertices $u'$ with $\pin(u')=u$. If $u$ is light, then it can never be the in-parent of any vertex. Therefore,
\[
\sum_{v\in S}\left(|\Nout(v)|-1\right) \leq \sum_{r\in N} |\Vin_r|\cdot O(\log n) =  O\left(\frac{\eta n\log n}{\sqrt{b}}\right)
\]
by \cref{lem:compute-delta}. This implies that the number of vertices $v\in S$ with $|\Nout(v)|-1\geq \lambda$ is at most
\[
O\left(\frac{\eta n \log n}{\lambda\sqrt{b}}\right)=O(\eta \log n),
\]
since $\lambda=n/\sqrt{b}$.

For the newly created out-Steiner vertices, let $S$ denote the set of all level-$t$ out-Steiner vertices. By the same argument, one can show that
\[
\sum_{v\in S}\left(|\Nin(v)|-1\right) \leq \sum_{r\in N} |\Vout_r|\cdot \log n =  O\left(\frac{\eta n\log n}{\sqrt{b}}\right)
\]
Therefore, the number of vertices $v\in S$ with $|\Nin(v)|-1\geq \lambda$ is also bounded by $O(\eta\log n)$.
\end{proof}

\begin{lemma}
For any pair of negative vertices $u,v\in N$, after shortcutting, their distance $d(u,v)$ does not decrease.
\end{lemma}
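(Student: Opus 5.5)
The plan is to show that every edge added during $\shortcut(G,t)$ is \emph{dominated} by a walk that already existed. Concretely: whenever an edge $(x,y)$ of weight $W$ is inserted, the graph before the insertion contains an $x\rightsquigarrow y$ walk of weight at most $W$, or some accompanying edges make such a walk available. We argue by induction over the sequence of edge insertions. Since reweighting by a potential $\varphi$ changes every $u\rightsquigarrow v$ path weight by the same amount $\varphi(u)-\varphi(v)$, potentials never decrease true distances, so only insertions matter.

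There is one genuine subtlety: we add new vertices. The statement is about distances between vertices of $N$ (original vertices), and a new Steiner vertex can only help if a path through it is no shorter than one avoiding it. We therefore prove the stronger invariant: for every pair of vertices $x,y$ that existed before an insertion, $d(x,y)$ is unchanged. Every newly created vertex $z$ (Steiner or N-Steiner) additionally gets a witness pair: a vertex $p$ and a value $c$ with $d(z,\cdot)\ge c+d(p,\cdot)$ and $d(\cdot,z)\ge d(\cdot,p)-c$. The natural choice is $p=\pin(z)$ with $c=w(z,\pin(z))$, and similarly for out-Steiner vertices and $p=r$, $c=\Delta_r$ for $\tilde r_t$. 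Any path through $z$ can then be rerouted through $p$ at no extra cost. Concretely, for in-Steiner $z$ we use \cref{lem:in-heavy-edge}(a) for incoming edges. For outgoing edges we use \cref{lem:in-heavy-edge}(b),(c), exactly as in the case analysis of \cref{lem:plus-one-hop} (ignoring hop counts), which shows that any $z\rightsquigarrow r$ path yields a $\pin(z)\rightsquigarrow r$ path of weight reduced by $w(z,\pin(z))$. Since the endpoints are in $N$, this suffices to eliminate Steiner vertices from a shortest path one at a time, in order of decreasing level (\cref{obs:steiner}).

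The next step is to check, step by step, that each explicitly added edge is dominated. Steps~\ref{step:negative-left},~\ref{step:negative-right} add edges whose weight equals the weight of an existing path, using the correctness of $d^0,d^1$ from \cref{lem:compute-U}. The gadget edges of Steps~\ref{step:in-up}--\ref{step:out-tilde-r} compose with the parent edge to the original weight, and Step~\ref{step:tilde-r-up} creates a zero-weight cycle $r\to\tilde r_t\to r$. Step~\ref{step:r'} is a potential shift plus rerouting through the zero-weight pair $(r,\bar r),(\bar r,r)$. For edges into $\tilde r_t$ from $\scin$, an edge $(u,\tilde r_t)$ of weight $w(u,v)+d_v-\Delta_r$ composed with $(\tilde r_t,r)$ of weight $\Delta_r$ gives weight $w(u,v)+d_v$. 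By \cref{lem:scin-dist-valid}, $G$ contains a $u\to v\rightsquigarrow r$ walk of at most this weight, so any path using $(u,\tilde r_t)$ and then leaving $\tilde r_t$ can be compared against going to $r$ first. The Steiner-vertex variant of Step~\ref{step:scin-steiner} works the same way via $(v_\ell,v)$. Symmetrically for $\scout$, we use the edge $(r,\tilde r_t)$ of weight $-\Delta_r$ and the analogue of \cref{lem:scin-dist-valid}. The recursive insertions in $\ein/\eout$, including deferred $\Fin,\Fout$ edges, have weights equal to the triggering edge's weight adjusted by existing gadget edge weights, so they are dominated by the triggering edge composed with those gadget edges.

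The main obstacle is circularity. Dominating paths may use edges through $\tilde r_t$ added earlier in the same iteration, and the $d_v$ values were computed in a graph that is changing. I would handle this by noting that every dominating walk uses only edges present before the insertion. The potential problem is zero- or negative-weight cycles through $\tilde r_t$: we must show that $\tilde r_t$ never lies on a negative cycle. For this I would use the witness bound $d(\cdot,\tilde r_t)\ge d(\cdot,r)-\Delta_r$ and $d(\tilde r_t,\cdot)\ge\Delta_r+d(r,\cdot)$ as part of the inductive invariant, checking that each insertion preserves both inequalities. This is the delicate step, and it is where the care with deferred edges and light vertices (\cref{clm:ein-light}) enters.
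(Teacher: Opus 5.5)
There is a genuine gap, and it is exactly the step you defer as ``the delicate step.'' Your witness bounds $d(z,\cdot)\ge c+d(p,\cdot)$ and $d(\cdot,z)\ge d(\cdot,p)-c$ are inequalities between distances in the evolving graph. So showing that an insertion preserves them means controlling paths that combine the new edge with other new vertices and edges. That is precisely the circularity you identify and leave unresolved.

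The fallback of eliminating Steiner vertices from a shortest path ``one at a time, in order of decreasing level'' is not well-founded either. After replacing $z$ by $p=\pin(z)$ you are left with shortest subpaths $x\rightsquigarrow p$ and $p\rightsquigarrow y$ of the current graph. These may again pass through Steiner and N-Steiner vertices of any level, so no measure visibly decreases.

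The bounds are also not right as stated. For $y=z$ the out-bound says $d(p,z)\le -c$. But when $p\in\bar N$ or $p$ is an out-Steiner vertex, nothing supplies a $p\rightsquigarrow z$ walk of weight $-c$: Steps~\ref{step:in-r} and~\ref{step:in-tilde-r} add one only when $p\in N$ or $p$ is an N-Steiner vertex. Nothing forces the bound at targets such as $z$'s later out-Steiner children either. For the same reason, the edges $(\pin(u),\cdot)$ that $\ein$ inserts, or defers to $\Fin$, for an in-Steiner $u$ are not ``the triggering edge composed with gadget edges.'' Outside those two special cases the gadget only contains $(u,\pin(u))$, which points the wrong way.

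The missing idea is to make the invariant edge-local and to measure it against a fixed reference instead of the current distances. Your parent-plus-offset anchoring is the right ingredient. The paper composes it down to an ancestor $a(v)\in N$ with accumulated offset $\psi(v)$. It then maintains $w(x,y)\ge d_0(a(x),a(y))+\psi(x)-\psi(y)$ for every edge of $G$ and of $\Fin\cup\Fout$, where $d_0$ is the distance using only original edges.

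Summing over any path $v_1,\dots,v_k$ gives $w(P)\ge d_0(a(v_1),a(v_k))+\psi(v_1)-\psi(v_k)$. This is the lemma when $v_1,v_k\in N$, and it applies equally to cycles. Verifying a newly inserted edge uses only this telescoped bound on the current graph, e.g.\ $d_v\ge d(v,r)\ge d_0(a(v),r)+\psi(v)$ via \cref{lem:scin-dist-valid}. So nothing is circular.

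If you prefer your domination language, there is an alternative repair. For each in-Steiner $z$ with parent $p$, add a virtual edge $(p,z)$ of weight $-w(z,p)$; symmetrically, add $(z,p)$ of weight $-w(p,z)$ for each out-Steiner $z$. Each closes a zero-weight cycle. Then show that every inserted edge weighs at least the current distance between its endpoints in this augmented graph. A single such insertion changes no augmented distance and creates no negative cycle, and real distances are bounded below by augmented ones.
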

\begin{proof}
For every vertex $v\in V$, we define its ancestor vertex $a(v)$ and potential $\psi(v)$ as follows:
\begin{itemize}
    \item If $v\in N$, let $a(v)=v$ and $\psi(v)=0$.
    \item If $v\in \bar N$, let $(r,v)$ be the negative edge incident to $v$. Let $a(v)=r$ and $\psi(v)=-w(r,v)$. 
    \item If $v$ is an in-Steiner vertex, let $a(v)=a(\pin(v))$ and $\psi(v)=w(v,\pin(v))+\psi(\pin(v))$.
    \item If $v$ is an out-Steiner vertex, let $a(v)=a(\pout(v))$ and $\psi(v)=-w(\pout(v),v)+\psi(\pout(v))$.
    \item If $v$ is an N-Steiner vertex, let $a(v)=\pn(v)$ and $\psi(v)=\psi(\pn(v))+w(v,\pn(v))=w(v,\pn(v))$.
\end{itemize}
After applying vertex potential $\varphi$ to the whole graph, it is straightforward to verify that for every vertex $v\in V$, the potential $\psi(v)$ is updated to $\psi(v)+\varphi(v)-\varphi(a(v))$.

For any pair of vertices $u,v\in N$, let $d_0(u,v)$ denote the distance from $u$ to $v$ in $G$ when the path is restricted to edges in $E_0$ (i.e., the edges of the original graph $G_0$). We show the following invariant is maintained throughout the algorithm.

\begin{invariant} \label{inv:dist}
For every $u,v\in V$, $w(u,v)\geq d_0(a(u),a(v))+\psi(u)-\psi(v)$. In addition, for every $e=(u,v)\in \Fin\cup \Fout$, $w(e)\geq d_0(a(u),a(v))+\psi(u)-\psi(v
)$.
\end{invariant}

We start by deriving the lemma from this invariant. Consider any path $P=(v_1,\dots,v_k)$ in graph $G$. The weight of the path is
\begin{align*}
w(P)&=\sum_{i=1}^{k-1} w(v_i,v_{i+1})\\
& \geq \sum_{i=1}^{k-1}\Big(d_0(a(v_i),a(v_{i+1}))+\psi(v_i)-\psi(v_{i+1})\Big) && \text{(\cref{inv:dist})}\\
& = \psi(v_1)-\psi(v_k)+\sum_{i=1}^{k-1} d_0(a(v_i),a(v_{i+1}))\\
& \geq \psi(v_1)-\psi(v_k)+d_0(a(v_1),a(v_k)) && \text{(triangle inequality)}
\end{align*}
If $v_1,v_k\in N$, we have $\psi(v_1)=\psi(v_k)=0$ and $a(v_1)=v_1,a(v_k)=v_k$, and thus $w(P)\geq d_0(v_1,v_k)$. This implies that any path between two negative vertices has weight at least their distance when restricted to the edges of the original graph $G_0$, which establishes the lemma.

We now assume that \cref{inv:dist} holds at the beginning of the algorithm $\shortcut(G,t)$ and show that it continues to hold after the shortcut edges are added. By the argument in the preceding paragraph, at the start of $\shortcut(G,t)$ we have
\begin{equation} \label{eq:dist}
d(u,v)\geq d_0(a(u),a(v))+\psi(u)-\psi(v) \text{\quad for every } u,v\in V.
\end{equation}

Note that it suffices to show that \cref{inv:dist} is satisfied when the edge $(u,v)$ is added to the graph. It is easy to see that the invariant continues to hold after applying any vertex potential. We first focus on edges added in $\shortcut(G,t)$:
\begin{itemize}
    \item Step~\ref{step:negative-left}: By \cref{eq:dist},
    \[
    d_0(a(u),a(\bar r))+\psi(u)-\psi(\bar r)\leq d(u,\bar r)\leq w(u,\bar u)+d^0(\bar u,r)+w(r,\bar r).
    \]
    \item Step~\ref{step:negative-right}: By \cref{eq:dist},
    \[
    d_0(a(r),a(\bar v))+\psi(r)-\psi(\bar v)\leq d(r,\bar v)\leq d^1(r,v)+w(v,\bar v).
    \]
    \item Steps~\ref{step:in-up},~\ref{step:in-down}: Since $\pin(v_\ell)=v$, we have $a(v_\ell)=a(v)$ and $\psi(v_\ell)=\psi(v)+w(v_\ell,v)$. Thus,
    \[
    d_0(a(v_\ell),a(v))+\psi(v_\ell)-\psi(v)=\psi(v_\ell)-\psi(v)=w(v_\ell,v).
    \]
    For every $1\leq i\leq k-2^\ell$, the algorithm adds an edge $(u_i,v_\ell)$ of weight $w(u_i,v)-w(v_\ell,v)$. \cref{inv:dist} applied to the edge $(u_i,v)$ implies that $w(u_i,v)\geq d_0(a(u_i),a(v))+\psi(u_i)-\psi(v)$. Therefore,
    \begin{align*}
    &\quad\; d_0(a(u_i),a(v_\ell))+\psi(u_i)-\psi(v_\ell)\\
    & = d_0(a(u_i),a(v))+\psi(u_i)-\psi(v)-w(v_\ell,v)\\
    & \leq w(u_i,v)-w(v_\ell,v).
    \end{align*}
    \item Step~\ref{step:in-r}: Since $\pin(v_\ell)=v$, we have $a(v_\ell)=a(v)$ and $\psi(v_\ell)=\psi(v)+w(v_\ell,v)$. Thus,
    \[
    d_0(a(v),a(v_\ell))+\psi(v)-\psi(v_\ell)=\psi(v)-\psi(v_\ell)=-w(v_\ell,v).
    \]
    \item Step~\ref{step:in-tilde-r}: Since $\pin(v_\ell)=v$, we have $a(v_\ell)=a(v)$ and $\psi(v_\ell)=\psi(v)+w(v_\ell,v)$. Since $v$ is a N-Steiner vertex, we have $a(v)=\pn(v)$ and $\psi(v)=\psi(\pn(v))+w(v,\pn(v))=\psi(\pn(v))-w(\pn(v),v)$. Thus
    \begin{align*}
    &\quad\;d_0(a(\pn(v)),a(v_\ell))+\psi(\pn(v))-\psi(v_\ell)\\
    & = \psi(\pn(v))-\psi(v_\ell)\\
    & = \psi(\pn(v))-\psi(v)-w(v_\ell,v)\\
    & = w(\pn(v),v)-w(v_\ell,v).
    \end{align*}
    \item Steps~\ref{step:out-up},~\ref{step:out-down}: Since $\pout(v_\ell)=v$, we have $a(v_\ell)=a(v)$ and $\psi(v_\ell)=\psi(v)-w(v,v_\ell)$. Thus,
    \[
    d_0(a(v),a(v_\ell))+\psi(v)-\psi(v_\ell)=\psi(v)-\psi(v_\ell)=w(v,v_\ell).
    \]
    For every $1\leq i\leq k-2^\ell$, the algorithm adds an edge $(v_\ell,u_i)$ of weight $w(v,u_i)-w(v,v_\ell)$. \cref{inv:dist} applied to the edge $(v,u_i)$ implies that $w(v,u_i)\geq d_0(a(v),a(u_i))+\psi(v)-\psi(u_i)$. Therefore,
    \begin{align*}
    &\quad\;d_0(a(v_\ell),a(u_i))+\psi(v_\ell)-\psi(u_i)\\
    & = d_0(a(v),a(u_i))+\psi(v)-w(v,v_\ell)-\psi(u_i)\\
    & \leq w(v,u_i)-w(v,v_\ell).
    \end{align*}
    \item Step~\ref{step:out-r}: Since $\pout(v_\ell)=v$, we have $a(v_\ell)=a(v)$ and $\psi(v_\ell)=\psi(v)-w(v,v_\ell)$. Since $v\in \bar N,r\in N$, we have $a(v)=r, a(r)=r$ and $\psi(v)=\psi(r)-w(r,v)$. Thus,
    \begin{align*}
    &\quad\;d_0(a(v_\ell),a(r))+\psi(v_\ell)-\psi(r)\\
    & = \psi(v_\ell)-\psi(r)\\
    & = \psi(v)-w(v,v_\ell) -\psi(r)\\
    & = -w(r,v)-w(v,v_\ell).
    \end{align*}
    \item Step~\ref{step:out-tilde-r}: Since $\pout(v_\ell)=v$, we have $a(v_\ell)=a(v)$ and $\psi(v_\ell)=\psi(v)-w(v,v_\ell)$. Since $v$ is a N-Steiner vertex, we have $a(v)=\pn(v)$ and $\psi(v)=\psi(\pn(v))+w(v,\pn(v))$. Thus
    \begin{align*}
    &\quad\;d_0(a(v_\ell),a(\pn(v)))+\psi(v_\ell)-\psi(\pn(v))\\
    & = \psi(v_\ell)-\psi(\pn(v))\\
    & = \psi(v)-w(v,v_\ell)-\psi(\pn(v))\\
    & = w(v,\pn(v))-w(v,v_\ell).
    \end{align*}
    \item Step~\ref{step:tilde-r-up}: This case is immediate since $a(\tilde r_t)=r,\psi(r)=0$, and $\psi(\tilde r_t)=w(\tilde r_t,r)=-w(r,\tilde r_t)$.
\end{itemize}
Next, we focus on the edges added via the procedure $\ein$. We start by showing the following claim.
\begin{claim} \label{clm:ein-valid}
For any call to the procedure $\ein(u,v,W)$ with $W\geq d_0(a(u),a(v))+\psi(u)-\psi(v)$, any edge added by the procedure (including those added during recursions) satisfy \cref{inv:dist}.
\end{claim}
\begin{proof}
The algorithm first adds an edge $(u,v)$ of weight $W$, which satisfies \cref{inv:dist}. We now consider the recursive calls that the procedure may make.
\begin{itemize}
    \item If $u$ is an in-Steiner vertex, then $a(u)=a(\pin(u))$ and $\psi(u)=\psi(\pin(u))+w(u,\pin(u))$.
    \begin{itemize}
        \item If $v\in N$, the procedure recursively calls $\ein(\pin(u),v,W-w(u,\pin(u)))$. We show that $W-w(u,\pin(u))\geq d_0(a(\pin(u)),a(v))+\psi(\pin(u))-\psi(v)$. Indeed,
        \begin{align*}
            &\quad\;d_0(a(\pin(u)),a(v))+\psi(\pin(u))-\psi(v)\\
            & = d_0(a(u),a(v))+\psi(u)-w(u,\pin(u)).-\psi(v)\\
            & \leq W-w(u,\pin(u)).
        \end{align*}
        \item If $v$ is an N-Steiner vertex, then $a(v)=\pn(v)$ and $\psi(v)=\psi(\pn(v))+w(v,\pn(v))=\psi(\pn(v))-w(\pn(v),v)$. The procedure adds an edge $(\pin(u),v)$ of weight $W-w(u,\pin(u))$ to $\Fin$, and it may recursively call $\ein(\pin(u),v,W-w(u,\pin(u)))$ and \\$\ein(\pin(u),\pn(v),W-w(\pn(v),v)-w(u,\pin(u)))$. For the same reason as above, we have $W-w(u,\pin(u))\geq d_0(a(\pin(u)),a(v))+\psi(\pin(u))-\psi(v)$. Combined with the fact that $\psi(v)=\psi(\pn(v))-w(\pn(v),v)$, we obtain
        \[
        W-w(\pn(v),v)-w(u,\pin(u))\geq d_0(a(\pin(u)),a(\pn(v)))+\psi(\pin(u))-\psi(\pn(v)).
        \]
    \end{itemize}
    \item If $u$ is an out-Steiner vertex, then $a(u)=a(\pout(u))$ and $\psi(u)=\psi(\pout(u))-w(\pout(u),u)$. The procedure recursively calls $\ein(\pout(u),v,W+w(\pout(u),u))$. We show that $W+w(\pout(u),u)\geq d_0(a(\pout(u)),a(v))+\psi(\pout(u))-\psi(v)$. Indeed,
    \begin{align*}
        &\quad\;d_0(a(\pout(u)),a(v))+\psi(\pout(u))-\psi(v)\\
        & = d_0(a(u),a(v)) + \psi(u) + w(\pout(u),u) - \psi(v)\\
        & \leq W+w(\pout(u),u).
    \end{align*}
\end{itemize}
\end{proof}

By a similar argument, one can show the following analogous claim for $\eout$.
\begin{claim} \label{clm:eout-valid}
For any call to the procedure $\eout(u,v,W)$ with $W\geq d_0(a(u),a(v))+\psi(u)-\psi(v)$, any edge added by the procedure (including those added during recursions) satisfy \cref{inv:dist}. 
\end{claim}

Now we show that in Step~\ref{step:shortcut} of $\shortcut(G,t)$, initial calls to $\ein(u,v,W)$ (excluding recursive calls) satisfies $W\geq d_0(a(u),a(v))+\psi(u)-\psi(v)$.
\begin{itemize}
    \item If $\Delta_r\leq 0$, for $v\in \Ain_r$, $\ein(v,\tilde r_t,w(v,r)-\Delta_r)$ is invoked. By \cref{eq:dist}, we have
    \[
    w(v,r)\geq d(v,r)\geq d_0(a(v),a(r))+\psi(v)-\psi(r) = d_0(a(v),r)+\psi(v).
    \]
    Since $a(\tilde r_t)=r$, $\psi(\tilde r_t)=w(\tilde r_t,r)=\Delta_r$, we obtain
    \[
    w(v,r)-\Delta_r \geq d_0(a(v),r)+\psi(v)-\Delta_r=d_0(a(v),a(\tilde r_t))+\psi(v)-\psi(\tilde r_t).
    \]
    \item If $\Delta_r\geq -w(r,\bar r)$, for $v\in \Aout_{\bar r}$, $\eout(\tilde r_t,v,w(r,\bar r)+\Delta_r+w(\bar r,v))$ is invoked. By \cref{eq:dist}, we have
    \[
    w(r,\bar r)+w(\bar r,v)\geq d(r,v)\geq d_0(a(r),a(v))+\psi(r)-\psi(v) = d_0(r,a(v))-\psi(v).
    \]
    Since $a(\tilde r_t)=r$, $\psi(\tilde r_t)=w(\tilde r_t,t)=\Delta_r$, we obtain
    \[
    w(r,\bar r)+\Delta_r+w(\bar r,v) \geq d_0(r,a(v))-\psi(v)+\Delta_r = d_0(a(\tilde r_t),a(v))+\psi(\tilde r_t)-\psi(v).
    \]
\end{itemize}

Next we show that during $\scin(r,t)$, initial calls to $\ein(u,v,W)$ (excluding recursive calls) satisfies $W\geq d_0(a(u),a(v))+\psi(u)-\psi(v)$. Combined with \cref{clm:ein-valid}, this implies that all edges added during $\scin(r,t)$ satisfies \cref{inv:dist}. Note that we have $a(\tilde r_t)=r$ and $\psi(\tilde r_t)=w(\tilde r_t,r)=\Delta_r$. When a vertex $v$ is processed during $\scin(r,t)$, \cref{lem:scin-dist-valid} implies that $d(v,r)\leq d_v$. By \cref{eq:dist}, we have
\[
d_v\geq d(v,r)\geq d_0(a(v),a(r))+\psi(v)-\psi(r)= d_0(a(v),a(r))+\psi(v).
\]
The procedure $\ein$ is invoked in the following steps.
\begin{itemize}
    \item Step~\ref{step:scin-out}: $\ein(\pout(v),\tilde r_t,w(\pout(v),v)+d_v-\Delta_r)$ may be invoked. We have $a(v)=a(\pout(v))$ and $\psi(v)=\psi(\pout(v))-w(\pout(v),v)$. Thus,
    \begin{align*}
        &\quad\;w(\pout(v),v)+d_v-\Delta_r\\
        &\geq w(\pout(v),v)+d_0(a(v),a(r))+\psi(v)-\Delta_r\\
        & = w(\pout(v),v)+d_0(a(\pout(v)),a(\tilde r_t))+\psi(\pout(v))-w(\pout(v),v)-\psi(\tilde r_t)\\
        & = d_0(a(\pout(v)),a(\tilde r_t))+\psi(\pout(v))-\psi(\tilde r_t)
    \end{align*}
    \item Step~\ref{step:scin-steiner}: $\ein(v_\ell,\tilde r_t,w(v_\ell,v)+d_v-\Delta_r)$ may be invoked. We have $a(v_\ell)=a(v)$ and $\psi(v_\ell)=\psi(v)+w(v_\ell,v)$. Thus,
    \begin{align*}
        &\quad\;w(v_\ell,v)+d_v-\Delta_r\\
        &\geq w(v_\ell,v)+d_0(a(v),a(r))+\psi(v)-\Delta_r\\
        & = w(v_\ell,v)+d_0(a(v_\ell),a(\tilde r_t))+\psi(v_\ell)-w(v_\ell,v)-\psi(\tilde r_t)\\
        & = d_0(a(v_\ell),a(\tilde r_t))+\psi(v_\ell)-\psi(\tilde r_t)
    \end{align*}
    \item Step~\ref{step:scin-bf-other}: $\ein(u_i,\tilde r_t,w(u_i,v)+d_v-\Delta_r)$ may be invoked. \cref{inv:dist} applied to the edge $(u_i,v)$ implies that $w(u_i,v)\geq d_0(a(u_i),a(v))+\psi(u_i)-\psi(v)$. Thus,
    \begin{align*}
        &\quad\;w(u_i,v)+d_v-\Delta_r\\
        &\geq d_0(a(u_i),a(v))+\psi(u_i)-\psi(v)+d_0(a(v),a(r))+\psi(v)-\Delta_r\\
        & = d_0(a(u_i),a(v))+d_0(a(v),a(\tilde r_t))+\psi(u_i)-\psi(\tilde r_t)\\
        & \geq d_0(a(u_i),a(\tilde r_t))+\psi(u_i)-\psi(\tilde r_t).
    \end{align*}
    \item Step~\ref{step:scin-light-out}: $\ein(u,\tilde r_t,w(u,v)+d_v-\Delta_r)$ may be invoked. \cref{inv:dist} applied to the edge $(u,v)$ implies that $w(u,v)\geq d_0(a(u),a(v))+\psi(u)-\psi(v)$. Thus,
    \begin{align*}
        &\quad\;w(u,v)+d_v-\Delta_r\\
        &\geq d_0(a(u),a(v))+\psi(u)-\psi(v)+d_0(a(v),a(r))+\psi(v)-\Delta_r\\
        & = d_0(a(u),a(v))+d_0(a(v),a(\tilde r_t))+\psi(u)-\psi(\tilde r_t)\\
        & \geq d_0(a(u),a(\tilde r_t))+\psi(u)-\psi(\tilde r_t).
    \end{align*}
\end{itemize}

The analysis for $\scout$ is analogous.

Finally, we see that Step~\ref{step:r'} of $\shortcut(G,t)$ preserves \cref{inv:dist}. This step can be viewed as follows: for each negative edge $(r,\bar r)$, we first redirect every outgoing edge from $r$ so that it instead originates from $\bar r$, then apply a vertex potential to $r$ and $\bar r$ ($\varphi(r)=\win_r,\varphi(\bar r)=W-\wout_r$). It is straightforward to verify that both these steps preserves \cref{inv:dist}. This completes the proof.
\end{proof}

\begin{lemma} \label{lem:shortcut-main}
For any pair of vertices $\alpha,\beta\in V_0$, if there exists an $h$-hop $\alpha\rightsquigarrow\beta$ shortest path $P$ in $G$, then after shortcutting, there is an $\alpha\rightsquigarrow \beta$ path $P'$ with at most $h-\lfloor h/3\rfloor$ hops and weight $w(P')\leq w(P)$.
\end{lemma}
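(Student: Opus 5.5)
The plan is to fix the $h$ negative edges $(r_1,\bar r_1),\dots,(r_h,\bar r_h)$ of $P$ in order, partition them into $\lfloor h/3\rfloor$ disjoint consecutive triples $(r_{3i-2},r_{3i-1},r_{3i})$, and show that for each triple the middle negative edge can be eliminated locally. The local replacement uses only the segment of $P$ between $\bar r_{3i-2}$ and $r_{3i}$ (or, when merging, the edges $(r_{3i-2},\bar r_{3i-2})$ or $(r_{3i},\bar r_{3i})$). Disjoint triples give disjoint modifications, so the hop savings add up. The local replacement must not increase weight and must not introduce new negative edges beyond those it accounts for. Since $P$ is a shortest path, each subpath is a shortest path. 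Moreover, each segment between consecutive negative edges is $0$-hop, and weights are compared in the current reweighted graph; valid potentials preserve path-weight differences between fixed endpoints, so reweighting in Step~\ref{step:betweenness-reduction} and Step~\ref{step:r'} is harmless. Write $u=r_{3i-2}$, $r=r_{3i-1}$, $v=r_{3i}$.

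\textbf{Case 1 ($\bar u\in\Uin_r$).} The subpath $\bar u\rightsquigarrow r$ has weight $d^0(\bar u,r)$ because $P$ is shortest. Step~\ref{step:negative-left} adds $(u,\bar r)$ of weight $w(u,\bar u)+d^0(\bar u,r)+w(r,\bar r)$, which replaces $u\to\bar u\rightsquigarrow r\to\bar r$ with one negative edge.

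\textbf{Case 2 ($v\in\Uout_r$).} The subpath $r\to\bar r\rightsquigarrow v$ has $1$ hop and weight $d^1(r,v)$. Step~\ref{step:negative-right} adds $(r,\bar v)$, which merges two hops into one.

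We must also check that Step~\ref{step:r'} rewrites these edges (redirecting $(u,\bar r)$ out of $\bar u$, per \cref{rmk:r'}) into a path with one negative edge. This is immediate.

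\textbf{Case 3 (hard case: $\bar u\notin\Uin_r$ and $v\notin\Uout_r$).} Here $d^0(\bar u,r)\ge\Delta_r$, so along $Q=P[\bar u,r]$ there is a last vertex $x$ with $d^0(x,r)<\Delta_r$, and its predecessor $y$ satisfies $d^0(y,r)\ge\Delta_r$. Thus $x\in\Uin_r$ with $d_x=d^0(x,r)=w(Q[x,r])$. The goal is to show that after $\scin(r,t)$ there is a path $y'\rightsquigarrow\tilde r_t$ with $0$ hops and weight at most $w(Q[y',r])-\Delta_r$, for some $y'$ on $Q$ at or before $y$. Symmetrically, on $P[\bar r,v]$ there is a first vertex $z$ with $d^1(r,z)<-\Delta_r$ (such a $z$ exists after $\bar r$ when $\Delta_r<-w(r,\bar r)$), and $\scout(r,t)$ gives a $0$-hop path $\tilde r_t\rightsquigarrow z'$ of weight at most $w(P[r,z'])+\Delta_r$. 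Concatenating $\bar u\rightsquigarrow y'\to\cdots\to\tilde r_t\to\cdots\to z'\rightsquigarrow v$ removes the hop at $(r,\bar r)$ without increasing weight. The degenerate cases $\Delta_r\le 0$ (so $\Uin_r=\emptyset$, and we use the edges from $\Ain_r$ added in Step~\ref{step:shortcut}) and $\Delta_r\ge -w(r,\bar r)$ are handled by the extra loops of Step~\ref{step:shortcut}.

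\textbf{Main obstacle.} The hard step is showing that $\scin$ really produces the incoming connection when $x$ is a Steiner vertex, or when $y$ is a light in-Steiner vertex, or when $(y,x)$ is absent from $\Ain_x$ (\cref{obs:list}). I would argue by induction down the levels of the processing order. The invariant is: whenever a vertex $x'$ is placed with $d_{x'}\le w(P\text{-like path } x'\rightsquigarrow r)$, and some $y''$ with $d^0$-route weight $\ge\Delta_r$ precedes it along an equivalent path, then an edge into $\tilde r_t$ of suitable weight appears. The sub-cases follow the branches of $\scin$:
\begin{itemize}
\item In Step~\ref{step:scin-steiner}/\ref{step:scin-bf-other}, the binary search and the gadget routing $y\to x_\ell\to\tilde r_t$ give exactly the needed weight.
\item In Steps~\ref{step:scin-in}/\ref{step:scin-bf-in} (redirection to $\pin$), \cref{lem:in-heavy-edge}~\ref{item:in-1} gives the predecessor as an in-neighbor of $\pin$ with no larger weight, so the obligation is passed down a level. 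In the ``below the line'' branch, \cref{lem:in-heavy-edge}~\ref{item:in-3} and \cref{obs:f}~\ref{obs:fin} give the edge $(\pin(u_i),v)$.
\item In Step~\ref{step:scin-out}, $x$ is an out-Steiner vertex; then $y=\pout(x)$ necessarily by \cref{obs:list}, and we connect $\pout(x)$ directly.
\end{itemize}
Each added edge may itself land at a Steiner vertex. The recursive $\ein$ (with \cref{clm:ein-light}) ensures a $0$-hop route to $\tilde r_t$ exists; the recursion through $\pn(v)$ introduces a negative edge only at a vertex in $N$, and I would verify that this is charged to the same hop being removed. I expect this bookkeeping, especially the N-Steiner and deferred-edge branches, to be the longest part, and I would treat $\scout$ by symmetry.
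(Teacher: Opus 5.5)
Your proposal follows the paper's proof. It uses the same per-triple case split, handles Cases~1--2 with the merging edges of Steps~\ref{step:negative-left}--\ref{step:negative-right}, and handles Case~3 by chasing the threshold-crossing edge through $\scin(r,t)$ and $\scout(r,t)$ with exactly the ingredients you list; the paper organizes this as the recursive procedure $\findpathin(x,y,Q)$ (its $x,y$ are your $y,x$) with cases R1--R4, and it terminates because R1 occurs at most once, R3 lowers the level of the processed vertex, R2 steps back along $Q$, and R4 stops immediately (so level descent alone is not the termination measure). Two local corrections: \cref{obs:list} only constrains in-neighbors outside $\Ain_x$, so the predecessor of an out-Steiner $x$ need not be $\pout(x)$ and you must first branch on membership in $\Ain_x$ as for any processed vertex; and every edge of the constructed route is non-negative (the edge into $\tilde r_t$ comes from a top-level $\ein$ call with $W\ge 0$, and the deferred edge is used only once it has become non-negative), so the negative edges created by the $\pn$-recursion never lie on it and nothing needs to be charged---a route containing one would not save the hop.
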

\begin{proof}
Note that since $P$ is a shortest path, every subpath of $P$ is also a shortest path between its endpoints. Consider three successive negative edges $(u,\bar u),(r,\bar r),(v,\bar v)$ on $P$. There are three cases as discussed below. We show that in each case, the number of hops is reduced by one for every three hops, thereby establishing the lemma.

\paragraph{Case 1: $d^0(\bar u,r)<\Delta_r$.} Then we have $\bar u\in \Uin_r$ by \cref{lem:compute-U}. In Step \ref{step:negative-left}, we have added an edge $(u,\bar r)$ of weight $w(u,\bar u)+d^0(\bar u,r)+w(r,\bar r)$, which is at most $w(P[u,\bar r])$. Since $P[u,\bar r]$ consists of 2 hops, whereas the shortcut edge $(u,\bar r)$ only uses 1 hop, the number of hops is reduced by 1.

\paragraph{Case 2: $d^1(r,v)<-\Delta_r$.} Then we have $v\in \Uout_r$ by \cref{lem:compute-U}. In Step \ref{step:negative-right}, we have added an edge $(r,\bar v)$ of weight $d^1(r,v)+w(v,\bar v)$, which is at most $w(P[r,\bar v])$. Since $P[r,\bar v]$ consists of 2 hops, whereas the shortcut edge $(r,\bar v)$ only uses 1 hop, the number of hops is reduced by 1.

\paragraph{Case 3: $d^0(\bar u,r)\geq \Delta_r$ and $d^1(r,v)\geq -\Delta_r$.}

We will prove the following claims.

\begin{claim} \label{clm:shortcut-in}
After shortcutting, there is a $0$-hop $\bar u\rightsquigarrow\tilde r_t$ path of weight $d(\bar u,r)-\Delta_r$.
\end{claim}

\begin{claim} \label{clm:shortcut-out}
After shortcutting, there is a $0$-hop $\tilde r_t\rightsquigarrow v$ path of weight $d(r,v)+\Delta_r$.
\end{claim}

The lemma follows immediately from the two claims: there exists a $\bar u\rightsquigarrow\tilde r_t\rightsquigarrow v$ path whose total weight is $d(\bar u,r)-\Delta_r+d(r,v)+\Delta_r=d(\bar u,v)$. Moreover, since $P[\bar u,v]$ consists of 1 hop and this path contains 0 hops, the number of hops is reduced by 1.

We prove \cref{clm:shortcut-in} first. First consider the case $\Delta_r\leq 0$. Let $x$ be the predecessor of $r$ on the path $P[\bar u,r]$. There are two cases:
\begin{itemize}
    \item $x\notin \Ain_r$: By \cref{obs:list}~\ref{obs:in-list}, it must be the case that $x$ is an in-Steiner vertex with $\pin(x)=r$. In this case, let $y$ be the predecessor of $x$ on the path $P[\bar u,r]$. Such a vertex must exist since $\bar u$ is not an in-Steiner vertex. Then by \cref{lem:in-heavy-edge}~\ref{item:in-1}, there is an edge $(y,r)$ of weight at most $w(y,x)+w(x,r)$. Moreover, by \cref{obs:steiner}~\ref{obs:pin}, $y$ is not an in-Steiner vertex. Since $y\in \Nin(r)$ and $y$ is not an in-Steiner vertex, we must have $y\in\Ain_r$. Thus, we have added the edge $(y,\tilde r_t)$ of weight $w(y,r)-\Delta_r\leq w(P[y,r])-\Delta_r$ in Step~\ref{step:shortcut} of $\shortcut(G,t)$. Consider the path $\bar u \rightsquigarrow y \to \tilde r_t$, where the segment $\bar u \rightsquigarrow y$ is $P[\bar u,y]$. The weight of this path is at most $d(\bar u,r) - \Delta_r$.
    \item $x\in \Ain_r$: We have added the edge $(x,\tilde r_t)$ of weight $w(x,r)-\Delta_r$ in Step~\ref{step:shortcut} of $\shortcut(G,t)$. Consider the path $\bar u \rightsquigarrow x\to  \tilde r_t$, where the segment $\bar u \rightsquigarrow x$ is $P[\bar u,x]$. The weight of this path is exactly $d(\bar u,r) - \Delta_r$.
\end{itemize}

This completes the proof for the case $\Delta_r\leq 0$. In the remainder, we assume $\Delta_r>0$.

Consider the following recursive algorithm $\findpathin(x,y,Q)$. The input consists vertices $x,y\in V$ and a $\bar u\rightsquigarrow r$ path $Q$ that satisfy the following assumptions:
\begin{assumption} \label{asmp:scin-x}
$x\in \Nin(y)$ and $d(x,r)=w(x,y)+d(y,r)\geq \Delta_r$. In addition, $x$ lies on the path $Q$.
\end{assumption}
\begin{assumption} \label{asmp:scin-y}
During the execution of $\scin(r,t)$, vertex $y$ is processed, and at the moment it is processed, we have $d_y=d(y,r)<\Delta_r$.
\end{assumption}
\begin{assumption} \label{asmp:scin-Q}
$w(Q)=d(\bar u,r)$, and $Q[x,r]$ contains $0$ hops.
\end{assumption}
The output of the algorithm is a $0$-hop $\bar u\rightsquigarrow \tilde r_t$ path of weight $d(\bar u,r)-\Delta_r$.

Now we describe the algorithm $\findpathin(x,y,Q)$.
\begin{itemize}
    \item {\bf(Case R1)} If $y$ is an in-Steiner vertex: Since $y$ is an in-Steiner vertex and $x\in \Nin(y)$, by \cref{lem:in-heavy-edge}~\ref{item:in-1}, there is an edge $(x,\pin(y))$ of weight at most $w(x,y)+w(y,\pin(y))$. Then we can recursively invoke $\findpathin(x,\pin(y),Q)$ and return the resulting path. \cref{asmp:scin-y} is satisfied since when vertex $y$ is processed in $\scin(r,t)$, in Step~\ref{step:scin-in}, the algorithm must have updated $d_{\pin(y)}$ with $d_y-w(y,\pin(y))$ and thus $\pin(y)$ is also processed. In addition, by triangle inequality,
    \[
        d(\pin(y),r) \geq d(y,r)-d(y,\pin(y))=d_y-d(y,\pin(y)) \geq d_{\pin(y)}.
    \]
    By \cref{lem:scin-dist-valid}, we have $d_{\pin(y)}\leq d(\pin(y),r)$. Therefore, $d(\pin(y),r)=d_{\pin(y)}$ holds. Since $w(y,\pin(y))\geq 0$, $d_{\pin(y)}\leq d_y<\Delta_r$ also holds.
    \item We assume $y$ is not an in-Steiner vertex henceforth.
    \item If $x\notin \Ain_y$: By \cref{obs:list}~\ref{obs:in-list}, one of the following holds:
    \begin{itemize}
        \item {\bf(Case R2)} $x$ is an in-Steiner vertex with $\pin(x)=y$: Let $x'$ be the predecessor of $x$ on the path $Q$. Such a vertex must exist since $\bar u$ is not an in-Steiner vertex. Since $x$ is an in-Steiner vertex and $x'\in \Nin(x)$, by \cref{lem:in-heavy-edge}~\ref{item:in-1}, there is an edge $(x',y)$ of weight at most $w(x',x)+w(x,y)$. Then we can recursively invoke $\findpathin(x',y,Q)$ and return the resulting path. \cref{asmp:scin-x} is satisfied since
        \[
            d(x',r)=w(x',x)+w(x,y)+d(y,r)\geq w(x',y)+d(y,r).
        \]
        Combined with the triangle inequality $d(x',r)\leq w(x',y)+d(y,r)$, we obtain $d(x',r)=w(x',y)+d(y,r)$. By \cref{asmp:scin-Q}, we have $w(x',x)\geq 0$. Thus,
        \[
        d(x',r)=w(x',x)+d(x,r)\geq d(x,r)\geq \Delta_r.
        \]
        \item $y$ is an in-Steiner vertex: This case is already handled.
        \item $y$ is an out-Steiner vertex with $\pout(y)=x$: We can return the path $\bar u\rightsquigarrow x\to \tilde r_t$ where the $\bar u\rightsquigarrow x$ segment is $Q[\bar u,x]$. This is because when $y$ is processed in $\scin(r,t)$, in Step~\ref{step:scin-out}, for $W=w(x,y)+d_y-\Delta_r$, the algorithm adds an edge $(x,\tilde r_t)$ of weight $W$ if $W\geq 0$. Note that
        \[
            d(x,r)=w(x,y)+d(y,r)=w(x,y)+d_y.
        \]
        Since $d(x,r)\geq \Delta_r$, $W\geq 0$ must hold. Therefore, the edge $(x,\tilde r_t)$ of weight $W=d(x,r)-\Delta_r$ is added. The output path has weight
        \[
        w(Q[\bar u,x])+w(x,\tilde r_t)=d(\bar u,x)+d(x,r)-\Delta_r=d(\bar u,r)-\Delta_r,
        \]
        and it contains $0$ hops.
    \end{itemize}
    \item If $x\in \Ain_y$: When vertex $y$ is processed in $\scin(r,t)$, in Step~\ref{step:scin-sort}, $\Ain_y=\{u_1,\dots,u_k\}$ is sorted and we computed $j$ and $\ell=\lceil \log(k-j+1)\rceil$. Since $x\in \Ain_y$, we may assume $x=u_i$. Now we consider three cases:
    \begin{itemize}
        \item $i\leq k-2^{\ell}$: In Step~\ref{step:in-down} of the procedure $\shortcut$, the algorithm adds an edge $(x,y_{\ell})$ of weight $w(x,y)-w(y_\ell,y)$. When vertex $y$ is processed in $\scin(r,t)$, in Step~\ref{step:scin-steiner}, the algorithm adds an edge $(y_{\ell},\tilde r_t)$ of weight $w(y_\ell,y)+d_y-\Delta_r$. We can therefore return the path $\bar u\rightsquigarrow x\to y_{\ell}\to \tilde r_t$ where the $\bar u\rightsquigarrow x$ segment is $Q[\bar u,x]$. It has weight
        \begin{align*}
            & \quad \; w(Q[\bar u,x])+w(x,y_\ell)+w(y_{\ell},\tilde r_t)\\
            & = d(\bar u,x)+w(x,y)-w(y_\ell,y)+w(y_\ell,y)+d_y-\Delta_r\\
            & = d(\bar u,x)+d(x,r)-\Delta_r\\
            & = d(\bar u,r)-\Delta_r,
        \end{align*}
        and it contains $0$ hops.
        \item $i>k-2^{\ell}$ and $x$ is a light in-Steiner vertex: When vertex $y$ is processed in $\scin(r,t)$, in Step~\ref{step:scin-bf-in}, we have $W=w(x,y)+d_y-w(x,\pin(x))$. Consider two cases:
        \begin{itemize}
            \item {\bf(Case R3)} $W<\Delta_r$: Then we can recursively invoke $\findpathin(x,\pin(x),Q)$ and return the resulting path. \cref{asmp:scin-y} is satisfied since in Step~\ref{step:scin-bf-in}, the algorithm must have updated $d_{\pin(x)}$ with $W$ and thus $\pin(x)$ is processed. In addition, by triangle inequality,
            \[
             d(\pin(x),r)\geq d(x,r)-d(x,\pin(x))=w(x,y)+d(y,r)-d(x,\pin(x))=W\geq d_{\pin(x)}.
            \]
            By \cref{lem:scin-dist-valid}, we have $d_{\pin(x)}\leq d(\pin(x),r)$. Therefore, $d(\pin(x),r)=d_{\pin(x)}$ holds. Since $W<\Delta_r$, $d_{\pin(x)}\leq W<\Delta_r$ also holds.
            \item {\bf(Case R4)} $W\geq \Delta_r$: Since $x$ is a light in-Steiner vertex and $y\in \Nout(x)$ is not an in-Steiner vertex, by \cref{lem:in-light-edge}, $y$ must be an N-Steiner vertex ($y$ cannot be $\pin(x)$ since $x\in \Ain_y$). By \cref{lem:in-heavy-edge}~\ref{item:in-3}, there is an edge $(\pin(x),y)$ of weight $w(x,y)-w(x,\pin(x))$ in $\Fin$. Since $W=w(x,y)+d_y-w(x,\pin(x))\geq \Delta_r$, we have
            \[
            w(x,y)-w(x,\pin(x))\geq \Delta_r-d_y=\Delta_r-d(y,r).
            \]
            Since $d(y,r)<\Delta_r$, we see that $w(x,y)-w(x,\pin(x))\geq 0$. Therefore, by \cref{obs:f}~\ref{obs:fin}, the edge $(\pin(x),y)$ of weight $w(x,y)-w(x,\pin(x))$ is added into $G$. Consider the path $Q'=\bar u\rightsquigarrow x\to \pin(x)\to y\rightsquigarrow r$, where the segment $\bar u\rightsquigarrow x$ is $Q[\bar u,x]$, and the segment $y\rightsquigarrow r$ is an arbitrary $(y,r)$ shortest path. We can recursively invoke $\findpathin(\pin(x),y,Q')$ and return the resulting path. We first show that \cref{asmp:scin-x} is satisfied. By triangle inequality,
            \begin{align*}
            d(\pin(x),r)& \geq d(x,r)-w(x,\pin(x))\\
            & = w(x,y)+d(y,r)-w(x,\pin(x))\\
            & \geq w(\pin(x),y)+d(y,r).
            \end{align*}
            Combined with the triangle inequality $d(\pin(x),r)\leq w(\pin(x),y)+d(y,r)$, we obtain $d(\pin(x),r)=w(\pin(x),y)+d(y,r)$. Since
            \[
            W=w(x,y)+d_y-w(x,\pin(x))=d(x,r)-w(x,\pin(x))\geq \Delta_r,
            \]
            we have
            \[
            d(\pin(x),r) \geq d(x,r)-w(x,\pin(x))\geq \Delta_r.
            \]
            
            Now we show that \cref{asmp:scin-Q} is also satisfied. Indeed,
            \begin{align*}
                w(Q') & =w(Q[\bar u,x])+w(x,\pin(x))+w(\pin(x),y)+d(y,r)\\
                & \leq  d(\bar u,x)+w(x,\pin(x))+ w(x,y)-w(x,\pin(x))+d(y,r)\\
                & = d(\bar u,x)+w(x,y)+d(y,r) \\
                & = d(\bar u,r).
            \end{align*}
            In addition, $Q'[\bar u,x]$ contains $0$ hops since $Q[\bar u,x]$ contains $0$ hops.
        \end{itemize}
        \item $i>k-2^{\ell}$ and $x$ is not a light in-Steiner vertex: We can return the path $\bar u\rightsquigarrow x\to \tilde r_t$, where the $\bar u\rightsquigarrow x$ segment is $Q[\bar u,x]$. This is because when vertex $y$ is processed in $\scin(r,t)$, in Step~\ref{step:scin-bf-other}, for $W=w(x,y)+d_y-\Delta_r$, the algorithm adds an edge $(x,\tilde r_t)$ of weight $W$ if $W\geq 0$. Note that
        \[
            d(x,r)=w(x,y)+d(y,r)=w(x,y)+d_y.
        \]
        Since $d(x,r)\geq \Delta_r$, $W\geq 0$ must hold. Therefore, the edge $(x,\tilde r_t)$ of weight $W=d(x,r)-\Delta_r$ is added. The output path has weight
        \[
        w(Q[\bar u,x])+w(x,\tilde r_t)=d(\bar u,x)+d(x,r)-\Delta_r=d(\bar u,r)-\Delta_r,
        \]
        and it contains $0$ hops.
    \end{itemize}
\end{itemize}

Now we show that this algorithm terminates. The algorithm $\findpathin(x,y,Q)$ makes a recursive call in four cases, denoted by Case R1--R4. We now consider each of them:
\begin{itemize}
    \item Case R1: This case occurs when the second parameter $y$ is an in-Steiner vertex. In the recursive call, $y$ is replaced by $\pin(y)$. By \cref{obs:steiner}~\ref{obs:pin}, $\pin(y)$ is not an in-Steiner vertex. We note that once $y$ becomes non-in-Steiner vertex, Case R1 cannot occur again. This is because in Cases R2 and R4, $y$ remains unchanged, and in Case R3, $y$ is replaced by $\pin(x)$, which is also not an in-Steiner vertex by \cref{obs:steiner}~\ref{obs:pin}. Therefore, Case R1 can occur at most once at the beginning, after which only Cases R2–R4 may occur.
    \item Case R4: $\findpathin(\pin(x),y,Q')$ is called. We claim that this call terminates without further recursion. If a recursive call is made within it, only Cases R2--R4 is possible, and in each of these cases the first parameter $x$ must be an in-Steiner vertex. However, by \cref{obs:steiner}~\ref{obs:pin}, $\pin(x)$ is not an in-Steiner vertex. Hence, no recursive call is possible, and the call terminates immediately.
    \item Case R2: $\findpathin(x',y,Q)$ is called. In this case, $x'$ is replaced by the predecessor of $x$ in the path $Q$.
    \item Case R3: $\findpathin(x,\pin(x),Q)$ is called. In this case, $y$ is replaced by $\pin(x)$. As shown in the proof of \cref{lem:scin-dist-valid}, if $y$ is a level-$\ell$ vertex, then $\pin(x)$ must be a level-$\ell'$ vertex for some $\ell'<\ell$. 
\end{itemize}
Therefore, the algorithm terminates after a finite number of recursive calls.

Consider the segment $P[\bar u,r]$, let $x$ be the last vertex on this segment with $d(x,r)=d^0(x,r)\geq \Delta_r$ and let $y$ be the successor of $x$ with $d(y,r)=d^0(y,r)<\Delta_r$. Such a vertex must exist since $\Delta_r>0$. By \cref{lem:compute-U}, we have $x\not\in \Uin_r,y\in \Uin_r$. We can then invoke $\findpathin(x,y,P[\bar u,r])$, it is straightforward to verify that this is a valid input. The returned path is the desired path. This completes the proof of \cref{clm:shortcut-in}.

By a symmetric argument, one can prove \cref{clm:shortcut-out}. We provide its proof in Appendix~\ref{sec:appendix} for completeness.
\end{proof}

\subsubsection{Overall Runtime} \label{sec:runtime}
We analyze the runtime of \cref{alg:sssp} in this section. Note that in Step~\ref{step:betweenness-reduction} of $\shortcut(G,t)$, we make recursive calls to the single-source shortest path algorithm.

We first analyze the runtime of \cref{alg:sssp} excluding these recursive calls. Recall that the input graph $G_0$ to \cref{alg:sssp} consists of $n_0$ vertices. We assume $n_0$ is large enough. In each iteration of $\shortcut(G,t)$, additional Steiner vertices may be created. In Step~\ref{step:steiner-vertex-in}, for every heavy vertex $v$, we may create at most $O(\log n)$ in-Steiner vertices $u$ with $\pin(u)=v$. In Step~\ref{step:steiner-vertex-out}, for every heavy vertex $v$, we may create at most $O(\log n)$ out-Steiner vertices $u$ with $\pout(u)=v$. In Step~\ref{step:tilde-r}, for every negative vertex $r\in N$, we create an N-Steiner vertex $\tilde r_t$. Therefore, in each iteration of $\shortcut(G,t)$, the number of vertices added to the graph is bounded by
\[
\text{[\# of heavy vertices]}\cdot O(\log n)+\eta.
\]

Let $n^\star$ denote the number of vertices in $G$ at the end of \cref{alg:sssp}. We prove that $n^\star = O(n_0\log^2 n^\star\log^2 n_0)$. Initially, all vertices in the graph $G_0$ are heavy, so there are $n_0$ heavy vertices. Consider a single iteration of $\shortcut(G,t)$, let $n$ denote the current number of vertices in $G$. By \cref{clm:heavy-increase}, the number of heavy vertices increases by $O(\eta\log n)=O(n_0 \log n^\star)$. Since there are $L=O(\log n_0)$ iterations in total, the number of heavy vertex is always bounded by $O(n_0\log n^\star\log n_0)$. Then, by the discussion above, the number of new vertices added in each iteration is bounded by
\[
O(n_0\log n^\star\log n_0\cdot \log n+\eta) = O(n_0\log ^2n^\star\log n_0).
\]
Therefore, after $L=O(\log n_0)$ iterations, the number of vertices in $G$ is bounded by $O(n_0\log^2 n^\star\log^2 n_0)$, as desired. Finally, since $n^\star = O(n_0\log^2 n^\star\log^2 n_0)$, we have $n^\star = O(n_0\log^4 n_0)$. We conclude that throughout \cref{alg:sssp}, the number of vertices in $G$ is always bounded by $O(n_0\log^4n_0)$.

Now we analyze the running time of a single iteration of the procedure $\shortcut(G,t)$. By the above argument, the number of vertices satisfies $n=\tilde O(n_0)$. We show that, excluding the recursive call in Step~\ref{step:betweenness-reduction}, the running time of $\shortcut(G,t)$ is bounded by $\tilde O(n^2\cdot \gamma)=\tilde O(n_0^2\cdot \gamma)$. From the description of the algorithm, it is clear that the running time of all steps other than Steps~\ref{step:pre} and~\ref{step:shortcut} is bounded by $\tilde O(n^2)$. In Step~\ref{step:pre}, the running time of the algorithm from \cref{lem:compute-delta} is $\tilde O(hn^2)=\tilde O(n^2)$ since $h=O(\log n_0)$. For every negative vertex $r\in N$, the algorithm from \cref{lem:compute-U} requires $\tilde O((|\Uin_r|+|\Uout_r|)^2)$ time. Since $\Uin_r\subseteq \Vin_r, \Uout_r\subseteq \Vout_r$, and by \cref{lem:compute-delta}, 
\[
\sum_{r\in N} (|\Vin_r|+|\Vout_r|)^2=O\left(\frac{\eta n^2}{b}\right) = O(n^2\cdot \gamma),
\]
the total running time across all negative vertices $r\in N$ is bounded by $\tilde O(n^2\cdot \gamma)$.

In Step~\ref{step:shortcut}, by \cref{lem:scin-time} and \cref{lem:scout-time}, the total running time for $\scin(r,t)$ and $\scout(r,t)$ over all negative vertex $r\in N$ is bounded by
\begin{align*}
&\quad\; \sum_{r\in N}\tilde O\left(|\Vin_r|^2+
\lambda|\Vin_r|\right)+\tilde O\left(|\Vout_r|^2+
\lambda|\Vout_r|\right)\\
& = \sum_{r\in N} \tilde O\left(\left(|\Vin_r|+|\Vout_r|\right)^2\right) + \tilde O\left(\lambda \cdot \left(|\Vin_r|+|\Vout_r|\right)\right)\\
& = \tilde O\left(\frac{\eta n^2}{b}\right)+O\left(\lambda \cdot \frac{\eta n}{\sqrt{b}}\right)& & \text{(by \cref{lem:compute-delta})}\\
& = \tilde O\left(\frac{\eta n^2}{b} + \frac{n}{\sqrt{b}}\cdot \frac{\eta n}{\sqrt{b}}\right) & & \left(\text{since }\lambda=n/\sqrt{b}\right)\\
& = \tilde O\left(\frac{\eta n^2}{b}\right)= \tilde O(n^2\cdot \gamma) && (\text{since }b=\eta/\gamma)\\
\end{align*}
Since there are $L=O(\log n_0)$ iterations of $\shortcut(G,t)$, we conclude that the running time of \cref{alg:sssp} excluding the recursive calls is bounded by $\tilde O(n_0^2\cdot \gamma)$.

Now we consider the recursive calls to \cref{alg:sssp}. In each iterations of $\shortcut(G,t)$, Step~\ref{step:betweenness-reduction} makes a single recursive call on a graph with $O(hn)=O(hn_0\log^4 n_0)=O(n_0\log^5 n_0)$ vertices and $O(b\log n)=O(\eta\log n/\gamma)$ negative vertices. Let $C$ be a large enough constants such that $L\leq C\log n_0$; and the above bounds can be rewritten as at most $Cn_0\log^5 n_0$ vertices and at most $C\eta\log n_0/\gamma$ negative vertices.

Define
\[
\zeta = \log _{\frac{\gamma}{C\log n_0}} \eta.
\]
Note that after at most $\zeta$ levels of recursion, the number of negative vertex drops below $100$, and the instance becomes the trivial case.

Let $n_{\zeta}$ denote the initial number of vertices for a level-$\zeta$ recursion instance. Since the number of vertices grows from $n$ to at most $Cn\log^5 n$ in each recursion level, we see that
\[
n_{\zeta} \leq n_0 \cdot \left(C\log^5 n_{\zeta}\right)^\zeta.
\]

At recursion level $1\leq i\leq \zeta$, there are at most $L^i=(C\log n_{\zeta})^i\leq (C\log n_{\zeta})^\zeta$ recursion instances. Therefore the total running time over all recursion levels is bounded by
\[
\zeta \cdot (C\log n_{\zeta})^\zeta \cdot \tilde O\left((n_{\zeta})^2\cdot \gamma\right) = \tilde O\left(\zeta \cdot \gamma \cdot \left(C\log^5 n_{\zeta}\right)^{3\zeta} \cdot n_0^2\right).
\]
Choosing $\gamma = C\log n_0\cdot 2^{\sqrt{\log n_0}}$, the above time is bounded by $(n_0)^{2+o(1)}$.

In particular, we have that $\gamma=(n_0)^{o(1)}$ and
\[
\zeta = \log _{\frac{\gamma}{C\log n_0}} \eta \leq \frac{\log \eta}{\sqrt{\log n_0}} \leq \sqrt{\log n_0}.
\]
On the other hand, since $n_{\zeta} \leq n_0 \cdot \left(C\log^5 n_{\zeta}\right)^\zeta$, we have
\[
\log n_{\zeta} \leq \log n_0+\sqrt{\log n_0}\cdot  (\log C+5\log\log n_{\zeta}).
\]
Since $n_{\zeta}\geq n_0$ is large enough, we may assume $\log C+5\log\log n_{\zeta}\leq \sqrt{\log n_{\zeta}}$. Then, we have
\[
\log n_{\zeta} \leq \log n_0 + \sqrt{\log n_0} \cdot \sqrt{\log n_{\zeta}},
\]
which implies that $\log n_{\zeta} = O(\log n_0)$. Then the running time can be simplified as
\[
\tilde O\left(\left(C'\log^5 n_0\right)^{3\sqrt{\log n_0}} \cdot (n_0)^{2+o(1)}\right)
\]
for some constant $C'$, and it is clear that this is bounded by $(n_0)^{2+o(1)}$.

\bibliographystyle{alpha}
\bibliography{ref}

\appendix
\section{Omitted Proofs} \label{sec:appendix}
\subsection{Proof of \cref{lem:out-heavy-edge}}
By \cref{obs:edge-regular}, we only need to consider the edges added in Steps~\ref{step:in-up}, \ref{step:in-down}, \ref{step:in-r}, \ref{step:in-tilde-r}, \ref{step:out-up}, \ref{step:out-down}, \ref{step:out-r}, \ref{step:out-tilde-r} of $\shortcut(G,t)$, and the edges added via procedures $\ein$ and $\eout$.

\paragraph{Proof of (a).}
We first consider the outgoing edges of $v$ and prove the first item. Specifically, we show that whenever an outgoing edge from $v$ is added, the claim holds. That is, when an edge $(v,u)$ of weight $w(v,u)$ is added, an edge $(\pout(v),u)$ of weight at most $w(\pout(v),v)+w(v,u)$ is also added. After the edges are added, the algorithm may apply vertex potentials. However, it is easy to see that the claim continues to hold after applying any vertex potential.

We first focus on Steps~\ref{step:in-up}, \ref{step:in-down}, \ref{step:in-r}, \ref{step:in-tilde-r}, \ref{step:out-up}, \ref{step:out-down}, \ref{step:out-r}, \ref{step:out-tilde-r} of $\shortcut(G,t)$. When vertex $v$ is created in Step~\ref{step:steiner-vertex-out} of the $t$-th iteration, the algorithm adds outgoing edges from $v$ in Step~\ref{step:out-down}. In particular, for some out-neighbor $u$ of $\pout(v)$, it may add an edge $(v,u)$ of weight $w(\pout(v),u)-w(\pout(v),v)$. This edge satisfies the claim trivially. If $u$ is an in-Steiner vertex, the algorithm may add an edge $(v,\pin(u))$ of weight $w(\pout(v),u)+w(u,\pin(u))-w(\pout(v),v)$. Note that since $u$ is an in-Steiner vertex and $\pout(v)\in \Nin(u)$, \cref{lem:in-heavy-edge}~\ref{item:in-1} implies that there is an edge $(\pout(v),\pin(u))$ of weight at most $w(\pout(v),u)+w(u,\pin(u))$. Therefore, the edge $(v,\pin(u))$ also satisfies the claim. Next, if $\pout(v)\in \bar N$ and $r$ is the unique in-neighbor of $\pout(v)$, the algorithm adds an edge $(v,r)$ of weight $-w(r,\pout(v))-w(\pout(v),v)$ in Step~\ref{step:out-r}. Finally, if $\pout(v)$ is an N-Steiner vertex, it adds an edge $(v,\pn(\pout(v)))$ of weight $w(\pout(v),\pn(\pout(v)))-w(\pout(v),v)$ in Step~\ref{step:out-tilde-r}. It is straightforward to verify that all these edges trivially satisfy the claim.

In subsequent iterations, the algorithm never adds outgoing edges from $v$ in Steps~\ref{step:in-up} and~\ref{step:out-down} as edges added in these steps are directed into newly created vertices. The same holds for Steps~\ref{step:in-r}, \ref{step:in-tilde-r}, \ref{step:out-r}, \ref{step:out-tilde-r}, because the endpoints of the added edges are either negative vertices or newly created vertices. In addition, it also never adds outgoing edges from $v$ in Step~\ref{step:out-up}, since $v$ is an out-Steiner vertex so $\Aout_v=\emptyset$. The only remaining case is Step~\ref{step:in-down}. In this case, since $v$ is an out-Steiner vertex, \cref{obs:steiner}~\ref{obs:pout} implies that it can never be the out-parent of any vertex. Thus, the algorithm can only add an edge $(v,u)$ of weight $W$ for some in-Steiner vertex $u$, followed by an edge $(\pout(v),u)$ of weight $W+w(\pout(v),v)$, which satisfies the claim.

Next we focus on the edges added by the procedures $\ein$ and $\eout$. Since procedure $\eout$ only adds edges from a negative vertex or an N-Steiner vertex, outgoing edges from $v$ can be added only via $\ein$. When an edge $(v,u)$ of weight $W$ is added for some $u\in V$, since $v$ is an out-Steiner vertex, the algorithm must have recursively invoked $\eout(\pout(v),u,W+w(\pout(v),v))$ and added an edge $(\pout(v),u)$ of weight $W+w(\pout(v),v)$.

Finally, it is straightforward to verify that if the claim holds before Step~\ref{step:r'}, then it continues to hold after Step~\ref{step:r'}. This completes the proof of the first item.

\paragraph{Proof of (b).}
Now we consider the incoming edges to $v$ and prove the second item. During the $t$-th iteration, the algorithm first adds the edge $w(\pout(v),v)$ in Step~\ref{step:out-up}. Then, in Step~\ref{step:scout-steiner} of $\scout(r,t)$ for some negative vertex $r\in N$, it may invoke $\eout(\tilde r_t,v,W)$ and add an edge $(\tilde r_t,v)$ of weight $W$. Note that we will never add any incoming edge to $v$ in recursive calls to $\eout$ since $v$ is neither the in-parent nor the out-parent of any vertex. Therefore, by the end of the $t$-th iteration, $\Nin(v)$ only contains $\pout(v)$ and some level-$t$ N-Steiner vertices.

In subsequent iterations, consider Steps~\ref{step:in-up}, \ref{step:in-down}, \ref{step:in-r}, \ref{step:in-tilde-r}, \ref{step:out-up}, \ref{step:out-down}, \ref{step:out-r}, \ref{step:out-tilde-r} of $\shortcut(G,t)$. The algorithm never adds incoming edges to $v$ in Steps~\ref{step:in-down} and~\ref{step:out-up} as edges added in these steps are directed into newly created vertices. The same holds for Steps~\ref{step:in-r}, \ref{step:in-tilde-r}, \ref{step:out-r}, \ref{step:out-tilde-r}, because the endpoints of the added edges are either negative vertices or newly created vertices. In Step~\ref{step:in-up}, the algorithm may only add an edge $(u,v)$ for some in-Steiner vertex $u$ with $\pin(u)=v$, which satisfies the claim. The only remaining case is Step~\ref{step:out-down}. There are two cases:
\begin{itemize}
    \item The first case is the algorithm adds an edge $(u,v)$ for some out-Steiner vertex $u$, given that the edge $(\pout(u),v)$ already exists in the graph. Assume \cref{lem:out-heavy-edge}~\ref{item:out-2} holds before adding this edge. Since $\pout(u)\in \Nin(v)$, and an out-Steiner vertex cannot be the out-parent of any vertex by \cref{obs:steiner}~\ref{obs:pout}, $\pout(u)$ must fall into one of the following categories: $\pout(u)$ is an in-Steiner vertex with $\pin(\pout(u))=v$; $\pout(u)\in \bar N$; or $\pout(u)$ is an N-Steiner vertex. We now show that the first case is not possible. This is because the algorithm adds the edge $(u,v)$ only if $v\in \Aout_{\pout(u)}$. However, if $v=\pin(\pout(u))$, then $v$ would have been removed from $\Aout_{\pout(u)}$ in Step~\ref{step:delete} of $\shortcut(G,t)$. Therefore, it must be either $\pout(u)\in N$ or $\pout(u)$ is an N-Steiner vertex, and \cref{lem:out-heavy-edge}~\ref{item:out-2} continues to hold after adding the edge $(u,v)$.
    \item The second case is the algorithm adds an edge $(u,v')$ for some out-Steiner vertex $u$ and in-Steiner vertex $v'$ with $\pin(v')=v$, given that the edge $(\pout(u),v')$ already exists in the graph. Then the algorithm adds an edge $(u,\pin(v'))=(u,v)$. In this case, since $v'$ is an in-Steiner vertex with $\pin(v')=v$ and $\pout(u)\in \Nin(v')$, by \cref{lem:in-heavy-edge}~\ref{item:in-1}, we must have $\pout(u)\in \Nin(v)$. Assume \cref{lem:out-heavy-edge}~\ref{item:out-2} holds before adding this edge. Since $\pout(u)\in \Nin(v)$, and an out-Steiner vertex cannot be the out-parent of any vertex by \cref{obs:steiner}~\ref{obs:pout}, $\pout(u)$ must fall into one of the following categories: $\pout(u)$ is an in-Steiner vertex with $\pin(\pout(u))=v$; $\pout(u)\in \bar N$; or $\pout(u)$ is an N-Steiner vertex. We again show that the first case is not possible. This is because in that case, $\pout(u)$ is an in-Steiner vertex and $v'\in \Nout(\pout(u))$. Since $v'$ is an in-Steiner vertex with $\pin(v')=v$, but $v$ is neither a negative nor an N-Steiner vertex, this would contradict \cref{lem:in-heavy-edge}~\ref{item:in-2} applied to the vertex $\pout(u)$. Therefore, it must be either $\pout(u)\in \bar N$ or $\pout(u)$ is an N-Steiner vertex, and \cref{lem:out-heavy-edge}~\ref{item:out-2} continues to hold after adding the edge $(u,v)$.
\end{itemize}

Next, consider the edges added by the procedure $\ein$ and $\eout$. Since procedure $\ein$ add edges only to a negative vertex or an N-Steiner vertex, incoming edges to $v$ can be added only via $\eout$. In any call to $\eout(u,v,W)$, $u$ must be a negative vertex or an N-Steiner vertex, and therefore the claim always holds. Finally, in Step~\ref{step:r'}, the algorithm may replace edges $(r,v)$ by $(\bar r,v)$ for some negative vertices $r\in N$, and the claim continues to hold. This completes the proof for the second item.

\paragraph{Proof of (c).}
For the third item, note that the algorithm adds incoming edges from negative vertices or to N-Steiner vertices to $v$ only via the procedure $\eout$. The claim therefore follows directly from the description of $\eout$. (For the edges from negative vertices, they will be replaced by edges from vertices in $\bar N$ after Step~\ref{step:r'}, and it is straightforward to verify that the claim holds). It is easy to see that the claim continues to hold after applying any vertex potential.

\subsection{Proof of \cref{clm:eout-light}}
We focus on the initial calls to the procedure $\eout$. For the recursive calls, the claim is immediate since a light vertex cannot be the in-parent or out-parent of any vertex. We now consider each location where $\eout$ is invoked:
\begin{itemize}
    \item Step~\ref{step:shortcut} of $\shortcut(G,t)$: Since $\bar r\in \Nin(v)$, \cref{lem:out-light-edge} implies that $v$ is not a light out-Steiner vertex. Note that $\bar r$ cannot be the out-parent of $v$, as in that case $v$ would have been deleted from $\Aout_{\bar r}$.
    \item Step~\ref{step:scout-in} of $\scout$: $\pin(v)$ is heavy.
    \item Step~\ref{step:scout-steiner} of $\scout$: $v_\ell$ is a vertex created in the current iteration.
    \item Step~\ref{step:scout-bf-other} of $\scout$: $u_i$ is not a light out-Steiner vertex.
    \item Step~\ref{step:scout-light-in} of $\scout$: Since $v$ is light and $v$ is not an out-Steiner vertex, by \cref{obs:light-vertex}, $v$ must be a light in-Steiner vertex. Then by \cref{lem:in-light-edge}, since $u\in \Nout(v)$, $u$ is not a light out-Steiner vertex.
\end{itemize}

\subsection{Proof of \cref{lem:out-light-edge}}
\cref{lem:out-heavy-edge} implies that by the end of the $t$-th iteration, $\Nin(v)$ only contains $\pout(v)$ and some level-$t$ N-Steiner vertices. We denote the latter by the set $\Sout_v$. Since $v$ is classified as light in Step~\ref{step:classify-out}, we have $|\Sout_v|\leq \lambda$.

In subsequent iterations, consider Steps~\ref{step:in-up}, \ref{step:in-down}, \ref{step:in-r}, \ref{step:in-tilde-r}, \ref{step:out-up}, \ref{step:out-down}, \ref{step:out-r}, \ref{step:out-tilde-r} of $\shortcut(G,t)$. The algorithm never adds incoming edges to $v$ in Steps~\ref{step:in-down} and~\ref{step:out-up} as edges added in these steps are directed into newly created vertices. The same holds for Steps~\ref{step:in-r}, \ref{step:in-tilde-r}, \ref{step:out-r}, \ref{step:out-tilde-r}, because the endpoints of the added edges are either negative vertices or newly created vertices. In addition, it also never adds incoming edges to $v$ in Step~\ref{step:in-up}, since $v$ is light. The only remaining case is Step~\ref{step:out-down}. In this case, since $v$ is light, it can never be the out-parent of any vertex. Thus, the algorithm can only add an edge $(u,v)$ for some out-Steiner vertex $u$, given that the edge $(\pout(u),v)$ already exists in the graph. Assuming \cref{lem:out-light-edge} holds before adding this edge, then we must have $\pout(u)\in \Sout_v$ since an out-Steiner vertex cannot be the out-parent of any vertex by \cref{obs:steiner}~\ref{obs:pout}. Therefore, the lemma continues to hold after adding this edge. We conclude the proof by noting that the algorithm never adds incoming edges to $v$ via procedure $\eout$ by \cref{clm:eout-light}, and it also never adds incoming edges from $v$ in Step~\ref{step:r'} since $\Nin(v)$ does not contain negative vertices.

\subsection{Analysis of $\scout$: Proof of \cref{lem:scout-time}}

\begin{lemma} \label{lem:plus-one-hop-out}
Let $r\in N$ be a negative vertex. For any out-Steiner vertex $v$, if there exists an $h$-hop $r\rightsquigarrow v$ path $P$ in $G$ of weight $W$, then there exists an $(h+1)$-hop path $r\rightsquigarrow \pout(v)$ in $G$ of weight at most $W-w(\pout(v),v)$.
\end{lemma}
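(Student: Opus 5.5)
The plan is to mirror the proof of \cref{lem:plus-one-hop} with all edges reversed, using \cref{lem:out-heavy-edge} in place of \cref{lem:in-heavy-edge}. Let $u$ be the predecessor of $v$ on $P$. It exists because $r$ is not an out-Steiner vertex. Since $u\in\Nin(v)$, \cref{lem:out-heavy-edge}~\ref{item:out-2} places $u$ in one of six categories, and in each one I will build an $r\rightsquigarrow\pout(v)$ path of weight at most $W-w(\pout(v),v)$ that adds at most one hop. The construction reuses $P[r,u]$ and then a short detour into $\pout(v)$.

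\textbf{Direct cases.} If $u=\pout(v)$, take $P[r,u]$. Its weight is $W-w(u,v)=W-w(\pout(v),v)$ and it has at most $h$ hops.

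If $u\in\bar N$, use the first bullet of \cref{lem:out-heavy-edge}~\ref{item:out-3}. It gives an edge $(u,\pout(v))$ of weight at most $w(u,v)-w(\pout(v),v)$, so append it to $P[r,u]$. Before Step~\ref{step:r'} a temporary case $u\in N$ may occur; I will treat it the same way via the corresponding edge from $\eout$.

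If $u$ is an N-Steiner vertex, use the second bullet of \ref{item:out-3}. It gives an edge $(\pn(u),\pout(v))$ of weight at most $w(u,v)-w(u,\pn(u))-w(\pout(v),v)$. Take $P[r,u]\to\pn(u)\to\pout(v)$; the weights telescope to the required bound. The edge $(u,\pn(u))$, possibly a length-two path by \cref{rmk:r'}, contributes at most one extra hop.

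\textbf{In-Steiner predecessor.} If $u$ is an in-Steiner vertex with $\pin(u)=v$, let $u'$ be its predecessor on $P$; it exists because $r$ is not in-Steiner. By \cref{lem:in-heavy-edge}~\ref{item:in-1}, there is an edge $(u',v)$ of weight at most $w(u',u)+w(u,v)$. Also, $u'$ cannot be an in-Steiner vertex: otherwise $u$ would have an in-Steiner out-neighbor, and \cref{lem:in-heavy-edge}~\ref{item:in-2} applied to $u$ would force $\pin(u')$ to be negative or N-Steiner, whereas here $u'$'s parent relation would involve $v$, which is out-Steiner. So I replace $u$ by $u'$ without increasing weight and fall into another case.

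\textbf{Out-Steiner predecessor.} Suppose $u$ is an out-Steiner vertex with $\pout(u)\in\Nin(v)$ and $\pout(u)\in\bar N$. By \cref{lem:out-heavy-edge}~\ref{item:out-1}, there is an edge $(\pout(u),v)$ of weight at most $w(\pout(u),u)+w(u,v)$. By \ref{item:out-3} for $v$, there is an edge $(\pout(u),\pout(v))$ of weight at most $w(\pout(u),u)+w(u,v)-w(\pout(v),v)$. Step~\ref{step:out-r} added an edge $(u,\rho)$ of weight $-w(\rho,\pout(u))-w(\pout(u),u)$, where $\rho$ is the negative in-neighbor of $\pout(u)$. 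Take $P[r,u]\to\rho\to\pout(u)\to\pout(v)$. Here $(\rho,\pout(u))$ is the one extra negative edge, and the weights cancel to give $W-w(\pout(v),v)$.

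If instead $\pout(u)$ is an N-Steiner vertex, argue symmetrically. Step~\ref{step:out-tilde-r} gives the edge $(u,\pn(\pout(u)))$, and \ref{item:out-3} gives the edge $(\pn(\pout(u)),\pout(v))$. Take $P[r,u]\to\pn(\pout(u))\to\pout(v)$; the only possibly negative edge is $(u,\pn(\pout(u)))$.

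\textbf{Main obstacle.} I expect the bookkeeping in these last two cases to be the hardest part. I must check that the Step~\ref{step:out-r} and Step~\ref{step:out-tilde-r} edges have exactly the weights needed for the sums to telescope. I also need to confirm the hop count: in each constructed path exactly one new edge can be negative. This requires carefully invoking \cref{rmk:r'} and \cref{clm:negative-incident}.
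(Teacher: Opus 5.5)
Your case analysis is the paper's proof. You use the same predecessor $u$, the same six categories from \cref{lem:out-heavy-edge}~\ref{item:out-2}, the same replacement paths, and the same weight cancellations, including the Step~\ref{step:out-r} edge $(u,\rho)$ and the Step~\ref{step:out-tilde-r} edge. Two local justifications are wrong as written, though both are easy to repair.

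\textbf{The in-Steiner case.} You have the roles of $u$ and $u'$ swapped. If $u'$ were in-Steiner, it is $u'$, not $u$, that would have an in-Steiner out-neighbor, namely $u$. Applying \cref{lem:in-heavy-edge}~\ref{item:in-2} to $u'$ then forces $\pin(u)$ to lie in $N$ or to be an N-Steiner vertex. This contradicts $\pin(u)=v$ being out-Steiner. Applied to $u$, as you wrote it, item~\ref{item:in-2} only constrains the out-neighbors of $u$ and says nothing about $u'$.

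\textbf{The hop count.} This is the more substantive slip. In the two cases that pass through a negative vertex $s$ (that is, $s=\pn(u)$, resp.\ $s=\pn(\pout(u))$), you put the extra hop on the edge \emph{entering} $s$. That is backwards. In-edges of a negative vertex are made non-negative by Step~\ref{step:r'}, and \cref{rmk:r'} concerns only edges \emph{leaving} $s$.

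The edge that can be negative is $(s,\pout(v))$. It stands for the length-two path $s\to\bar s\to\pout(v)$, and its weight bound from \cref{lem:out-heavy-edge}~\ref{item:out-3} can be negative. Your accounting implicitly treats this edge as non-negative. The bound of $h+1$ hops still holds, but only for the paper's reason: exactly one negative edge, the one leaving $s$. You got this right in the $\bar N$ case, where the hop sits on $(\rho,\pout(u))$.

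Mechanically reversing \cref{lem:plus-one-hop} is what misleads you here. Negative edges always leave vertices of $N$, and this asymmetry is not preserved when edges are reversed.
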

\begin{proof}
Let $u$ be the predecessor of $v$ on the path $P$. Such a vertex must exist since $r$ is not an out-Steiner vertex. Since $v$ is an out-Steiner vertex and $u\in \Nin(v)$, by \cref{lem:out-heavy-edge}~\ref{item:out-2}, one of the following holds:
\begin{itemize}
    \item $u$ is an in-Steiner vertex with $\pin(u)=v$: Let $u'$ be the predecessor of $u$ on the path $P$. Such a vertex must exist since $r$ is not an in-Steiner vertex. Since $u'\in \Nin(u)$ and $\pin(u)=v$, by \cref{lem:in-heavy-edge}~\ref{item:in-1}, there is an edge $(u',v)$ of weight at most $w(u',u)+w(u,v)=w(P[u',v])$. We note that $u'$ is not an in-Steiner vertex. Otherwise, since $u\in \Nout(u')$ and $\pin(u)=v$ is an out-Steiner vertex, this would contradict \cref{lem:in-heavy-edge}~\ref{item:in-2} applied to $u'$. Hence, we can replace $u$ by $u'$. After that, we still have $u\in \Nin(v)$ and $w(u,v)\leq w(P[u,v])$, but $u$ must now fall into one of the later cases of our analysis.
    \item $u=\pout(v)$: The desired path is $P[r,u]$. We have
    \[
    w(P[r,u])=w(P)-w(P[u,v])\leq w(P)-w(u,v)=w(P)-w(\pout(v),v),
    \]
    and $w(P)$ clearly contains at most $h$ hops.
    \item $u\in \bar N$: By the \cref{lem:out-heavy-edge}~\ref{item:out-3}, there is an edge $(u,\pout(v))$ of weight at most $w(u,v)-w(\pout(v),v)$. Therefore, the desired path is $r\rightsquigarrow u\to \pout(v)$, where the $r\rightsquigarrow u$ segment is $P[r,u]$. The weight of the path is
    \[
        w(P[r,u])+w(u,\pout(v))\leq w(P)-w(u,v)+w(u,v)-w(\pout(v),v)\leq w(P)-w(\pout(v),v),
    \]
    and it contains at most $h$ hops.
    \item $u$ is an N-Steiner vertex: By the \cref{lem:out-heavy-edge}~\ref{item:out-3}, there is an edge $(\pn(u),\pout(v))$ of weight at most $w(u,v)-w(u,\pn(u))-w(\pout(v),v)$. The desired path is $r\rightsquigarrow u\to\pn(u)\to\pout(v)$, where the $r\rightsquigarrow u$ segment is $P[r,u]$. The weight of this path is
    \begin{align*}
    & \quad\; w(P[r,u])+w(u,\pn(u))+w(\pn(u),\pout(v))\\
    & \leq w(P)-w(u,v)+w(u,\pn(u))+w(u,v)-w(u,\pn(u))-w(\pout(v),v)\\
    & =w(P)-w(\pout(v),v).
    \end{align*}
    Note that in the $u\to\pn(u)\to\pout(v)$ segment, $(\pn(u),\pout(v))$ is the only edge that can be negative. (Strictly speaking, the edge $(\pn(u),\pout(v))$ may actually correspond to a length-two path, see \cref{rmk:r'}. Nevertheless, such a length-two path contains at most one negative edge.) Therefore, this path consists of at most $h+1$ hops.
    \item $u$ is an out-Steiner vertex and $\pout(u)\in \bar N$: Since $v\in \Nout(u)$, by \cref{lem:out-heavy-edge}~\ref{item:out-1} applied to $u$, there is an edge $(\pout(u),v)$ of weight at most $w(\pout(u),u)+w(u,v)$. Since $\pout(u)\in\Nin(v)$ and $\pout(u)\in \bar N$, by \cref{lem:out-heavy-edge}~\ref{item:out-3} applied to $v$, there is an edge $(\pout(u),\pout(v))$ of weight at most
    \[
    w(\pout(u),v)-w(\pout(v),v)\leq w(\pout(u),u)+w(u,v)-w(\pout(v),v)
    \]
    Moreover, since $\pout(u)\in \bar N$, let $s\in N$ be the unique in-neighbor of $\pout(u)$. In Step~\ref{step:out-r} of the procedure $\shortcut$, it adds an edge $(u,s)$ of weight $-w(s,\pout(u))-w(\pout(u),u)$. Therefore, the desired path is $r\rightsquigarrow u\to s\to \pout(u)\to \pout(v)$, where the $r\rightsquigarrow u$ segment is $P[r,u]$. The weight of the path is
    \begin{align*}
    & \quad\; w(P[r,u])+w(u,s)+w(s,\pout(u))+w(\pout(u),\pout(v))\\
    & \leq w(P)-w(u,v)-w(s,\pout(u))-w(\pout(u),u)+w(s,\pout(u))\\
    & \quad\; + w(\pout(u),u)+w(u,v)-w(\pout(v),v)\\
    & = w(P)-w(\pout(v),v).
    \end{align*}
    Note that in the $u\to s\to \pout(u)\to \pout(v)$ segment, $(s,\pout(u))$ is the only edge that can be negative. Therefore, this path consists of at most $h+1$ hops.
    \item $u$ is an out-Steiner vertex and $\pout(u)$ is an N-Steiner vertex: Since $v\in \Nout(u)$, by \cref{lem:out-heavy-edge}~\ref{item:out-1} applied to $u$, there is an edge $(\pout(u),v)$ of weight at most $w(\pout(u),u)+w(u,v)$. Since $\pout(u)\in \Nin(v)$ and $\pout(u)$ is an N-Steiner vertex, by \cref{lem:out-heavy-edge}~\ref{item:out-3} applied to $v$, there is an edge $(\pn(\pout(u)),\pout(v))$ of weight at most
    \begin{align*}
    & \quad\; w(\pout(u),v) - w(\pout(u),\pn(\pout(u))) - w(\pout(v),v)\\
    &\leq w(\pout(u),u)+w(u,v)- w(\pout(u),\pn(\pout(u))) - w(\pout(v),v)
    \end{align*}
    Moreover, in Step~\ref{step:out-tilde-r} of the procedure $\shortcut$, it adds an edge $(u,\pn(\pout(u)))$ of weight $w(\pout(u),\pn(\pout(u)))-w(\pout(u),u)$. Therefore, the desired path is $r\rightsquigarrow u\to\pn(\pout(u))\to\pout(v)$, where the $r\rightsquigarrow u$ segment is $P[r,u]$. The weight of the path is
    \begin{align*}
    & \quad\; w(P[r,u])+w(u,\pn(\pout(u)))+w(\pn(\pout(u)),\pout(v))\\
    & \leq w(P)-w(u,v)+w(\pout(u),\pn(\pout(u)))-w(\pout(u),u)\\
    & \quad\; +w(\pout(u),u)+w(u,v)- w(\pout(u),\pn(\pout(u))) - w(\pout(v),v)\\
    & = w(P)-w(\pout(v),v).
    \end{align*}
    Note that in the $u\to\pn(\pout(u))\to\pout(v)$ segment, $(\pn(\pout(u)),\pout(v))$ is the only edge that can be negative. (Strictly speaking, the edge $(\pn(\pout(u)),\pout(v))$ may actually correspond to a length-two path, see \cref{rmk:r'}. Nevertheless, such a length-two path contains at most one negative edge.) Therefore, this path consists of at most $h+1$ hops.
\end{itemize}
Since the desired path can be found in all cases, the lemma follows.
\end{proof}

\begin{lemma} \label{lem:scout-dist-valid}
Throughout the algorithm $\scout(r,t)$, for any vertex $v\in V$, if $d_v<\infty$, then there exists an $(L+1)$-hop $r\rightsquigarrow v$ path in $G$ of weight at most $d_v$.
\end{lemma}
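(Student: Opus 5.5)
The plan mirrors the proof of \cref{lem:scin-dist-valid}, with \cref{lem:plus-one-hop-out} replacing \cref{lem:plus-one-hop}. I would prove the stronger level-indexed claim: \emph{for any level-$\ell$ vertex $v$ with $d_v<\infty$, there is a $(t-\ell+1)$-hop $r\rightsquigarrow v$ path in $G$ of weight at most $d_v$.} Since $t\le L$ and $\ell\ge 0$, this immediately gives the $(L+1)$-hop bound. The extra $+1$ compared with the $\scin$ case comes from initialization. For $v\in\Uout_r$ we set $d_v=d^1(r,v)$, which is witnessed by a $1$-hop path. This fits the claim because $1\le t-\ell+1$ for every level $\ell\le t$.

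The induction proceeds over the updates made in Steps~\ref{step:scout-out} and~\ref{step:scout-bf-out}, in the order they are performed; these are the only places where $d$ values change.

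\textbf{Step~\ref{step:scout-out}.} Here $v$ is a level-$\ell$ out-Steiner vertex, and by hypothesis there is a $(t-\ell+1)$-hop $r\rightsquigarrow v$ path of weight at most $d_v$. Applying \cref{lem:plus-one-hop-out} gives a $(t-\ell+2)$-hop $r\rightsquigarrow\pout(v)$ path of weight at most $d_v-w(\pout(v),v)$. By \cref{obs:steiner}~\ref{obs:lower-level-out}, $\pout(v)$ has level $\ell'\le\ell-1$, so $t-\ell+2\le t-\ell'+1$, as required.

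\textbf{Step~\ref{step:scout-bf-out}.} Here $v$ is a level-$\ell$ vertex that is not an out-Steiner vertex, and $u_i\in\Aout_v$ is a light out-Steiner vertex. Appending the edge $(v,u_i)$ gives a $(t-\ell+1)$-hop $r\rightsquigarrow u_i$ path of weight at most $d_v+w(v,u_i)$. This edge is non-negative because of how $\Aout_v$ is defined in Step~\ref{step:delete}, so the hop count does not increase. \cref{lem:plus-one-hop-out} then yields a $(t-\ell+2)$-hop $r\rightsquigarrow\pout(u_i)$ path of weight at most $W$.

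The remaining task is to show that $\pout(u_i)$ has level strictly less than $\ell$. I would use \cref{lem:out-light-edge}: since $v\in\Nin(u_i)$, one of the following holds.
\begin{itemize}
\item $v=\pout(u_i)$. This is excluded, because $u_i$ would then have been removed from $\Aout_v$ in Step~\ref{step:delete}.
\item $v$ is an out-Steiner vertex. This is excluded by the case assumption.
\item $v\in\Sout_{u_i}$, i.e.\ $v$ is a level-$\ell'$ N-Steiner vertex, where $\ell'$ is the level of $u_i$. Then $\ell=\ell'$, and \cref{obs:steiner}~\ref{obs:lower-level-out} gives $\mathrm{level}(\pout(u_i))<\ell'=\ell$.
\end{itemize}
The hop count then closes exactly as in Step~\ref{step:scout-out}.

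The main obstacle I expect is the bookkeeping around Step~\ref{step:r'} and \cref{rmk:r'}. The paths produced by \cref{lem:plus-one-hop-out} may pass through edges that are really length-two paths through $\bar r$. One also has to confirm that the level argument is unaffected by the edges $(u,\pout(v))$ temporarily sourced at negative vertices, which are mentioned in \cref{lem:out-heavy-edge}~\ref{item:out-2}. I would handle this by noting that $\scout$ runs before Step~\ref{step:r'}, so the neighbor classification used above is the one in force at that time. Finally, since every update strictly lowers the level, each vertex is processed once and the induction is well founded.
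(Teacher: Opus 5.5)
Your proposal is correct and matches the paper's proof: you use the same level-indexed claim with $(t-\ell+1)$ hops, the base case from $d^1(r,v)$, and the same two inductive updates, relying on \cref{lem:plus-one-hop-out}, \cref{obs:steiner}, and the case analysis from \cref{lem:out-light-edge} to show that $\pout(u_i)$ lies at a strictly lower level. Your explicit observation that $(v,u_i)$ has non-negative weight by the definition of $\Aout_v$, so appending it adds no hop, fills in a step the paper leaves implicit.
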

\begin{proof}
We prove the following claim, from which the lemma follows immediately.
\begin{claim}
For any level-$\ell$ vertex $v\in V$, if $d_v<\infty$, then there exists an $(t-\ell+1)$-hop $r\rightsquigarrow v$ path in $G$ of weight at most $d_v$.
\end{claim}
For every vertex $v\in \Uout_r$, we set $d_v=d^1(r,v)$ initially and the claim holds trivially. We now prove the claim inductively for distances updated in Steps~\ref{step:scout-out} and~\ref{step:scout-bf-out}.

Suppose we are at Step~\ref{step:scout-out}, and suppose $v$ is a level-$\ell$ vertex. By induction hypothesis, there exists a $(t-\ell+1)$-hop $r\rightsquigarrow v$ path in $G$ of weight at most $d_v$. By \cref{lem:plus-one-hop-out}, there exists a $(t-\ell+2)$-hop $r\rightsquigarrow \pout(v)$ path of weight at most $d_v-w(\pout(v),v)$. Moreover, by \cref{obs:steiner}~\ref{obs:lower-level-out}, $\pout(v)$ is a level-$\ell'$ vertex for some $\ell' < \ell$. Therefore, if we update $d_{\pout(v)}$ to $d_v-w(\pout(v),v)$ in Step~\ref{step:scout-out}, there must exist a $(t-\ell+2)\leq (t-\ell'+1)$-hop $r\rightsquigarrow \pout(v)$ path in $G$ with weight at most $d_{\pout(v)}$.

Suppose we are at Step~\ref{step:scout-bf-out}, and suppose $v$ is a level-$\ell$ vertex, $u_i$ is a level-$\ell'$ vertex. By induction hypothesis, there exists a $(t-\ell+1)$-hop $r\rightsquigarrow v$ path in $G$ of weight at most $d_v$. This implies that there is a $(t-\ell+1)$-hop $r\rightsquigarrow u_i$ path of weight at most $d_v+w(v,u_i)$. By \cref{lem:plus-one-hop-out}, there exists a $(t-\ell+2)$-hop $r\rightsquigarrow \pout(u_i)$ path of weight at most $d_v+w(v,u_i)-w(\pout(u_i),u_i)$. On the other hand, since $u_i$ is a light out-Steiner vertex, by \cref{lem:out-light-edge}, $v$ must fall into one of the following categories:
\begin{itemize}
    \item $v=\pout(u_i)$: This is not possible since we have $u_i\in \Aout_v$ at Step~\ref{step:scout-bf-out}. If $v=\pout(u_i)$, we would have deleted $u_i$ from $\Aout_v$ in Step~\ref{step:delete} of $\shortcut(G,t)$.
    \item $v$ is a level-$\ell'$ N-Steiner vertex: We have $\ell'=\ell$. By \cref{obs:steiner}~\ref{obs:lower-level-out}, $\pout(u_i)$ is a level-$\ell''$ vertex for some $\ell''<\ell'=\ell$.
    \item $v$ is an out-Steiner vertex where $\pout(v)$ is a level-$\ell'$ vertex: This is also not possible since at Step~\ref{step:scout-bf-out}, $v$ is not an out-Steiner vertex.
\end{itemize}
Therefore, we must have $\pout(v)$ is a level-$\ell''$ vertex for some $\ell'' < \ell$. Therefore, if we update $d_{\pout(u_i)}$ to $d_v+w(v,u_i)-w(\pout(u_i),u_i)$ in Step~\ref{step:scout-bf-out}, then there must be a $(t-\ell+2)\leq (t-\ell''+1)$-hop $r\rightsquigarrow \pout(u_i)$ path in $G$ with weight at most $d_{\pout(u_i)}$.
\end{proof}

\begin{claim} \label{clm:scout-dist}
During the algorithm $\scout(r,t)$, for any vertex $v\in V$, if $d_v<\infty$, then $d_v<-\Delta_r$.
\end{claim}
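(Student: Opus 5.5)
The plan is to mirror the proof of \cref{clm:scin-dist}: induct over the order in which $\scout(r,t)$ assigns finite values to the $d_v$. We show that each assignment produces a value strictly below $-\Delta_r$. Since $d_v$ is only ever decreased (via $\min$), the inequality persists once it holds.

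\emph{Base case.} Initially $d_v<\infty$ only for $v\in\Uout_r$, where we set $d_v=d^1(r,v)$. By the definition of $\Uout_r$ in \cref{lem:compute-U}, we have $d^1(r,v)<-\Delta_r$.

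\emph{Inductive step.} Finite values are written only in Steps~\ref{step:scout-out} and~\ref{step:scout-bf-out}, so it suffices to check these two.

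In Step~\ref{step:scout-out}, $v$ is an out-Steiner vertex being processed, and by the induction hypothesis $d_v<-\Delta_r$. The candidate value is $d_v-w(\pout(v),v)$. The edge $(\pout(v),v)$ was created in Step~\ref{step:out-up} with weight $w(\pout(v),u_{k-2^\ell})$, where $u_{k-2^\ell}\in\Aout_{\pout(v)}$, and $\Aout$ contains only non-negative edges. Its weight therefore starts out non-negative and remains so after any later valid potential. This gives $w(\pout(v),v)\ge 0$, hence $d_v-w(\pout(v),v)\le d_v<-\Delta_r$.

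In Step~\ref{step:scout-bf-out}, the algorithm sets $d_{\pout(u_i)}$ to $\min\{d_{\pout(u_i)},W\}$, and it does so only under the explicit guard $W<-\Delta_r$. The new value is therefore below $-\Delta_r$ directly, with no hypothesis on $d_v$ needed.

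\emph{Main obstacle.} The only point needing care is justifying $w(\pout(v),v)\ge 0$ at the moment Step~\ref{step:scout-out} runs. This follows from the construction in Step~\ref{step:out-up} together with the fact that all potentials applied are valid, so non-negative edges stay non-negative. The same fact was already used in \cref{clm:negative-incident}. With that in hand, the claim follows, exactly as in the in-case.
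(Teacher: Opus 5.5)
Your proof is correct and is essentially the paper's own argument. It uses the same induction over the assignments to $d_v$: the base case comes from the definition of $\Uout_r$, Step~\ref{step:scout-out} is handled by the induction hypothesis together with $w(\pout(v),v)\ge 0$, and Step~\ref{step:scout-bf-out} is handled directly by the guard $W<-\Delta_r$.

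The only difference is that you justify $w(\pout(v),v)\ge 0$ (via Step~\ref{step:out-up} and the validity of the reweightings), whereas the paper simply asserts it. Strictly speaking, Step~\ref{step:r'} is not a pure potential application, but it only increases the weights of edges leaving $\bar r$, so your argument covers it. The exception is the rerouted case $\pout(v)\in N$, where the edge $(\pout(v),v)$ is replaced by a path through the negative edge; the paper leaves this case implicit as well.
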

\begin{proof}
For every $v\in \Uout_r$, \cref{lem:compute-U} implies that $d^1(r,v)<-\Delta_r$. We inductively show that the bound holds for distances updated in Steps~\ref{step:scout-out} and~\ref{step:scout-bf-out}. In Step~\ref{step:scout-out}, since $d_v<-\Delta_r$ by induction hypothesis and $w(\pout(v),v)\geq 0$, we update $d_{\pout(v)}$ only with the value $d_v - w(\pout(v),v) < -\Delta_r$. In Step~\ref{step:scout-bf-out}, we update $d_{\pout(u_i)}$ with a value $W$ only when $W < -\Delta_r$. Therefore, the claim always holds.
\end{proof}

\begin{lemma}[\cref{lem:scout-time} Restated]
The running time of $\scout(r,t)$ is $\tilde O(|\Vout_r|^2+
\lambda|\Vout_r|)$.
\end{lemma}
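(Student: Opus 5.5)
The plan mirrors the proof of \cref{lem:scin-time}, with the roles of in/out and $\Vin_r$/$\Vout_r$ swapped. First, combining \cref{lem:scout-dist-valid} with \cref{clm:scout-dist}, every vertex $v$ with $d_v<\infty$ admits an $(L+1)$-hop $r\rightsquigarrow v$ path of weight less than $-\Delta_r$. Since the parameter $h$ of \cref{lem:compute-delta} is set to $L+1$ in Step~\ref{step:betweenness-reduction}, this gives $d^h(r,v)<-\Delta_r$, so every vertex processed by $\scout(r,t)$ lies in $\Vout_r$. Next, as in the proof of \cref{lem:scout-dist-valid}, whenever a level-$\ell$ vertex is processed, any vertex it inserts (in Step~\ref{step:scout-out} or Step~\ref{step:scout-bf-out}) has level strictly smaller than $\ell$. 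Because the outer loop runs over levels in decreasing order, each vertex is processed at most once. The number of processed vertices is therefore at most $|\Vout_r|$, and it suffices to bound the cost of processing a single vertex by $\tilde O(|\Vout_r|+\lambda)$.

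For a single vertex $v$, each call to $\eout$ costs $\tilde O(1)$ by \cref{clm:addedge-time}. The binary search in Step~\ref{step:scout-sort} costs $O(\log n)$, since $\Aout_v$ was sorted once in Step~\ref{step:steiner-vertex-out}. Two steps need real arguments.

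\textbf{Step~\ref{step:scout-bf}.} Its cost is $O(2^\ell)=O(k-j+1)$. For each $j\le i\le k$ we have $d_v+w(v,u_i)+\Delta_r<0$, and \cref{lem:scout-dist-valid} gives $d^h(r,v)\le d_v$. Extending the path by the non-negative edge $(v,u_i)$ adds no hops, so
\[
d^h(r,u_i)\le d^h(r,v)+w(v,u_i)\le d_v+w(v,u_i)<-\Delta_r .
\]
Hence $u_i\in\Vout_r$, and so $k-j+1\le|\Vout_r|$.

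\textbf{Step~\ref{step:scout-light-in}.} Its cost is $O(|\Aout_v|)\le O(|\Nout(v)|)$. Here $v$ is light and not an out-Steiner vertex, so by \cref{obs:light-vertex} it is a light in-Steiner vertex. By \cref{lem:in-light-edge}, every out-neighbor other than $\pin(v)$ is either in $\Sin_v$, with $|\Sin_v|\le\lambda$, or an in-Steiner vertex whose in-parent lies in $\Sin_v$. Each vertex acquires $O(\log n)$ in-Steiner children per iteration, over $L$ iterations. Using the $\tilde O(n)$ bound on the vertex count from \cref{sec:runtime}, this gives $|\Nout(v)|=O(\lambda L\log n)=\tilde O(\lambda)$.

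Summing over at most $|\Vout_r|$ processed vertices yields the claimed bound $\tilde O(|\Vout_r|^2+\lambda|\Vout_r|)$.

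The main point needing care is the hop bookkeeping in the first step. Vertices reached through Step~\ref{step:scout-out} or Step~\ref{step:scout-bf-out} accumulate one extra hop per level. We must make sure that the bound $L+1$ from \cref{lem:scout-dist-valid} does not exceed the parameter $h$ used to define $\Vout_r$. This holds because $h=L+1$ by Step~\ref{step:betweenness-reduction}. That same $h$ also covers the $L$-hop bound needed in \cref{lem:scin-time}.
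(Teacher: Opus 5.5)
Your proposal is correct and follows the paper's proof essentially step for step. Every processed vertex lies in $\Vout_r$ by \cref{lem:scout-dist-valid} and \cref{clm:scout-dist}, and each vertex is processed once because levels strictly decrease; the per-vertex costs are bounded by $k-j+1\le|\Vout_r|$ in Step~\ref{step:scout-bf} and by $|\Nout(v)|=\tilde O(\lambda)$ via \cref{lem:in-light-edge} in Step~\ref{step:scout-light-in}, and your explicit checks that $h=L+1$ and that $(v,u_i)$ is a non-negative (hop-free) edge spell out points the paper leaves implicit.
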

\begin{proof}
Combining \cref{lem:scout-dist-valid} and \cref{clm:scout-dist}, we see that for every vertex $v\in V$ with $d_v<\infty$, $d^{L+1}(r,v)<-\Delta_r$ holds. Therefore, for every vertex $v$ that has been processed during $\scout(r,t)$, $v\in \Vout_r$ must hold. We first note that each vertex is only processed once. This follows from the proof of \cref{lem:scout-dist-valid}: whenever a level-$\ell$ vertex $v$ is processed, and a level-$\ell'$ vertex $v'$ is added in Step~\ref{step:scout-out} or Step~\ref{step:scout-bf-out}, $\ell'<\ell$ must hold.

Now we show that the time required for processing a single vertex $v$ is $\tilde O(|\Vout_r|+\lambda L)$. By \cref{clm:addedge-time}, the running time of $\eout$ is $\tilde O(1)$. We focus on the non-trivial steps:
\begin{itemize}
    \item Step~\ref{step:scout-bf}: The running time of this step is $O(2^{\ell})=O(k-j+1)$. We now show that $k-j+1\leq |\Vout_r|$. This is because for every $j\leq i\leq k$, $d_v+w(v,u_i)+\Delta_r<0$ holds. Since \cref{lem:scout-dist-valid} implies $d_v\geq d^{L+1}(r,v)$, by triangle inequality, we have
    \[
    d^{L+1}(r,u_i)\leq d^{L+1}(r,v)+w(v,u_i)\leq d_v+w(v,u_i)<-\Delta_r.
    \]
    Therefore, $u_i\in \Vout_r$ for every $j\leq i\leq k$, which implies that $k-j+1\leq |\Vout_r|$.
    \item Step~\ref{step:scout-light-in}: The running time of this step is $O(|\Aout_v|)=O(|\Nout(v)|)$. Since $v$ is not an out-Steiner vertex and $v$ is light, by \cref{obs:light-vertex}, $v$ must be a light in-Steiner vertex. Then by \cref{lem:in-light-edge}, for every $u\in \Nout(v)\setminus \{\pin(v)\}$, either $u\in \Sin_v$ or $u$ is an in-Steiner vertex with $\pin(u)\in \Sin_v$. Moreover, we have $|\Sin_v|\leq \lambda$. For every vertex $x\in \Sin_v$, each iteration of the $\shortcut$ algorithm may create at most $O(\log n)$ in-Steiner vertices $y$ with $\pin(y)=x$. Note that the number of vertices may increase, \cref{sec:runtime} shows that it is always bounded by $\tilde O(n)$. Since there are $L$ iterations in total, the number of such in-Steiner vertices $y$ is bounded by $O(L\log n)$. Therefore,
    \[
    \Nout(v)=O(|\Sin_v|L\log n)=O(\lambda L\log n)=\tilde O(\lambda).
    \]
\end{itemize}
\end{proof}

\subsection{Proof of \cref{clm:shortcut-out}}
First consider the case $\Delta_r\geq -w(r,\bar r)$.Let $x$ be the successor of $\bar r$ on the path $P[\bar r,v]$. There are two cases:
\begin{itemize}
    \item $x\notin \Aout_{\bar r}$: By \cref{obs:list}~\ref{obs:out-list}, it must be the case that $x$ is an out-Steiner vertex with $\pout(x)=\bar r$. In this case, let $y$ be the successor of $x$ on the path $P[\bar r,v]$. Such a vertex must exist since $v$ is not an out-Steiner vertex. Then by \cref{lem:out-heavy-edge}~\ref{item:out-1}, there is an edge $(\bar r,y)$ of weight at most $w(\bar r,x)+w(x,y)$. Moreover, by \cref{obs:steiner}~\ref{obs:pout}, $y$ is not an out-Steiner vertex. Since $y\in \Nout(\bar r)$ and $y$ is not an out-Steiner vertex, we must have $y\in\Aout_{\bar r}$. Thus, we have added the edge $(\tilde r_t,y)$ of weight $w(r,\bar r)+\Delta_r+w(\bar r,y)\leq w(P[r,y])+\Delta_r$ in Step~\ref{step:shortcut} of $\shortcut(G,t)$. Consider the path $\tilde r_t \to y \rightsquigarrow v$, where the segment $y \rightsquigarrow v$ is $P[y,v]$. The weight of this path is at most $d(r,v) + \Delta_r$.
    \item $x\in \Aout_{\bar r}$: We have added the edge $(\tilde r_t,x)$ of weight $w(r,\bar r)+\Delta_r+w(\bar r,x)$ in Step~\ref{step:shortcut} of $\shortcut(G,t)$. Consider the path $\tilde r_t \to x \rightsquigarrow v$, where the segment $x \rightsquigarrow v$ is $P[x,v]$. The weight of this path is exactly $d(r,v)+\Delta_r$.
\end{itemize}

This completes the proof for the case $\Delta_r\geq -w(r,\bar r)$. In the remainder, we assume $\Delta_r<-w(r,\bar r)$.

Consider the following recursive algorithm $\findpathout(x,y,Q)$. The input consists vertices $x,y\in V$ and a $r\rightsquigarrow v$ path $Q$ that satisfy the following assumptions:
\begin{assumption} \label{asmp:scout-x}
$x\in \Nout(y)$ and $d(r,x)=d(r,y)+w(y,x)\geq -\Delta_r$. In addition, $x$ lies on the path $Q$.
\end{assumption}
\begin{assumption} \label{asmp:scout-y}
During the execution of $\scout(r,t)$, vertex $y$ is processed, and at the moment it is processed, we have $d_y=d(r,y)<-\Delta_r$.
\end{assumption}
\begin{assumption} \label{asmp:scout-Q}
$w(Q)=d(r,v)$, and $Q[x,v]$ contains $0$ hops.
\end{assumption}
The output of the algorithm is a $0$-hop $\tilde r_t\rightsquigarrow v$ path of weight $d(r,v)+\Delta_r$.

Now we describe the algorithm $\findpathout(x,y,Q)$.
\begin{itemize}
    \item {\bf(Case R1)} If $y$ is an out-Steiner vertex: Since $y$ is an out-Steiner vertex and $x\in \Nout(y)$, by \cref{lem:out-heavy-edge}~\ref{item:out-1}, there is an edge $(\pout(y),x)$ of weight at most $w(\pout(y),y)+w(y,x)$. Then we can recursively invoke $\findpathout(x,\pout(y),Q)$ and return the resulting path. \cref{asmp:scout-y} is satisfied since when vertex $y$ is processed in $\scout(r,t)$, in Step~\ref{step:scout-out}, the algorithm must have updated $d_{\pout(y)}$ with $d_y-w(\pout(y),y)$ and thus $\pout(y)$ is also processed. In addition, by triangle inequality,
    \[
        d(r,\pout(y)) \geq d(r,y)-d(\pout(y),y)=d_y-d(\pout(y),y) \geq d_{\pout(y)}.
    \]
    By \cref{lem:scout-dist-valid}, we have $d_{\pout(y)}\leq d(r,\pout(y))$. Therefore, $d(r,\pout(y))=d_{\pout(y)}$ holds. Since $w(\pout(y),y)\geq 0$, $d_{\pout(y)}\leq d_y<-\Delta_r$ also holds.
    \item We assume $y$ is not an out-Steiner vertex henceforth.
    \item If $x\notin \Aout_y$: By \cref{obs:list}~\ref{obs:out-list}, one of the following holds:
    \begin{itemize}
        \item {\bf(Case R2)} $x$ is an out-Steiner vertex with $\pout(x)=y$: Let $x'$ be the successor of $x$ on the path $Q$. Such a vertex must exist since $v$ is not an out-Steiner vertex. Since $x$ is an out-Steiner vertex and $x'\in \Nout(x)$, by \cref{lem:out-heavy-edge}~\ref{item:out-1}, there is an edge $(y,x')$ of weight at most $w(y,x)+w(x,x')$. Then we can recursively invoke $\findpathout(x',y,Q)$ and return the resulting path. \cref{asmp:scout-x} is satisfied since
        \[
            d(r,x')=d(r,y)+w(y,x)+w(x,x')\geq d(r,y)+w(y,x').
        \]
        Combined with the triangle inequality $d(r,x')\leq d(r,y)+w(y,x')$, we obtain $d(r,x')=d(r,y)+w(y,x')$. By \cref{asmp:scout-Q}, we have $w(x,x')\geq 0$. Thus,
        \[
        d(r,x')=d(r,x)+w(x,x')\geq d(r,x)\geq -\Delta_r.
        \]
        \item $y$ is an out-Steiner vertex: This case is already handled.
        \item $y$ is an in-Steiner vertex with $\pin(y)=x$: We can return the path $\tilde r_t\to x\rightsquigarrow v$ where the $x\rightsquigarrow v$ segment is $Q[x,v]$. This is because when $y$ is processed in $\scout(r,t)$, in Step~\ref{step:scout-in}, for $W=d_y+w(y,x)+\Delta_r$, the algorithm adds an edge $(\tilde r_t,x)$ of weight $W$ if $W\geq 0$. Note that
        \[
            d(r,x)=d(r,y)+w(y,x)=d_y+w(y,x).
        \]
        Since $d(r,x)\geq -\Delta_r$, $W\geq 0$ must hold. Therefore, the edge $(\tilde r_t,x)$ of weight $W=d(r,x)+\Delta_r$ is added. The output path has weight
        \[
        w(\tilde r_t,x)+w(Q[x,v])=d(r,x)+\Delta_r+d(x,v)=d(r,v)+\Delta_r,
        \]
        and it contains $0$ hops.
    \end{itemize}
    \item If $x\in \Aout_y$: When vertex $y$ is processed in $\scout(r,t)$, in Step~\ref{step:scin-sort}, $\Aout_y=\{u_1,\dots,u_k\}$ is sorted and we computed $j$ and $\ell=\lceil \log(k-j+1)\rceil$. Since $x\in \Aout_y$, we may assume $x=u_i$. Now we consider three cases:
    \begin{itemize}
        \item $i\leq k-2^{\ell}$: In Step~\ref{step:out-down} of the procedure $\shortcut$, the algorithm adds an edge $(y_{\ell},x)$ of weight $w(y,x)-w(y,y_\ell)$. When vertex $y$ is processed in $\scout(r,t)$, in Step~\ref{step:scin-steiner}, the algorithm adds an edge $(\tilde r_t,y_{\ell})$ of weight $d_y+w(y,y_\ell)+\Delta_r$. We can therefore return the path $\tilde r_t\to y_\ell\to x\rightsquigarrow v$ where the $x\rightsquigarrow v$ segment is $Q[x,v]$. It has weight
        \begin{align*}
            & \quad \; w(\tilde r_t,y_{\ell})+w(y_\ell,y)+w(Q[x,v])\\
            & = d_y+w(y,y_\ell)+\Delta_r+w(y,x)-w(y,y_\ell)+d(x,v)\\
            & = d(r,x)+d(x,v)+\Delta_r\\
            & = d(r,v)+\Delta_r,
        \end{align*}
        and it contains $0$ hops.
        \item $i>k-2^{\ell}$ and $x$ is a light out-Steiner vertex: When vertex $y$ is processed in $\scout(r,t)$, in Step~\ref{step:scout-bf-out}, we have $W=d_y+w(y,x)-w(\pout(x),x)$. Consider two cases:
        \begin{itemize}
            \item {\bf(Case R3)} $W<-\Delta_r$: Then we can recursively invoke $\findpathout(x,\pout(x),Q)$ and return the resulting path. \cref{asmp:scout-y} is satisfied since in Step~\ref{step:scout-bf-out}, the algorithm must have updated $d_{\pout(x)}$ with $W$ and thus $\pout(x)$ is processed. In addition, by triangle inequality,
            \[
             d(r,\pout(x))\geq d(r,x)-d(\pout(x),x)=d(r,y)+w(y,x)-d(\pout(x),x)=W\geq d_{\pout(x)}.
            \]
            By \cref{lem:scout-dist-valid}, we have $d_{\pout(x)}\leq d(r,\pout(x))$. Therefore, $d(r,\pout(x))=d_{\pout(x)}$ holds. Since $W<-\Delta_r$, $d_{\pout(x)}\leq W<-\Delta_r$ also holds.
            \item {\bf(Case R4)} $W\geq -\Delta_r$: Since $x$ is a light out-Steiner vertex and $y\in \Nout(x)$ is not an out-Steiner vertex, by \cref{lem:out-light-edge}, $y$ must be an N-Steiner vertex ($y$ cannot be $\pout(x)$ since $x\in \Aout_y$). By \cref{lem:out-heavy-edge}~\ref{item:out-3}, there is an edge $(y,\pout(x))$ of weight $w(y,x)-w(\pout(x),x)$ in $\Fout$. Since $W=d_y+w(y,x)-w(\pout(x),x)\geq -\Delta_r$, we have
            \[
            w(y,x)-w(\pout(x),x)\geq -\Delta_r-d_y=-\Delta_r-d(r,y).
            \]
            Since $d(r,y)<-\Delta_r$, we see that $w(y,x)-w(\pout(x),x)\geq 0$. Therefore, by \cref{obs:f}~\ref{obs:fout}, the edge $(y,\pout(x))$ of weight $w(y,x)-w(\pout(x),x)$ is added into $G$. Consider the path $Q'=r\rightsquigarrow y\to \pout(x)\to x\rightsquigarrow v$, where the segment $x\rightsquigarrow v$ is $Q[x,v]$, and the segment $r\rightsquigarrow y$ is an arbitrary $(r,y)$ shortest path. We can recursively invoke $\findpathout(\pout(x),y,Q')$ and return the resulting path. We first show that \cref{asmp:scout-x} is satisfied. By triangle inequality,
            \begin{align*}
            d(r,\pout(x))& \geq d(r,x)-d(\pout(x),x)\\
            & = d(r,y)+w(y,x)-d(\pout(x),x)\\
            & \geq  d(r,y)+w(y,\pout(x)).
            \end{align*}
            Combined with the triangle inequality $d(r,\pout(x))\leq d(r,y)+w(y,\pout(x))$, we obtain $d(r,\pout(x))= d(r,y)+w(y,\pout(x))$. Since
            \[
            W=d_y+w(y,x)-w(\pout(x),x)=d(r,x)-w(\pout(x),x)\geq \Delta_r,
            \]
            we have
            \[
            d(r,\pout(x)) \geq d(r,x)-d(\pout(x),x)\geq \Delta_r.
            \]
            
            Now we show that \cref{asmp:scout-Q} is also satisfied. Indeed,
            \begin{align*}
                w(Q') & = d(r,y)+w(y,\pout(x))+w(\pout(x),x)+w(Q[x,v])\\
                & = d(r,y)+w(y,x)-w(\pout(x),x)+w(\pout(x),x)+d(x,v)\\
                & = d(r,y)+w(y,x)+d(x,v) \\
                & = d(r,v).
            \end{align*}
            In addition, $Q'[x,v]$ contains $0$ hops since $Q[x,v]$ contains $0$ hops.
        \end{itemize}
        \item $i>k-2^{\ell}$ and $x$ is not a light out-Steiner vertex: We can return the path $\tilde r_t\to x\rightsquigarrow v$, where the $x\rightsquigarrow v$ segment is $Q[x,v]$. This is because when vertex $y$ is processed in $\scout(r,t)$, in Step~\ref{step:scout-bf-other}, for $W=d_y+w(y,x)+\Delta_r$, the algorithm adds an edge $(\tilde r_t,x)$ of weight $W$ if $W\geq 0$. Note that
        \[
            d(r,x)=d(r,y)+w(y,x)=d_y+w(y,x).
        \]
        Since $d(r,x)\geq -\Delta_r$, $W\geq 0$ must hold. Therefore, the edge $(\tilde r_t,x)$ of weight $W=d(r,x)+\Delta_r$ is added. The output path has weight
        \[
        w(\tilde r_t,x)+w(Q[x,v])=d(r,x)+\Delta_r+d(x,v)=d(r,v)+\Delta_r,
        \]
        and it contains $0$ hops.
    \end{itemize}
\end{itemize}

Now we show that this algorithm terminates. The algorithm $\findpathout(x,y,Q)$ makes a recursive call in four cases, denoted by Case R1--R4. We now consider each of them:
\begin{itemize}
    \item Case R1: This case occurs when the second parameter $y$ is an out-Steiner vertex. In the recursive call, $y$ is replaced by $\pout(y)$. By \cref{obs:steiner}~\ref{obs:pout}, $\pout(y)$ is not an out-Steiner vertex. We note that once $y$ becomes non-out-Steiner vertex, Case R1 cannot occur again. This is because in Cases R2 and R4, $y$ remains unchanged, and in Case R3, $y$ is replaced by $\pout(x)$, which is also not an out-Steiner vertex by \cref{obs:steiner}~\ref{obs:pout}. Therefore, Case R1 can occur at most once at the beginning, after which only Cases R2–R4 may occur.
    \item Case R4: $\findpathout(\pout(x),y,Q')$ is called. We claim that this call terminates without further recursion. If a recursive call is made within it, only Cases R2--R4 is possible, and in each of these cases the first parameter $x$ must be an out-Steiner vertex. However, by \cref{obs:steiner}~\ref{obs:pout}, $\pout(x)$ is not an out-Steiner vertex. Hence, no recursive call is possible, and the call terminates immediately.
    \item Case R2: $\findpathout(x',y,Q)$ is called. In this case, $x'$ is replaced by the successor of $x$ in the path $Q$.
    \item Case R3: $\findpathout(x,\pout(x),Q)$ is called. In this case, $y$ is replaced by $\pout(x)$. As shown in the proof of \cref{lem:scout-dist-valid}, if $y$ is a level-$\ell$ vertex, then $\pout(x)$ must be a level-$\ell'$ vertex for some $\ell'<\ell$. 
\end{itemize}
Therefore, the algorithm terminates after a finite number of recursive calls.

Consider the segment $P[r,v]$, let $x$ be the first vertex on this segment with $d(r,x)=d^1(r,x)\geq-\Delta_r$ and let $y$ be the predecessor of $x$ with $d(r,y)=d^1(r,y)<-\Delta_r$. By \cref{lem:compute-U}, we have $x\not\in \Uout_r,y\in \Uout_r$. We can then invoke $\findpathout(x,y,P[r,v])$, it is straightforward to verify that this is a valid input. The returned path is the desired path. This completes the proof of \cref{clm:shortcut-out}.
\end{document}